\numberwithin{equation}{section}
\newtheorem{thm}{Theorem}[section]
\newtheorem{cor}[thm]{Corollary}
\theoremstyle{definition}
\newtheorem{defn}{Definition}[section]
\newtheorem{rem}[defn]{Remark}
\DeclareMathOperator*{\argmin}{arg\,min}
\newcommand{\diag}{\operatorname{diag}}
\newcommand{\RR}{\mathbb{R}}
\newcommand{\diff}{\, \mathrm{d}}
\newcommand{\PP}{\mathbb{P}}
\newcommand{\EE}{\mathbb{E}}
\newcommand{\QQ}{\mathcal{Q}}
\newcommand{\eqd}{\overset{\mathrm{d}}{=}}
\newcommand{\dto}{\xrightarrow{d}} 
\newcommand{\pto}{\xrightarrow{p}} 
\newcommand{\Hessian}{\mathcal{H}}
\newcommand{\htheta}{\widehat{\theta}_{n,k}}
\newcommand{\Htheta}{\widehat{\Theta}_{n,k}}
\newcommand{\Normal}{\mathcal{N}}
\newcommand{\abs}[1]{\lvert{#1}\rvert}
\newcommand{\norm}[1]{\lVert {#1} \rVert}
\newcommand{\point}{\,\cdot\,}
\newcommand{\Mopt}{M_{\mathrm{opt}}}
\begin{document}


\vspace{-2cm}
\newcommand{\Title}{\vspace{-1.5cm}
A continuous updating weighted least squares estimator of tail dependence in high dimensions}

\title{\Title}
  \author{John H.J. Einmahl\\
    \small Tilburg University, Department of Econometrics \& OR and CentER\\
    \small P.O. Box 90153, 5000~LE Tilburg, The Netherlands.\\
    \small E-mail: j.h.j.einmahl@uvt.nl\\
    \and
    Anna Kiriliouk\thanks{The research by A. Kiriliouk was funded by a FRIA grant of the ``Fonds de la Recherche Scientifique -- FNRS'' (Belgium).}\hspace{.2cm}
    \qquad
    Johan Segers\thanks{J. Segers gratefully acknowledges funding by contract ``Projet d'Act\-ions de Re\-cher\-che Concert\'ees'' No.\ 12/17-045 of the ``Communaut\'e fran\c{c}aise de Belgique'' and by IAP research network Grant P7/06 of the Belgian government (Belgian Science Policy).}\hspace{.2cm}\\
    \small Universit\'{e} catholique de Louvain\\
    \small Institut de Statistique, Biostatistique et Sciences Actuarielles\\
    \small Voie du Roman Pays~20, B-1348 Louvain-la-Neuve, Belgium.\\
    \small E-mail: anna.kiriliouk@uclouvain.be, johan.segers@uclouvain.be}
\date{}
\maketitle

\vspace{-0.5cm}
\begin{abstract}
Likelihood-based procedures are a common way to estimate tail dependence parameters. They are not applicable, however, in non-differentiable models such as those arising from recent max-linear structural equation models. Moreover, they can be hard to compute in higher dimensions. An adaptive weighted least-squares procedure matching nonparametric estimates of the stable tail dependence function with the corresponding values of a parametrically specified proposal yields a novel minimum-distance estimator. The estimator is easy to calculate and applies to a wide range of sampling schemes and tail dependence models. In large samples, it is asymptotically normal with an explicit and estimable covariance matrix. The minimum distance obtained forms the basis of a goodness-of-fit statistic whose asymptotic distribution is chi-square. Extensive Monte Carlo simulations confirm the excellent finite-sample performance of the estimator and demonstrate that it is a strong competitor to currently available methods. The estimator is then applied to disentangle sources of tail dependence in European stock markets. 
\end{abstract}

\noindent%
{\it Keywords:}  
Brown--Resnick process;
extremal coefficient;
max-linear model;
multivariate extremes;
stable tail dependence function.
\vfill


\section{Introduction}

Extreme value analysis has been applied to measure and manage financial and actuarial risks, assess natural hazards stemming from heavy rainfall, wind storms, and earthquakes, and control processes in the food industry, internet traffic, aviation, and other branches of human activity. The extension from univariate to multivariate data gives rise to the concept of tail dependence. The latter can and will be represented here by the stable tail dependence function, denoted by $\ell$ \citep{huang1992, dreeshuang1998}, or tail dependence function for short. Estimating this tail dependence function is the subject of this paper. Fitting tail dependence models for spatial phenomena observed at finitely many sites constitutes an interesting special case.

In high(er) dimensions, the class of tail dependence functions becomes rather unwieldy, and therefore we follow the common route of modelling it parametrically. Note that this is far from imposing a fully parametric model on the data generating process. In particular, we only assume a domain-of-attraction condition at the copula level. Parametric models for tail dependence have their origins in \citet{gumbel1960}, and many models have since then been proposed, see, e.g., \citet{coles1991}, and more recently, \citet{kabluchko2009}.

Likelihood-based procedures are perhaps the most common way to estimate tail dependence parameters \citep{davison2012, wadsworth2014, huser2015}. Likelihood methods, however, are not applicable to models involving non-differentiable tail dependence functions. Such functions arise in max-linear models \citep{wang+s:2011}, in particular factor models \citep{einmahl2012} or structural equation models based on directed acyclic graphs \citep{gissibl2015}. Moreover, likelihoods can be hard to compute, especially in higher dimensions. This is why current likelihood methods are usually based on composite likelihoods, relying on pairs or triples of variables only, not exploiting information from higher-dimensional tuples.

It is the goal of this paper to estimate the true parameter vector $\theta_0$ of the tail dependence function $\ell$ and to assess the goodness-of-fit of the parametric model. The parameter estimator is obtained by comparing, at finitely many points in the domain of $\ell$, some initial, typically nonparametric, estimator of the latter with the corresponding values of the parametrically specified proposals, and retaining the parameter value yielding the best match. The method is generic in the sense that it applies to many parametric models, differentiable or not, and to many initial estimators, not only the usual empirical tail dependence function but also, for instance, bias-corrected versions thereof \citep{fougeres2015nr2, beirlant2015}. Further, the method avoids integration or differentiation of functions of many variables and can therefore handle joint dependence between many variables simultaneously, more easily than the likelihood methods mentioned earlier and the M-estimator approach in \citet{einmahl2016}. This feature is particularly interesting for inferring on higher-order interactions, going beyond mere distance-based dependence models such as those frequently employed for spatial extremes. Finally, in those situations where likelihood methods are applicable, the new estimator is a strong competitor.

The distance between the initial estimator and the parametric candidates is measured through weighted least squares. The weight matrix may depend on the unknown parameter $\theta$ and is hence estimated simultaneously. The construction of the estimator bears some similarity with the continuous updating generalized method of moments \citep{hansen1996}; the present estimator, however, is substantially different and does not use moments. Our flexible estimation procedure is related to that in \citet{einmahl2016}, but the continuous updating procedure is new in multivariate extreme value statistics. 

We show that the weighted least squares estimator for the tail dependence function is consistent and asymptotically normal, provided that the initial estimator enjoys these properties too, as is the case for the empirical tail dependence function and its recently proposed bias-corrected variations. The asymptotic covariance matrix is a function of the unknown parameter and can thus be estimated by a plug-in technique. We also provide novel goodness-of-fit tests for the parametric tail dependence model based on a comparison between the nonparametric and the parametric estimators. Under the null hypothesis that the tail dependence model is correctly specified, the test statistics are asymptotically chi-square distributed.

The paper is organized as follows. In Section~\ref{sec:estimator} we present the estimator, the goodness-of-fit statistic, and their asymptotic distributions. 
Section~\ref{sec:simulation} reports on a Monte Carlo simulation study involving a variety of models, as well as a finite-sample comparison of our estimator with estimators based on composite likelihoods. An application to European stock market data is presented in Section~\ref{sec:application}, where we try to disentangle sources of tail dependence stemming from the country of origin (Germany versus France) and the economic sector (chemicals versus insurance), fitting a structural equation model. All proofs are deferred to the appendix.

\section{Inference on tail dependence parameters}
\label{sec:estimator}

\subsection{Setup}

Let $X_i = (X_{i1},\ldots,X_{id})$, $i \in \{1,\ldots,n\}$, be random vectors in $\RR^d$ with a common cumulative distribution function $F$ and marginal cumulative distribution functions $F_1,\ldots,F_d$. The (stable) tail dependence function $\ell : [0, \infty)^d \to [0, \infty)$ is defined as
\begin{equation}\label{eq:ell2}
  \ell(x)
  := 
  \lim_{t \downarrow 0} t^{-1} \, 
  \PP[  1 - F_1(X_{11}) \leq t x_1 \text{ or } \ldots \text{ or } 1 - F_d (X_{1d}) \leq t x_d],
\end{equation}
for $x \in [0, \infty)^d$, provided the limit exists, as we will assume throughout. Existence of the limit is a necessary, but not sufficient, condition for $F$ to be in the max-domain of attraction of a $d$-variate Generalized Extreme Value distribution. Closely related to $\ell$ is the exponent measure function $V(z) = \ell(1/z_1, \ldots, 1/z_d)$, for $z \in (0, \infty]^d$. For more background on multivariate extreme value theory, see for instance \citet{beirlant2004} or \citet{dehaan2006}.

The function $\ell$ is convex, homogeneous of order one, and satisfies $\max(x_1, \ldots, x_d) \le \ell(x) \le x_1 + \cdots + x_d$ for all $x \in [0, \infty)^d$. If $d = 2$, these properties characterize the class of all $d$-variate tail dependence functions, but not if $d \ge 3$ \citep{ressel:2013}. For any dimension $d \ge 2$, the collection of $d$-variate tail dependence functions is infinite-dimensional. This poses challenges to inference on tail dependence, especially in higher dimensions.

The usual way of dealing with this problem consists of considering parametric models for $\ell$, a number of which are presented in Section~\ref{sec:simulation}. Henceforth we assume that $\ell$ belongs to a parametric family $\{\ell(\cdot \, ; \theta) : \theta \in \Theta \}$ with $\Theta \subset \RR^p$. Let $\theta_0$ denote the true parameter vector, that is, let $\theta_0$ denote the unique point in $\Theta$ such that $\ell(x) = \ell(x; \theta_0)$ for all $x \in [0,\infty)^d$. Our aim is to estimate the parameter $\theta_0$ and to test the goodness-of-fit of the model.

Extremal coefficients are popular summary measures of tail dependence \citep{dehaan1984, smith1990, schlather2003}. For non-empty $J \subset \{1,\ldots,d\}$, let $e_J \in \RR^d$  be defined by
\begin{equation}\label{ej}
  (e_J)_j 
  := 
  \begin{cases}
    1 & \text{ if $j \in J$,} \\
    0 & \text{ if $j \in \{ 1,\ldots,d \} \setminus J$.}
  \end{cases}
\end{equation}
The extremal coefficients are defined by 
\begin{equation}
\label{eq:extrCoeff}
  \ell_J 
 := 
  \ell(e_J)
  =
  \lim_{t \downarrow 0} t^{-1} \PP[ \max_{j \in J} F_j(X_{1j}) \ge 1 - t ]. 
\end{equation}
The extremal coefficients $\ell_J \in [1,\abs{J}]$ can be interpreted as assigning to each subset $J$ the effective number of tail independent variables among $(X_{1j})_{j \in J}$.

Comparing initial and parametric estimators of the extremal coefficients is a special case of the inference method that we propose. In fact, \citet{smith1990} already proposes an estimator based on pairwise ($\abs{J}=2$) extremal coefficients; see also \citet{dehaanpereira2006} and \citet{oesting2015}.

\subsection{Continuous updating weighted least squares estimator}
\label{sec:estimation}

Let $\widehat{\ell}_{n,k}$ denote an initial estimator of $\ell$ based on $X_1, \ldots, X_n$; some possibilities will be described in Subsection~\ref{sec:nonparametric}. The estimators $\widehat{\ell}_{n,k}$ that we will consider depend on an intermediate sequence $k = k_n \in (0,n]$, that is, 
\begin{equation}
\label{eq:intermed} 
  k \rightarrow \infty \quad \text{and } k/n \rightarrow 0, 
  \qquad \text{as $n \rightarrow \infty$.}
\end{equation}
The sequence $k$ will determine the tail fraction of the data that we will use for inference, see for instance Subsection~\ref{sec:nonparametric}.

Let $c_1,\ldots,c_q \in [0,\infty)^d$, with $c_m = (c_{m1},\ldots,c_{md})$ for $m = 1,\ldots,q$,  be $q$ points in which we will evaluate $\ell$ and $\widehat{\ell}_{n,k}$. Consider the $q \times 1$ column vectors
\begin{align}
\nonumber
  \widehat{L}_{n,k} 
  &:= 
  \bigl( \widehat{\ell}_{n,k} (c_m) \bigr)_{m = 1}^q, \\
\label{eq:L}
  L( \theta ) 
  &:= 
  \bigl( \ell (c_m ; \theta) \bigr)_{m=1}^q, \\
\label{eq:Dnk}
  D_{n,k} (\theta) 
  &:= 
  \widehat{L}_{n,k} - L(\theta),
\end{align}
where $\theta \in \Theta$. The points $c_1,\ldots,c_q$ need to be chosen in such a way that the map $L : \Theta \to \RR^q$ is one-to-one, i.e., $\theta$ is identifiable from the values of $\ell(c_1;\theta), \ldots, \ell(c_q;\theta)$. In particular, we will assume that $q \ge p$, where $p$ is the dimension of the parameter space $\Theta$. Since $\ell(c e_{\{j\}}) = c$ for any tail dependence function $\ell$, any $c \in [0, \infty)$ and any $j \in \{1, \ldots, d\}$, we will choose the points $c_m$ in such a way that each point has at least two positive coordinates.

For $\theta \in \Theta$, let $\Omega(\theta)$ be a symmetric, positive definite $q\times q$ matrix with ordered eigenvalues $0<\lambda_1(\theta)\leq \ldots \leq \lambda_q(\theta)$ and define
\begin{equation}
\label{eq:fnk}
  f_{n,k} (\theta) 
  := 
  \norm{ D_{n,k} (\theta) }^2_{\Omega(\theta)}
  := 
  D_{n,k}^T (\theta) \, \Omega (\theta) \, D_{n,k} (\theta).
\end{equation}
Our continuous updating weighted least squares estimator for $\theta_0$ is defined as
\begin{equation}\label{eq:estimator}
  \htheta 
  := 
  \argmin_{\theta \in \Theta} f_{n,k} (\theta) 
  = 
  \argmin_{\theta \in \Theta} \left\{ D_{n,k} (\theta)^T \, \Omega (\theta) \, D_{n,k} (\theta) \right\}.
\end{equation} 
The set of minimizers could be empty or could have more than one element. The present notation, suggesting that there exists a unique minimizer, will be justified in Theorem~\ref{thm:consistency}. If all points $c_m$ are chosen as $e_{J_m}$ in \eqref{ej} for some collection $J_1, \ldots, J_q$ of $q$ different subsets of $\{1, \ldots, d\}$, each subset having at least two elements, then we will refer to our estimator as an extremal coefficients estimator.

We will address the optimal choice of $\Omega (\theta)$ below. The simplest choice for $\Omega (\theta) $ is  the identity matrix $I_q$, yielding an ordinary least-squares estimator
\begin{equation}\label{armi}
\htheta := \argmin_{\theta \in \Theta} \sum_{m=1}^q
  \bigl(   \widehat{\ell}_{n,k} (c_m) - \ell (c_m; \theta) \bigr)^2.
\end{equation}
This special case of our estimator is similar to the estimator proposed in \citet{fougeres2015} in the more specific context of fitting max-stable distributions to a random sample from such a distribution. 


\subsection{Consistency and asymptotic normality}

If $L$ is differentiable at an interior point $\theta \in \Theta$, its total derivative will be denoted by $\dot{L} (\theta) \in \RR^{q \times p}$. Differentiability of the map $\theta \mapsto L(\theta)$ is a basic smoothness condition on the model; we do not assume differentiability of the map $x \mapsto \ell(x; \theta)$.

\begin{thm}[Existence, uniqueness and consistency]
\label{thm:consistency}
Let $\{ \ell(\point;\theta): \theta \in \Theta \}$, with $\Theta \subset \RR^p$, be a parametric family of $d$-variate stable tail dependence functions. Let $c_1,\ldots,c_q \in [0,\infty)^d$ be $q \ge p$ points
such that the map $L: \theta \mapsto (\ell (c_m ; \theta))_{m=1}^q$ is a homeomorphism from $\Theta$ to $L(\Theta)$. Let the true $d$-variate distribution function $F$ have  stable tail dependence function $\ell(\point;\theta_0)$ for some interior point $\theta_0 \in \Theta$.  
Assume that $L$ is twice continuously differentiable on a neighbourhood of $\theta_0$ and that $\dot{L} (\theta_0)$ is of full rank; also assume that $\Omega:\Theta \to \RR^{q \times q}$ is twice continuously differentiable on a neighbourhood of $\theta_0$. Assume $\lambda_1 := \inf_{\theta \in \Theta} \lambda_1(\theta)>0$. Finally assume, for $m=1, \ldots, q$, and for a positive sequence $k = k_n$ satisfying \eqref{eq:intermed},
\begin{equation} 
\label{ltol}
  \widehat{\ell}_{n,k}(c_m) \pto \ell (c_m ; \theta_0), 
  \qquad \text{as $n \to \infty$}.
\end{equation}
Then with probability tending to one, the minimizer $\htheta$ in \eqref{eq:estimator} exists and is unique. Moreover,
\begin{equation*}
  \htheta \pto \theta_0, 
  \qquad \text{as $n \rightarrow \infty$}.
\end{equation*}
\end{thm}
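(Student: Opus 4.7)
The plan is to set this up as a standard M-estimator consistency argument, but with the twist that $\Theta$ need not be compact, so the homeomorphism assumption on $L$ will be used to localize.

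First, I would write $e_n := \widehat{L}_{n,k} - L(\theta_0)$, which satisfies $e_n \pto 0$ componentwise by \eqref{ltol}, and introduce the deterministic limit criterion
\begin{equation*}
g(\theta) := \bigl( L(\theta_0) - L(\theta) \bigr)^T \Omega(\theta) \bigl( L(\theta_0) - L(\theta) \bigr).
\end{equation*}
Expanding $D_{n,k}(\theta) = e_n - (L(\theta) - L(\theta_0))$ gives $f_{n,k}(\theta) = g(\theta) + 2 e_n^T \Omega(\theta)(L(\theta_0) - L(\theta)) + e_n^T \Omega(\theta) e_n$. By continuity of $\Omega$ and $L$ on $\Theta$, the last two terms tend to zero in probability uniformly on any compact subset of $\Theta$, so $f_{n,k} \to g$ uniformly on compacta. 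Moreover, $g(\theta_0) = 0$, and $g(\theta) > 0$ for $\theta \neq \theta_0$ because $L$ is injective and $\Omega(\theta)$ has smallest eigenvalue at least $\lambda_1 > 0$.

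The key step, and what I view as the main obstacle, is ruling out minimizers outside a neighbourhood of $\theta_0$ without compactness. Choose $\delta > 0$ small enough that $\overline{B_\delta(\theta_0)} \subset \Theta$. Since $L$ is a homeomorphism onto $L(\Theta)$, the image $L(B_\delta(\theta_0))$ is open in $L(\Theta)$, so there exists $r > 0$ with $B_r(L(\theta_0)) \cap L(\Theta) \subset L(B_\delta(\theta_0))$. Equivalently, $\theta \in \Theta \setminus B_\delta(\theta_0)$ implies $\|L(\theta) - L(\theta_0)\| \geq r$. Combining with the uniform eigenvalue bound,
\begin{equation*}
\inf_{\theta \in \Theta \setminus B_\delta(\theta_0)} f_{n,k}(\theta) \geq \lambda_1 \bigl( r - \|e_n\| \bigr)^2,
\end{equation*}
while $f_{n,k}(\theta_0) \leq \|\Omega(\theta_0)\|_{\mathrm{op}} \|e_n\|^2 \pto 0$. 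Since $\|e_n\| \pto 0$, the event $\{f_{n,k}(\theta_0) < \lambda_1 r^2/4 \leq \inf_{\theta \notin B_\delta(\theta_0)} f_{n,k}(\theta)\}$ has probability tending to one, so every minimizer must lie in $\overline{B_\delta(\theta_0)}$; continuity of $f_{n,k}$ on this compact set then yields existence.

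For uniqueness I would use a local Hessian argument. A direct computation using twice differentiability of $L$ and $\Omega$ gives
\begin{equation*}
\nabla^2 f_{n,k}(\theta_0) = 2 \, \dot{L}(\theta_0)^T \Omega(\theta_0) \dot{L}(\theta_0) + R_n,
\end{equation*}
where $R_n$ depends linearly on $e_n$ and hence tends to zero in probability. The leading matrix is positive definite because $\dot{L}(\theta_0)$ is of full column rank $p$ and $\Omega(\theta_0)$ is positive definite. By continuity of $\nabla^2 f_{n,k}$ in $\theta$, combined with consistency, $\nabla^2 f_{n,k}$ is positive definite on a (possibly smaller) closed ball around $\theta_0$ with probability tending to one, so $f_{n,k}$ is strictly convex there and has a unique minimizer. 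Choosing $\delta$ as this smaller radius, the minimizer in $\overline{B_\delta(\theta_0)}$ is both unique and, by the previous paragraph, the global minimizer on $\Theta$. Since $\delta$ was arbitrary (small), this also gives $\htheta \pto \theta_0$.
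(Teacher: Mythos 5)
Your proposal is correct and follows essentially the same route as the paper: localization of all minimizers into a small ball around $\theta_0$ via the homeomorphism property of $L$ (turning $\|\theta-\theta_0\|>\delta$ into $\|L(\theta)-L(\theta_0)\|\ge r$) together with the uniform lower eigenvalue bound $\lambda_1$, followed by existence on the compact ball and uniqueness from positive definiteness of the Hessian near $\theta_0$. The only place where you are looser than the paper is the final Hessian step: the phrase ``by continuity of $\nabla^2 f_{n,k}$ in $\theta$'' needs to be backed by \emph{uniform} (over a fixed ball) convergence of the random Hessian to its deterministic limit plus a perturbation bound (the paper uses Weyl's theorem) so that the radius on which positive definiteness holds does not shrink with $n$; this is a presentational gap, not a substantive one, since the required uniformity follows from the same $e_n$-expansion you already wrote down.
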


\begin{thm}[Asymptotic normality]
\label{thm:an}
If in addition to the assumptions of Theorem~\ref{thm:consistency}, the  estimator $\widehat{\ell}_{n,k}$ satisfies
\begin{equation}
\label{eq:nonparcond}
  \sqrt{k} \, D_{n,k}(\theta_0)
  =
  \left(
    \sqrt{k} \left\{ \widehat{\ell}_{n,k} (c_m) - \ell (c_m; \theta_0) \right\}
  \right)_{m=1}^q 
  \dto 
  \Normal_q \bigl( 0, \Sigma (\theta_0) \bigr), 
  \qquad \text{as $n \to \infty$},
\end{equation}
for some $q \times q$ covariance matrix $\Sigma (\theta_0)$, then, as $n \rightarrow \infty$,
\begin{equation}\label{eq:theta:an}
  \sqrt{k} \, (\htheta - \theta_0) 
  = ( \dot{L}^T \Omega \dot{L} \bigr)^{-1} 
  \dot{L}^T \Omega \,
  \sqrt{k} \, D_{n,k}(\theta_0) + o_p(1) 
  \dto 
  \Normal_p \bigl( 0, M(\theta_0) \bigr),
\end{equation}
where the $p \times p$ covariance matrix $M(\theta_0)$ is defined by
\begin{equation*}
  M(\theta_0)
  :=
  (\dot{L}^T \Omega \dot{L})^{-1} \, 
  \dot{L}^T \Omega \,
  \Sigma(\theta_0) \,
  \Omega \dot{L} \,
  (\dot{L}^T \Omega \dot{L})^{-1},
\end{equation*}
and the matrices $\dot{L}$ and $\Omega$ are evaluated at $\theta_0$. 
\end{thm}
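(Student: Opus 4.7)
The plan is a standard M-estimator argument built on top of Theorem~\ref{thm:consistency}. Because $\theta_0$ is interior and $\htheta \pto \theta_0$, with probability tending to one $\htheta$ lies in the interior of a neighbourhood of $\theta_0$ on which both $L$ and $\Omega$ are twice continuously differentiable, so $\htheta$ satisfies the first-order condition $\nabla f_{n,k}(\htheta) = 0$. Differentiating $f_{n,k}(\theta) = D_{n,k}(\theta)^T \Omega(\theta) D_{n,k}(\theta)$ and using $\partial_{\theta_j} D_{n,k}(\theta) = - \partial_{\theta_j} L(\theta)$ gives, for $j = 1, \ldots, p$,
\[
  \tfrac{1}{2} \, \partial_{\theta_j} f_{n,k}(\theta)
  = - \bigl( \partial_{\theta_j} L(\theta) \bigr)^T \Omega(\theta) D_{n,k}(\theta)
    + \tfrac{1}{2} \, D_{n,k}(\theta)^T \bigl( \partial_{\theta_j} \Omega(\theta) \bigr) D_{n,k}(\theta).
\]
The first term is the main term; the second, quadratic in $D_{n,k}$, will turn out to be of smaller order.

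Next, I would Taylor-expand $L$ around $\theta_0$, writing $D_{n,k}(\htheta) = D_{n,k}(\theta_0) - \dot{L}(\theta_0)(\htheta - \theta_0) + R_n$ with $\|R_n\| = O_p(\|\htheta - \theta_0\|^2)$ by the $C^2$ assumption on $L$. Substituting this into the first-order condition evaluated at $\htheta$, rearranging, and using that $\dot{L}(\htheta) \pto \dot{L}(\theta_0)$ and $\Omega(\htheta) \pto \Omega(\theta_0)$ by consistency and continuity, yields schematically
\[
  \dot{L}(\theta_0)^T \Omega(\theta_0) \dot{L}(\theta_0) \, (\htheta - \theta_0)
  = \dot{L}(\theta_0)^T \Omega(\theta_0) \, D_{n,k}(\theta_0)
  + \text{remainder},
\]
where the remainder collects the Taylor error $R_n$, the quadratic $\partial\Omega$ term, and the $o_p(1)$ perturbations of $\dot L$ and $\Omega$. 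Since $\dot{L}(\theta_0)^T \Omega(\theta_0) \dot{L}(\theta_0)$ is invertible (full rank of $\dot L(\theta_0)$ together with positive-definiteness of $\Omega(\theta_0)$), the matrix $(\dot{L}^T\Omega\dot{L})^{-1}\dot{L}^T\Omega$ in \eqref{eq:theta:an} appears naturally.

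The delicate part, which I expect to be the main obstacle, is to show that the remainder is genuinely $o_p(1/\sqrt{k})$ rather than merely $o_p(1)$. This requires a bootstrap of rates: first use $D_{n,k}(\theta_0) = O_p(1/\sqrt{k})$ (from \eqref{eq:nonparcond}) together with the rearranged first-order condition to conclude $\htheta - \theta_0 = O_p(1/\sqrt{k})$; then the Taylor error $R_n$ is $O_p(1/k)$, and similarly $D_{n,k}(\htheta) = O_p(1/\sqrt{k})$, so the quadratic $\partial\Omega$ term is $O_p(1/k)$. Multiplying through by $\sqrt{k}$ kills all of these contributions. Some care is needed because $\Omega$ depends on $\theta$, but smoothness of $\Omega$ and the lower eigenvalue bound $\lambda_1 > 0$ keep everything under control on a neighbourhood of $\theta_0$.

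Finally, the asymptotic linear expansion in the first equality of \eqref{eq:theta:an} follows, and applying \eqref{eq:nonparcond} together with the continuous linear map $A \mapsto (\dot L^T\Omega\dot L)^{-1}\dot L^T\Omega A$ yields, by the continuous mapping theorem,
\[
  \sqrt{k}(\htheta - \theta_0) \dto \Normal_p\bigl(0, M(\theta_0)\bigr),
\]
with $M(\theta_0) = (\dot{L}^T \Omega \dot{L})^{-1} \dot{L}^T \Omega \, \Sigma(\theta_0) \, \Omega \dot{L} (\dot{L}^T \Omega \dot{L})^{-1}$ as claimed. This is a sandwich covariance typical of weighted least-squares-type M-estimators, and reduces to the optimal form when $\Omega(\theta_0) = \Sigma(\theta_0)^{-1}$.
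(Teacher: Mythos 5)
Your argument is correct and reaches the right expansion, but it controls the linearization error by a different mechanism than the paper does. Both proofs start from the first-order condition $\nabla f_{n,k}(\htheta)=0$ and the same gradient formula $\partial_{\theta_j} f_{n,k}(\theta) = -2(\partial_{\theta_j}L(\theta))^T\Omega(\theta)D_{n,k}(\theta) + D_{n,k}(\theta)^T(\partial_{\theta_j}\Omega(\theta))D_{n,k}(\theta)$, and both end with Slutsky and the sandwich covariance. The paper, however, never needs your rate bootstrap: it applies the mean value theorem to $t\mapsto\nabla f_{n,k}(\theta_0+t(\htheta-\theta_0))$, writing $0=\nabla f_{n,k}(\theta_0)^T+\Hessian_{n,k,\Omega}(\widetilde\theta_{n,k})(\htheta-\theta_0)$ exactly, and then invokes the uniform convergence of the Hessian on a ball around $\theta_0$ (equation \eqref{eq:unicon}, already established in the proof of Theorem~\ref{thm:consistency}) together with consistency to get $\Hessian_{n,k,\Omega}(\widetilde\theta_{n,k})\pto 2\dot L(\theta_0)^T\Omega(\theta_0)\dot L(\theta_0)$, a nonsingular limit. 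Solving the exact identity then gives $\sqrt{k}(\htheta-\theta_0)=-\Hessian_{n,k,\Omega}(\widetilde\theta_{n,k})^{-1}\sqrt{k}\,\nabla f_{n,k}(\theta_0)^T+o_p(1)$ directly, with the only remainder being the quadratic $\nabla\Omega$ term in $\sqrt{k}\,\nabla f_{n,k}(\theta_0)$, which is $\sqrt{k}\cdot O_p(1/k)=o_p(1)$ by \eqref{eq:nonparcond}; no preliminary $\sqrt{k}$-rate for $\htheta-\theta_0$ is required. Your route --- Taylor-expanding $L$ itself, isolating $\dot L(\theta_0)^T\Omega(\theta_0)\dot L(\theta_0)(\htheta-\theta_0)$, and then bootstrapping $\htheta-\theta_0=O_p(1/\sqrt{k})$ to kill $R_n=O_p(\norm{\htheta-\theta_0}^2)$ and the quadratic $\partial\Omega$ terms --- is the standard van der Vaart-style alternative and works here because consistency lets you absorb the $O_p(\norm{\htheta-\theta_0}^2)$ terms into the left-hand side. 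What the paper's approach buys is economy: the exact mean-value identity plus the already-proved uniform Hessian convergence replaces the two-stage rate argument you correctly identify as the delicate step. What your approach buys is that it only uses differentiability of $L$ and $\Omega$ pointwise along the segment, without needing the full Hessian of $f_{n,k}$ to converge uniformly. Either way the conclusion \eqref{eq:theta:an} and the covariance $M(\theta_0)$ follow.
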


Provided $\Sigma (\theta_0)$ is invertible, we can choose $\Omega$ in such a way that the asymptotic covariance matrix $M(\theta_0)$ is minimal, say $M_{\textnormal{opt}} (\theta_0)$, i.e., the difference $M(\theta_0) - M_{\textnormal{opt}}(\theta_0)$ is positive semi-definite. The minimum is attained at $\Omega(\theta_0) = \Sigma(\theta_0)^{-1}$ and the matrix $M(\theta_0)$ becomes simply
\begin{equation}
\label{eq:Mopt}
  \Mopt(\theta_0) =
  \bigl( \dot L(\theta_0)^T \, \Sigma(\theta_0)^{-1} \, \dot L(\theta_0) \bigr)^{-1},
\end{equation}
see for instance \citet[page 339]{abadir2005}. Now extend the covariance matrix $\Sigma (\theta_0)$ to the whole parameter space $\Theta$ by letting the map $\theta\mapsto \Sigma(\theta)$ be such that $\Sigma(\theta)$ is an invertible covariance matrix and $\Sigma^{-1} : \Theta \to \mathbb{R}^{q \times q}$ satisfies the assumptions on $\Omega$. 

\begin{cor}[Optimal weight matrix]
\label{cor:an:Mopt}
If the assumptions of Theorem~\ref{thm:an} are satisfied and $\htheta$ is the estimator based on the  weight matrix $\Omega (\theta) = \Sigma (\theta)^{-1}$, then, with $\Mopt$ as in \eqref{eq:Mopt}, we have
\begin{equation}
\label{eq:theta:an:opt} 
  \sqrt{k} ( \htheta - \theta_0 )
  \dto
  \Normal_p \bigl( 0, \Mopt(\theta_0) \bigr),
  \qquad \text{as $n \to \infty$.}
\end{equation}
\end{cor}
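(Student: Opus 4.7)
The plan is to deduce Corollary~2.3 directly from Theorem~\ref{thm:an} by substituting the specific choice $\Omega(\theta) = \Sigma(\theta)^{-1}$ into the general asymptotic covariance formula. Essentially, there are only two things to check: first, that the hypotheses of Theorem~\ref{thm:an} are actually met by this choice of weight matrix; second, that the algebraic simplification of $M(\theta_0)$ does yield $\Mopt(\theta_0)$ as displayed in \eqref{eq:Mopt}.

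First I would verify the hypotheses. The map $\theta \mapsto \Sigma(\theta)$ has been extended so that $\Sigma(\theta)^{-1}$ satisfies the same assumptions as $\Omega$ in Theorem~\ref{thm:consistency}: it is twice continuously differentiable on a neighbourhood of $\theta_0$, symmetric and positive definite (since $\Sigma(\theta)$ is a positive definite covariance matrix), and its smallest eigenvalue is bounded away from $0$ uniformly over $\Theta$ (which, since the eigenvalues of $\Sigma(\theta)^{-1}$ are the reciprocals of those of $\Sigma(\theta)$, amounts to a uniform upper bound on the largest eigenvalue of $\Sigma(\theta)$). All other conditions of Theorem~\ref{thm:an}, namely differentiability of $L$, full rank of $\dot{L}(\theta_0)$, and the joint asymptotic normality \eqref{eq:nonparcond} of the initial estimator, are carried over verbatim from the assumptions of Theorem~\ref{thm:an}.

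Then I would apply Theorem~\ref{thm:an} directly. With $\Omega = \Sigma(\theta_0)^{-1}$ in the formula
\begin{equation*}
  M(\theta_0)
  =
  (\dot L^T \Omega \dot L)^{-1} \, \dot L^T \Omega \, \Sigma(\theta_0) \, \Omega \dot L \, (\dot L^T \Omega \dot L)^{-1},
\end{equation*}
the middle factor collapses via $\Omega \Sigma(\theta_0) \Omega = \Sigma(\theta_0)^{-1} \Sigma(\theta_0) \Sigma(\theta_0)^{-1} = \Sigma(\theta_0)^{-1}$, so $M(\theta_0) = (\dot L^T \Sigma(\theta_0)^{-1} \dot L)^{-1} (\dot L^T \Sigma(\theta_0)^{-1} \dot L) (\dot L^T \Sigma(\theta_0)^{-1} \dot L)^{-1} = (\dot L^T \Sigma(\theta_0)^{-1} \dot L)^{-1}$, which is exactly $\Mopt(\theta_0)$ as defined in \eqref{eq:Mopt}. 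The conclusion \eqref{eq:theta:an:opt} then follows from \eqref{eq:theta:an}.

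The only genuine obstacle is checking admissibility of the extended weight map; this is not a deep step but has to be pointed out, since Theorem~\ref{thm:an} requires the eigenvalue lower bound globally on $\Theta$, whereas $\Sigma(\theta)^{-1}$ is initially only defined at $\theta_0$. The optimality claim (that $M(\theta_0) - \Mopt(\theta_0) \succeq 0$ for any admissible $\Omega$) is not part of the corollary proper and is settled by the reference to \citet[p.~339]{abadir2005}; I would not repeat that argument in the proof.
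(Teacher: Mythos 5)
Your proposal is correct and follows exactly the route the paper takes: the paper gives no separate proof of this corollary in the appendix, treating it as an immediate substitution of $\Omega(\theta) = \Sigma(\theta)^{-1}$ into Theorem~\ref{thm:an}, with the admissibility of the extended weight map handled by the sentence preceding the corollary (the extension of $\theta \mapsto \Sigma(\theta)$ is \emph{defined} so that $\Sigma^{-1}$ satisfies the assumptions on $\Omega$) and the matrix simplification deferred to the cited reference. Your explicit verification of the collapse $\Omega\Sigma\Omega = \Sigma^{-1}$ and your remark that the eigenvalue bound on $\Sigma(\theta)^{-1}$ over all of $\Theta$ is the only point requiring care are both accurate and consistent with the paper.
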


The asymptotic covariance matrices $M$  and $\Mopt$ in \eqref{eq:theta:an} and \eqref{eq:theta:an:opt}, respectively, depend on the unknown parameter vector $\theta_0$ through the matrices $\dot{L}(\theta)$, $\Omega(\theta)$ and $\Sigma(\theta)$ evaluated at $\theta = \theta_0$. If these matrices vary continuously with $\theta$, then it is a standard procedure to construct confidence regions and hypothesis tests, cf. \citet[Corollaries 4.3 and 4.4]{einmahl2012}.

\subsection{Goodness-of-fit testing}

It is of obvious importance to be able to test the goodness-of-fit of the parametric family of tail dependence functions that we intend to use. The basis for such a test is $D_{n,k} (\widehat{\theta}_{n,k})$, the difference vector between the initial and parametric estimators of $\ell( c_m)$ at the estimated value of the parameter. 

\begin{cor}
\label{cor:GoF}
Under the assumptions of Theorem~\ref{thm:an}, we have
\begin{align}
\nonumber
  \sqrt{k} \, D_{n,k}( \htheta )
  &=
  (I_q - P(\theta_0)) \, \sqrt{k} \, D_{n,k}( \theta_0 ) + o_p(1) \\
  &\dto
  \Normal_q 
  \bigl( 
    0, 
    (I_q - P(\theta_0)) \, \Sigma(\theta_0) \, (I_q - P(\theta_0))^T 
  \bigr),
  \qquad \text{as $n \to \infty$},
\label{eq:Dnk:GoF}
\end{align}
where $P := \dot{L} (\dot{L}^T \Omega \dot{L})^{-1} \, \dot{L}^T \Omega$ has rank $p$ and $I_q - P$ has rank $q-p$.
\end{cor}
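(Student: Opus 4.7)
The proof is a short application of Theorem~\ref{thm:an}: I would linearize $L$ around $\theta_0$ inside $D_{n,k}(\htheta)$ and substitute the asymptotic expansion of $\sqrt{k}(\htheta - \theta_0)$ to read off the projection $P(\theta_0)$. Concretely, since $D_{n,k}(\theta) = \widehat{L}_{n,k} - L(\theta)$, I write
\[
  D_{n,k}(\htheta)
  = D_{n,k}(\theta_0) - \bigl( L(\htheta) - L(\theta_0) \bigr).
\]
By Theorem~\ref{thm:consistency} the estimator $\htheta$ lies with probability tending to one in a neighborhood of $\theta_0$ on which $L$ is twice continuously differentiable, so a Taylor expansion gives $L(\htheta) - L(\theta_0) = \dot{L}(\theta_0)(\htheta - \theta_0) + R_n$ with $\|R_n\| = O_p(\|\htheta - \theta_0\|^2)$. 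From Theorem~\ref{thm:an} we already have $\sqrt{k}(\htheta - \theta_0) = O_p(1)$, which yields $\sqrt{k}\,R_n = O_p(\|\htheta - \theta_0\|) = o_p(1)$. This is the only bit of bookkeeping, and is the closest thing to an obstacle in the argument.

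Next I insert the linear expansion \eqref{eq:theta:an} of Theorem~\ref{thm:an}:
\[
  \dot{L}(\theta_0)\,\sqrt{k}(\htheta - \theta_0)
  =
  \dot{L}\,(\dot{L}^T \Omega \dot{L})^{-1}\dot{L}^T \Omega \,\sqrt{k}\,D_{n,k}(\theta_0) + o_p(1)
  =
  P(\theta_0)\,\sqrt{k}\,D_{n,k}(\theta_0) + o_p(1),
\]
where all derivatives are evaluated at $\theta_0$. Combining with the previous display gives exactly the stochastic expansion $\sqrt{k}\,D_{n,k}(\htheta) = (I_q - P(\theta_0))\,\sqrt{k}\,D_{n,k}(\theta_0) + o_p(1)$ stated in \eqref{eq:Dnk:GoF}. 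The weak convergence part then follows by applying the continuous linear map $v \mapsto (I_q - P(\theta_0))v$ to the Gaussian limit in \eqref{eq:nonparcond}, using Slutsky's lemma to absorb the $o_p(1)$ term; the covariance of the limit is the sandwich $(I_q - P(\theta_0))\,\Sigma(\theta_0)\,(I_q - P(\theta_0))^T$.

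For the rank claim I note that $P$ is idempotent: a direct multiplication shows
\[
  P^2
  = \dot{L}(\dot{L}^T \Omega \dot{L})^{-1}\underbrace{\dot{L}^T \Omega \dot{L}\,(\dot{L}^T \Omega \dot{L})^{-1}}_{=I_p}\dot{L}^T \Omega
  = P,
\]
and $P \dot{L} = \dot{L}$, so the range of $P$ equals the column space of $\dot{L}$, which has dimension $p$ by the full-rank assumption in Theorem~\ref{thm:consistency}. Hence $\operatorname{rank}(P) = p$, and by idempotency $\operatorname{rank}(I_q - P) = q - \operatorname{rank}(P) = q - p$, completing the corollary.
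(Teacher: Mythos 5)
Your proposal is correct and follows essentially the same route as the paper: the same decomposition $D_{n,k}(\htheta) = D_{n,k}(\theta_0) - (L(\htheta) - L(\theta_0))$, linearization of $L$ at $\theta_0$ (the paper invokes the delta method where you carry the Taylor remainder explicitly), substitution of the expansion \eqref{eq:theta:an}, Slutsky's lemma for the limit, and idempotency of $P$ for the rank claims. No gaps.
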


The easiest case in which \eqref{eq:Dnk:GoF} can be exploited is when $\Sigma(\theta)$ is invertible and $\Omega (\theta) = \Sigma(\theta)^{-1}$. Then it suffices to consider the minimum attained by the criterion function $f_{n,k}$ in \eqref{eq:fnk}, i.e., the test statistic is just $f_{n,k} (\htheta) = \min_{\theta \in \Theta} f_{n,k} (\theta)$. Observe that it is important here that we allow $\Omega$ to depend on $\theta$.

\begin{cor}
\label{cor:GoF:Mopt}
Let $q > p$. If the assumptions of Corollary~\ref{cor:an:Mopt} are satisfied, in particular if $\Omega(\theta) = \Sigma(\theta)^{-1}$,
then
\begin{equation*}
  k \, f_{n,k} (\htheta) \dto \chi_{q-p}^2, 
  \qquad \text{as $n \to \infty$.}
\end{equation*}
\end{cor}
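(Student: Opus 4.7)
The plan is to combine Corollary~\ref{cor:GoF} with the continuous mapping theorem and then identify the limiting distribution as a chi-square by exploiting the projection structure that arises precisely when $\Omega(\theta)=\Sigma(\theta)^{-1}$.

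First, I would write
\begin{equation*}
  k \, f_{n,k}(\htheta)
  =
  \bigl( \sqrt{k}\,D_{n,k}(\htheta) \bigr)^{T} \, \Omega(\htheta) \, \bigl( \sqrt{k}\,D_{n,k}(\htheta) \bigr).
\end{equation*}
By Theorem~\ref{thm:consistency}, $\htheta \pto \theta_0$, and by the assumed continuity of $\Omega$, we have $\Omega(\htheta) \pto \Omega(\theta_0) = \Sigma(\theta_0)^{-1}$. By Corollary~\ref{cor:GoF}, $\sqrt{k}\,D_{n,k}(\htheta) \dto Z$ where $Z \sim \Normal_q\bigl(0, (I_q - P)\Sigma(I_q - P)^{T}\bigr)$ with $P := P(\theta_0)$ and $\Sigma := \Sigma(\theta_0)$. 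Slutsky's lemma and the continuous mapping theorem then yield
\begin{equation*}
  k \, f_{n,k}(\htheta) \dto Z^{T} \Sigma^{-1} Z.
\end{equation*}

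The main task is to show $Z^{T} \Sigma^{-1} Z \sim \chi^{2}_{q-p}$. Writing $G := \dot L(\theta_0)$, with $\Omega = \Sigma^{-1}$ the projector reads $P = G(G^{T}\Sigma^{-1}G)^{-1}G^{T}\Sigma^{-1}$. Two algebraic facts are crucial: $P$ is idempotent ($P^{2} = P$, by direct multiplication), and $\Sigma^{-1}P$ is symmetric, so $P^{T}\Sigma^{-1} = \Sigma^{-1}P$. A short computation then gives
\begin{equation*}
  (I_q - P)^{T}\, \Sigma^{-1} \,(I_q - P)
  = \Sigma^{-1} - \Sigma^{-1}P - P^{T}\Sigma^{-1} + P^{T}\Sigma^{-1}P
  = \Sigma^{-1}(I_q - P),
\end{equation*}
using $P^{T}\Sigma^{-1}P = \Sigma^{-1}P^{2} = \Sigma^{-1}P$. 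Representing $Z = (I_q - P)Y$ with $Y \sim \Normal_q(0,\Sigma)$, we obtain
\begin{equation*}
  Z^{T}\Sigma^{-1}Z = Y^{T}\,\Sigma^{-1}(I_q - P)\,Y.
\end{equation*}

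To conclude via the standard result on quadratic forms in Gaussian vectors, I would check that the matrix $B\Sigma$ with $B := (I_q - P)^{T}\Sigma^{-1}(I_q - P) = \Sigma^{-1}(I_q - P)$ is idempotent of trace $q-p$. Idempotence follows from
\begin{equation*}
  (B\Sigma)^{2} = \Sigma^{-1}(I_q - P)\Sigma \,\Sigma^{-1}(I_q - P)\Sigma = \Sigma^{-1}(I_q - P)^{2}\Sigma = \Sigma^{-1}(I_q - P)\Sigma = B\Sigma,
\end{equation*}
while the trace is $\operatorname{tr}(B\Sigma) = \operatorname{tr}(I_q - P) = q - \operatorname{tr}(P) = q - \operatorname{tr}\bigl((G^{T}\Sigma^{-1}G)^{-1}G^{T}\Sigma^{-1}G\bigr) = q-p$. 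Hence $Y^{T}BY \sim \chi^{2}_{q-p}$, which yields the claim. The only delicate step is the last one, where some care is required because $Z$ is degenerate (its covariance has rank $q-p$), so I would lean on the invariance of the quadratic-form test under Slutsky's lemma rather than trying to invert singular covariance matrices.
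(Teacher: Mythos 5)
Your proof is correct and follows essentially the same route as the paper: pass to the limit of the quadratic form via Corollary~\ref{cor:GoF}, Slutsky and the continuous mapping theorem, then identify the limit as $\chi^2_{q-p}$ through idempotence and a trace count equal to $q-p$. The only (cosmetic) difference is in the final identification, where the paper whitens $Z=\Sigma^{1/2}Y$ and exhibits the explicit orthogonal projection $I_q-B$ with $B=\Sigma^{-1/2}\dot L(\dot L^T\Sigma^{-1}\dot L)^{-1}\dot L^T\Sigma^{-1/2}$, whereas you represent $Z=(I_q-P)Y$ with $Y\sim\Normal_q(0,\Sigma)$ and invoke the standard criterion that $Y^TBY\sim\chi^2_r$ when $B\Sigma$ is idempotent with trace $r$ --- both are valid and your algebraic simplification $(I_q-P)^T\Sigma^{-1}(I_q-P)=\Sigma^{-1}(I_q-P)$ checks out.
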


If $\Omega(\theta)$ is different from $\Sigma(\theta)^{-1}$, for instance when $\Sigma(\theta)$ is not invertible, a goodness-of-fit test can still be based upon \eqref{eq:Dnk:GoF} by considering the spectral decomposition of the limiting covariance matrix. For convenience, we suppress the dependence on $\theta$. Let
\begin{equation*}
  (I_q - P) \, \Sigma \, (I_q - P)^T = V D V^T
\end{equation*}
where $V = (v_1, \ldots, v_q)$ is an orthogonal $q \times q$ matrix, $V^T V = I_q$, the columns of which are orthonormal eigenvectors, and $D$ is diagonal, $D = \diag(\nu_1, \ldots, \nu_q)$, with $\nu_1 \ge \ldots \ge \nu_q = 0$ the corresponding eigenvalues, at least $p$ of which are zero, the rank of $I_q - P$ being $q-p$. Let $s \in \{1, \ldots, q-p\}$ be such that $\nu_s > 0$ and consider the $q \times q$ matrix
\[
  A := V_s D_s^{-1} V_s^T
\]
where $D_s = \diag( \nu_1, \ldots, \nu_s )$ is an $s \times s$ diagonal matrix and where $V_s = (v_1, \ldots, v_s)$ is a $q \times s$ matrix having the first $s$ eigenvectors as its columns. 

\begin{cor}
\label{cor:GoF:eigen}
If the assumptions of Theorem~\ref{thm:an} hold and if $s \in \{1, \ldots, q-p\}$ is such that, in a neighbourhood of $\theta_0$, $\nu_s(\theta) > 0$ and the matrix $A(\theta)$ depends continuously on $\theta$, then
\[
  k \, D_{n,k}( \htheta )^T \, A( \htheta ) \, D_{n,k}( \htheta )
  \dto \chi_s^2,
  \qquad \text{as $n \to \infty$}.
\]
\end{cor}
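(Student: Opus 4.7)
The plan is to deduce the result from Corollary~\ref{cor:GoF} via the continuous mapping theorem, combined with the consistency of $\htheta$ from Theorem~\ref{thm:consistency} and the assumed continuity of $A$ near $\theta_0$. Corollary~\ref{cor:GoF} already gives $\sqrt{k}\, D_{n,k}(\htheta) \dto Z$, where $Z \sim \Normal_q(0, \Gamma)$ with $\Gamma := (I_q - P(\theta_0))\, \Sigma(\theta_0)\, (I_q - P(\theta_0))^T$. Writing $\Gamma = V D V^T$ for the spectral decomposition at $\theta_0$, the task reduces to (i) identifying the distribution of the quadratic form $Z^T A(\theta_0) Z$ and (ii) replacing $\theta_0$ by $\htheta$ in the sandwiching matrix.

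For step (i), I would use the representation $Z \eqd V D^{1/2} U$ with $U \sim \Normal_q(0, I_q)$, which is valid because $(V D^{1/2})(V D^{1/2})^T = V D V^T = \Gamma$. Since $V_s$ consists of the first $s$ columns of the orthogonal matrix $V$, the block identity $V_s^T V = [I_s,\, 0_{s \times (q-s)}]$ holds, so
\[
V_s^T Z \eqd V_s^T V D^{1/2} U = D_s^{1/2} U_s,
\qquad U_s := (U_1, \ldots, U_s)^T.
\]
Substituting into the quadratic form yields
\[
Z^T A(\theta_0) Z
= (V_s^T Z)^T D_s^{-1} (V_s^T Z)
\eqd U_s^T D_s^{1/2} D_s^{-1} D_s^{1/2} U_s
= U_s^T U_s
\sim \chi_s^2.
\]

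For step (ii), consistency $\htheta \pto \theta_0$ (from Theorem~\ref{thm:consistency}, whose assumptions are inherited by Theorem~\ref{thm:an}) combined with the assumed continuity of $\theta \mapsto A(\theta)$ in a neighbourhood of $\theta_0$ gives $A(\htheta) \pto A(\theta_0)$ by the continuous mapping theorem. Joint convergence of the pair $(\sqrt{k}\, D_{n,k}(\htheta),\, A(\htheta))$ to $(Z,\, A(\theta_0))$ then follows from Slutsky's lemma, and a final application of continuous mapping to the bilinear form $(v, B) \mapsto v^T B v$ yields
\[
k\, D_{n,k}(\htheta)^T A(\htheta) D_{n,k}(\htheta)
\dto Z^T A(\theta_0) Z
\sim \chi_s^2.
\]

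The main subtlety lies in the fact that $\Gamma$ is singular (rank at most $q-p$), so $A(\theta_0)$ acts as a partial pseudo-inverse that isolates the top $s$ principal components; only the orthogonality of $V$ makes the reduction to a sum of $s$ independent standard normal squares automatic. By assuming continuity of $A$ itself rather than of the individual eigenvectors, the hypothesis bypasses the well-known issue that eigenvectors of symmetric matrices need not be continuous functions of the matrix when eigenvalues cross; no further regularity on the spectral decomposition is therefore required for the argument above.
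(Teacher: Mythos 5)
Your proposal is correct and follows essentially the same route as the paper: both start from the limit law of $\sqrt{k}\,D_{n,k}(\htheta)$ in Corollary~\ref{cor:GoF}, use consistency of $\htheta$ and continuity of $A$ together with Slutsky and the continuous mapping theorem, and then identify the limiting quadratic form via the spectral decomposition $VDV^T$ and the orthogonality of $V$, reducing it to a sum of $s$ squared independent standard normals. The only cosmetic difference is that the paper writes the limit as $(I_q-P)Z$ with $Z\sim\Normal_q(0,\Sigma(\theta_0))$ and represents that vector as $VD^{1/2}Y$, whereas you work directly with $Z\sim\Normal_q(0,\Gamma)$; these are the same computation.
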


\begin{rem}\label{rem:spq}
If $\Sigma(\theta)$ is invertible for all $\theta$, then we can set $s = q-p$ and $\Omega(\theta) = \Sigma(\theta)^{-1}$. The difference between the two test statistics in Corollaries~\ref{cor:GoF:Mopt} and~\ref{cor:GoF:eigen} then converges to zero in probability, i.e., the two tests are asymptotically equivalent under the null hypothesis.
\end{rem}

\subsection{Choice of the initial estimator}
\label{sec:nonparametric}

Our estimator in \eqref{eq:estimator} is flexible enough to allow for various initial estimators, perhaps based on exceedances over high thresholds or rather on vectors of componentwise block maxima extracted from a multivariate time series \citep{bucher2014}. Here we will focus on the former case, and more specifically on the empirical tail dependence function and a variant thereof.

For simplicity, we assume that the random vectors $X_i$, $i \in \{1,\ldots,n\},$ are not only identically distributed but also independent, so that they are a random sample from $F$. Let $R_{ij}^n$ denote the rank of $X_{ij}$ among $X_{1j}, \ldots , X_{nj}$ for $j=1,\ldots,d$. For convenience, assume that $F$ is continuous.

\subsubsection*{Empirical stable tail dependence function}

A natural estimator of $\ell(x)$ is obtained by replacing $F$ and $F_1, \ldots, F_d$ in~\eqref{eq:ell2} by their empirical counterparts and replacing $t$ by $k/n$, yielding
\begin{equation}
\label{eq:ellclassic:raw}
  \widetilde{\ell}'_{n,k} (x) 
  := 
  \frac{1}{k} \sum_{i=1}^n 
  \mathbbm{1} 
  \left\{
    R_{i1}^n > n + 1 - kx_1  \text{ or } \ldots \text{ or }  R_{id}^n > n + 1 - kx_d
  \right\}.
\end{equation}
This estimator, the empirical stable tail dependence function,  was introduced for $d=2$  in \citet{huang1992} and studied further in \citet{dreeshuang1998}. A slight modification of it allows for better finite-sample properties,
\begin{equation}
\label{eq:ellclassic}
  \widetilde{\ell}_{n,k} (x) 
  := 
  \frac{1}{k} \sum_{i=1}^n 
  \mathbbm{1} 
  \left\{
    R_{i1}^n > n + 1/2 - kx_1  \text{ or } \ldots \text{ or }  R_{id}^n > n + 1/2 - kx_d
  \right\}.
\end{equation}

By \citet[Theorem~4.6]{einmahl2012}, this estimator satisfies \eqref{eq:nonparcond} under conditions controlling the rate of convergence in \eqref{eq:ell2} and the growth rate of the intermediate sequence $k = k_n$. The first-order partial derivatives $\dot{\ell}_j(x; \theta_0)$ of $x \mapsto \ell(x; \theta_0)$ are assumed to exist and to be continuous in neighbourhoods of the points $c_m$ for which $c_{mj} > 0$.

In this case, the entries of the matrix $\Sigma(\theta)$ in \eqref{eq:nonparcond}, for $\theta$ in the interior of $\Theta$, are, for $i,j \in \{1,\ldots,q\}$, given by
\begin{equation}
\label{eq:Sigma:classic}
  \Sigma_{i,j} (\theta)  = \EE [ B (c_i) \, B(c_j) ],
\end{equation}
with
$
  B (c_i)
  := 
  W_\ell (c_i) 
  - 
  \sum_{j=1}^d \dot{\ell}_j (c_i) W_{\ell} (c_{ij} \, e_j)
$
and with $(W_\ell(x) : x \in [0, \infty)^d)$ a zero-mean Gaussian process with covariance function
$
  \EE [W_{\ell} (x) \, W_{\ell} (y) ] 
  =
  \ell(x) + \ell(y) - \ell(x \vee y)
$,
the maximum being taken componentwise. For points $c_i$ of the form $e_J$ in \eqref{ej}, the expectation in \eqref{eq:Sigma:classic} can be calculated as follows: for non-empty subsets $J$ and $K$ of $\{1, \ldots, d\}$,
\begin{align*}
  \EE [ B (e_J) \, B(e_K) ] 
  =  \ell_J + \ell_K - \ell_{J \cup K} & - \sum_{j \in J} \dot{\ell}_{j,J} \, (1 + \ell_K - \ell_{\{j\} \cup K}) \\
  & - \sum_{k \in K} \dot{\ell}_{k,K} \, (\ell_J + 1 - \ell_{J \cup \{k\}})
  + \sum_{j \in J} \sum_{k \in K} \dot{\ell}_{j,J} \dot{\ell}_{k,K} \, (2 - \ell_{\{j,k\}}),
\end{align*}
where $\ell_J := \ell( e_J; \theta_0 )$ and $\dot{\ell}_{j,J} := \dot{\ell}_j( e_J;\theta_0)$.

\subsubsection*{Bias-corrected  estimator}

A drawback of $\widetilde{\ell}_{n,k}$ in \eqref{eq:ellclassic} is its possibly quickly growing bias as $k$ increases.
Recently, two bias-corrected estimators  have been proposed. We consider here the  kernel-type estimator of  \citet{beirlant2015}, which is  partly based on (the one in)  \citet{fougeres2015nr2}.

Consider first a rescaled version of $\widetilde{\ell}'_{n,k}$ in \eqref{eq:ellclassic:raw}, defined as $\widetilde{\ell}_{n,k,a} (x) := a^{-1} \widetilde{\ell}'_{n,k} (a x)$ for $a > 0$. Then define the weighted average
\begin{equation}
\label{eq:ell:beirlant}
  \breve{\ell}_{n,k} (x) 
  := \frac{1}{k} \sum_{j=1}^k K(a_j)\, \widetilde{\ell}_{n,k,a_j} (x), 
  \qquad a_j := \frac{j}{k+1}, \,\, j \in \{1,\ldots,k\},
\end{equation}
where $K$ is a kernel function, i.e., a positive function on $(0,1)$ such that $\int_0^1 K(u) \diff u = 1$.

In addition to \eqref{eq:ell2}, we assume there exist a positive function $\alpha$ on $(0, \infty)$ tending to $0$ as $t \downarrow 0$ and a non-zero function $M$ on $[0, \infty)^d$ such that for all $x \in [0,\infty)^d$,
\begin{equation}
\label{eq:alpha}
\lim_{t \downarrow 0} \frac{1}{\alpha(t)} [t^{-1}\PP \left\{ 1 - F_1 (X_{11}) \leq tx_1 \text{ or } \ldots \text{ or } 1 - F_d (X_{1d})  \leq t x_d \right\}-\ell(x)] = M(x).
\end{equation}
Moreover, we assume a third-order condition on $\ell$ \citep[equation (3)]{beirlant2015}. In \citet[Theorem 1]{beirlant2015} the asymptotic distribution of $\breve{\ell}_{n,k}$ in \eqref{eq:ell:beirlant} is derived under these three assumptions and for intermediate sequences $k = k_n$ growing faster than the ones considered above. A non-zero asymptotic bias term arises and the idea is to estimate and remove it, thereby obtaining a possibly more accurate estimator.

In order to achieve this bias reduction, the rate function, $\alpha$, and its index of regular variation, $\beta$, need to be estimated. Consider another intermediate sequence $k_1 = k_{1,n}$ such that $k /k_1\to 0$. The bias-corrected estimator  is then defined as
\begin{equation*}
\overline{\ell}_{n,k,k_1} (x) := \frac{\breve{\ell}_{n,k} (x) - (k_1/k)^{\widehat{\beta}_{k_1} (x)} \widehat{\alpha}_{k_1} (x) \frac{1}{k} \sum_{j=1}^k K(a_j) a_j^{-\widehat{\beta}_{k_1} (X)}}{\frac{1}{k} \sum_{j=1}^k K(a_j)},
\end{equation*}
where $\widehat{\alpha}_{k_1}$ and $\widehat{\beta}_{k_1}$ are the estimators of $\alpha$ and $\beta$ defined in \citet{beirlant2015}.
Under the mentioned conditions, asymptotic normality as in \eqref{eq:nonparcond} holds, where the limiting random vector is equal in distribution to $\int_0^1 K(u) u^{-1/2}\diff u$ times the one corresponding to  $\widetilde{\ell}_{n,k}$.
Here, the growth rate of $k$ here can be taken faster than  when using  $\widetilde{\ell}_{n,k}$.

A simple choice for $K$ is a power kernel, i.e, $K(t) = (\tau + 1) t^\tau$ for $t \in (0,1)$ and $\tau > -1/2$. Then $\int_0^1 K(u) \, u^{-1/2} \, \diff u = (2+\tau)/(1 + 2 \tau)$. Note that this factor tends to 1 if $\tau \to \infty$. In practice, we take $\tau = 5$ as recommended in \citet{beirlant2015}.

\section{Simulation studies}
\label{sec:simulation}
We conduct simulation studies for data in the max-domain of attraction of the logistic model, the Brown--Resnick process and the max-linear model. For each model, we report the empirical bias, standard deviation, and root mean squared error (RMSE) of our estimators. We also study the finite-sample performance of the goodness-of-fit statistic of Corollary~\ref{cor:GoF:Mopt}. All simulations were done in the \textsf{R} statistical software environment \citep{R}.

\subsection{Logistic model: comparison with likelihood methods}
The $d$-dimensional logistic model has stable tail dependence function
\begin{equation*}
  \ell(x_1,\ldots,x_d ; \theta) 
  = \bigl( x_1^{1/\theta} + \cdots + x_d^{1/\theta} \bigr)^\theta, 
  \qquad \theta \in [0,1].
\end{equation*}
The domain-of-attraction condition \eqref{eq:ell2} holds for instance if $F$ has continuous margins and its copula is Archimedean with generator $\phi(t) = 1/ (t^\theta +  1)$, also known as the outer power Clayton copula \citep{hofert2015}.

In \citet{huser2015}, a comprehensive comparison of likelihood estimators for $\theta$ has been performed based on random samples from this copula. We compare those results to our extremal coefficients estimator, i.e., the weighted least squares estimator based on points $c_m$ of the form $e_J$, with $J$ ranging in the collection
\begin{equation}
\label{eq:QQ:a}
  \QQ_{a} := \bigl\{J \subset \{1,\ldots,d\} : \abs{J} = a \bigr\}
\end{equation}
for $a \in \{2, 3\}$. Moreover, we let $\Omega (\theta)$ be the identity matrix, since by exchangeability of the model, a weighting procedure can bring no improvements. 

Following \citet[Section 4.2]{huser2015}, we simulated $10 \, 000$ random samples of size $n = 10 \, 000$ from the outer power Clayton copula. For the likelihood-based estimators, the margins are standardized to the unit Pareto scale via the rank transformation
\begin{equation*}
  X_{ij}^* := \frac{n}{n + 1/2 - R_{ij}^n}, 
  \qquad i \in \{ 1,\ldots,n \},\ j \in \{ 1,\ldots,d \}.
\end{equation*}
Again as in \citet[Section 4.2]{huser2015}, we take dimension $d \in \{2,5,10,15,20,25,30\}$ and parameter $\theta \in \{0.3,0.6,0.9,0.95\}$. Note that in the likelihood setting, this is a very demanding experiment, and three of the ten likelihood-based estimators considered in \citet{huser2015} are only computed for $d \in \{2,5,10\}$. In \citet{huser2015}, threshold probabilities are set to $0.98$, corresponding to $k = 200$ in our setup. 

\begin{figure}[p]
\centering
\subfloat{\includegraphics[width=0.3\textwidth]{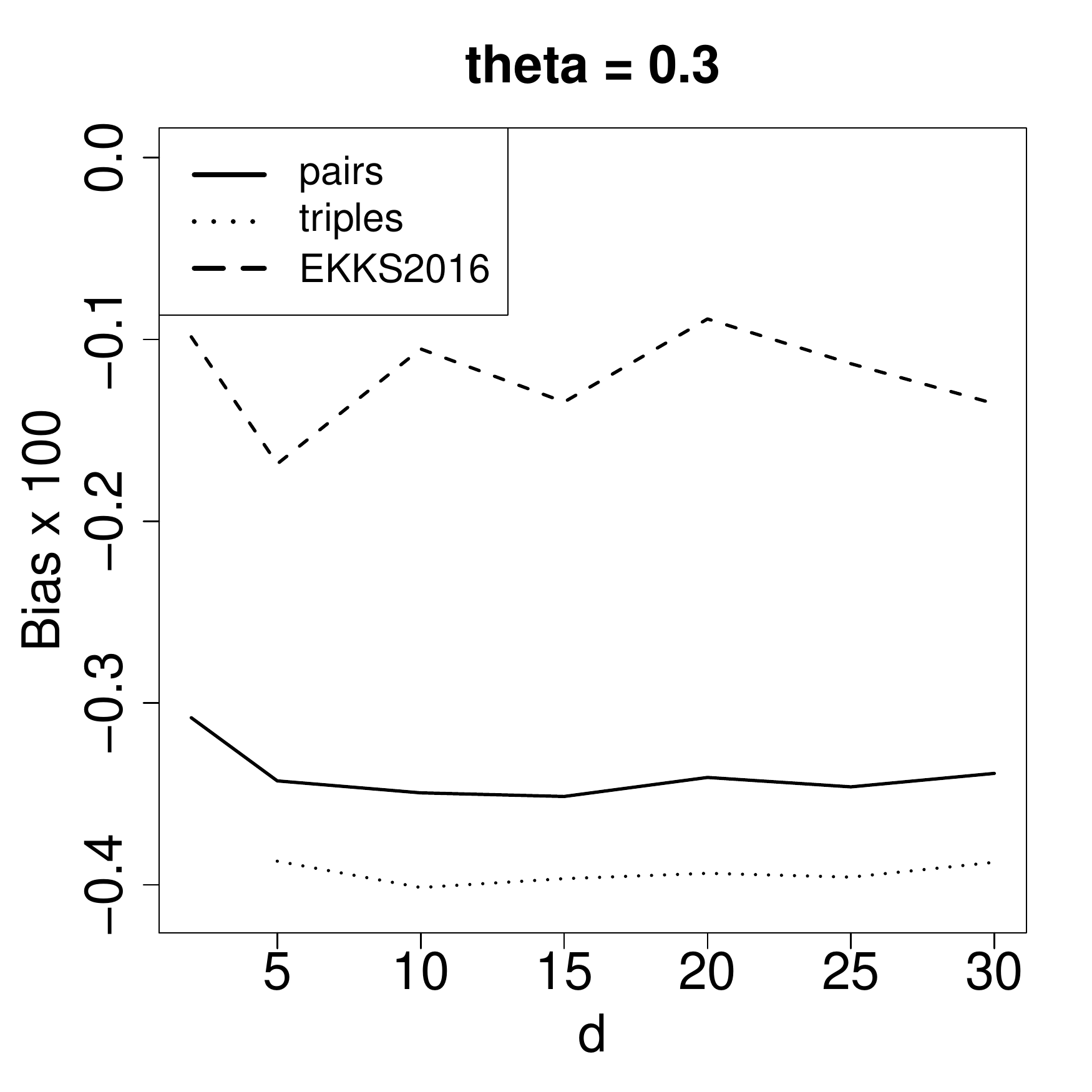}}
\subfloat{\includegraphics[width=0.3\textwidth]{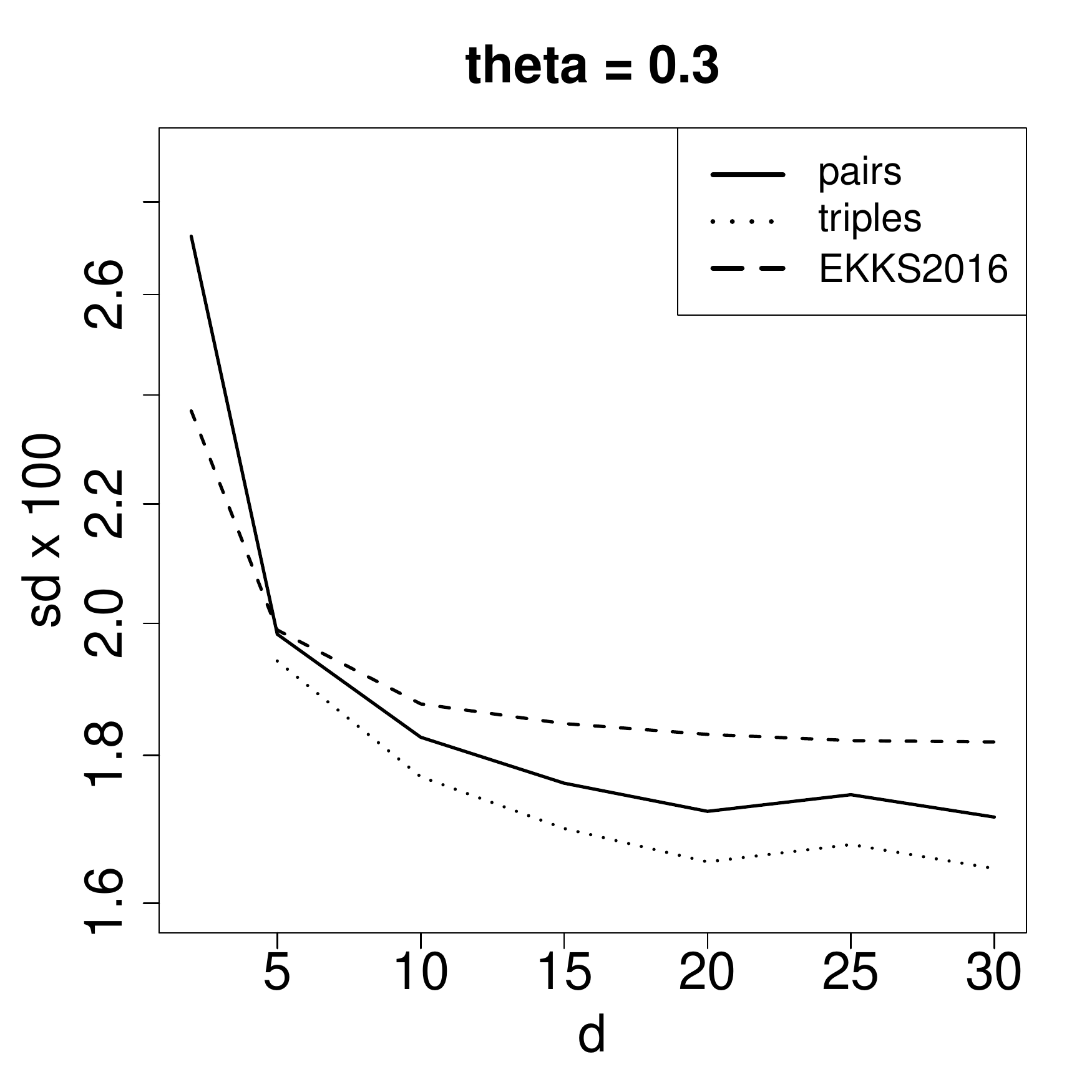}}
\subfloat{\includegraphics[width=0.3\textwidth]{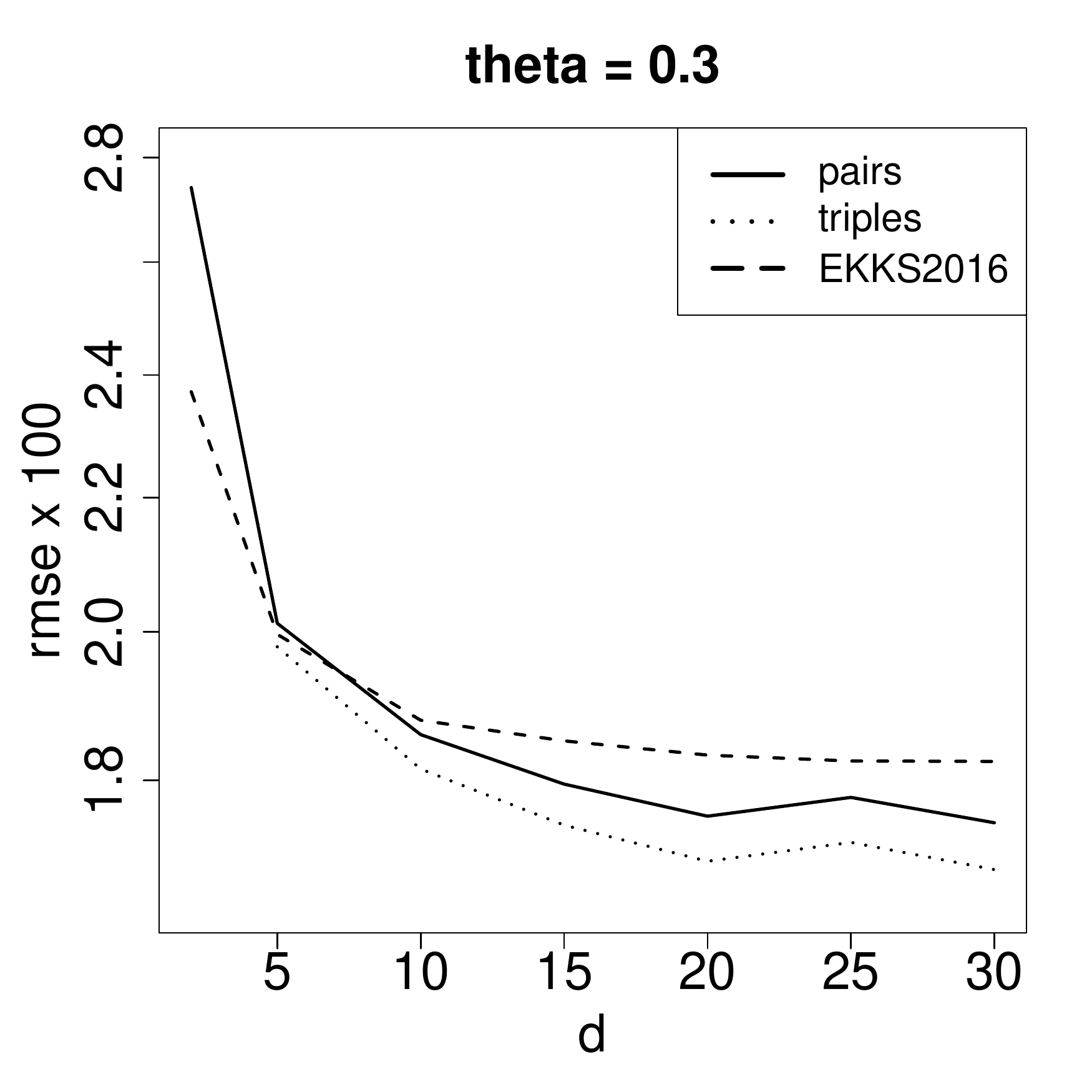}} \\
\subfloat{\includegraphics[width=0.3\textwidth]{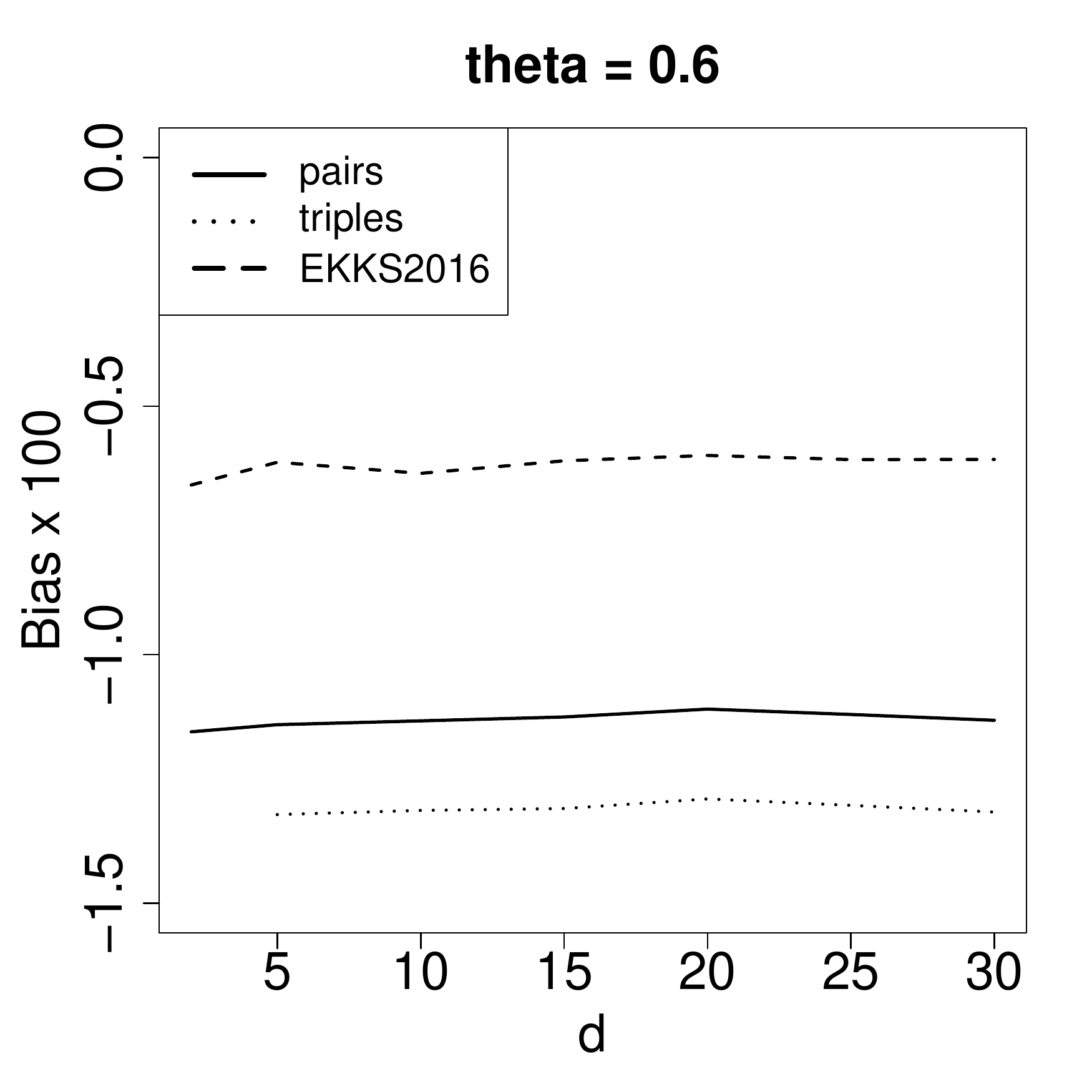}}
\subfloat{\includegraphics[width=0.3\textwidth]{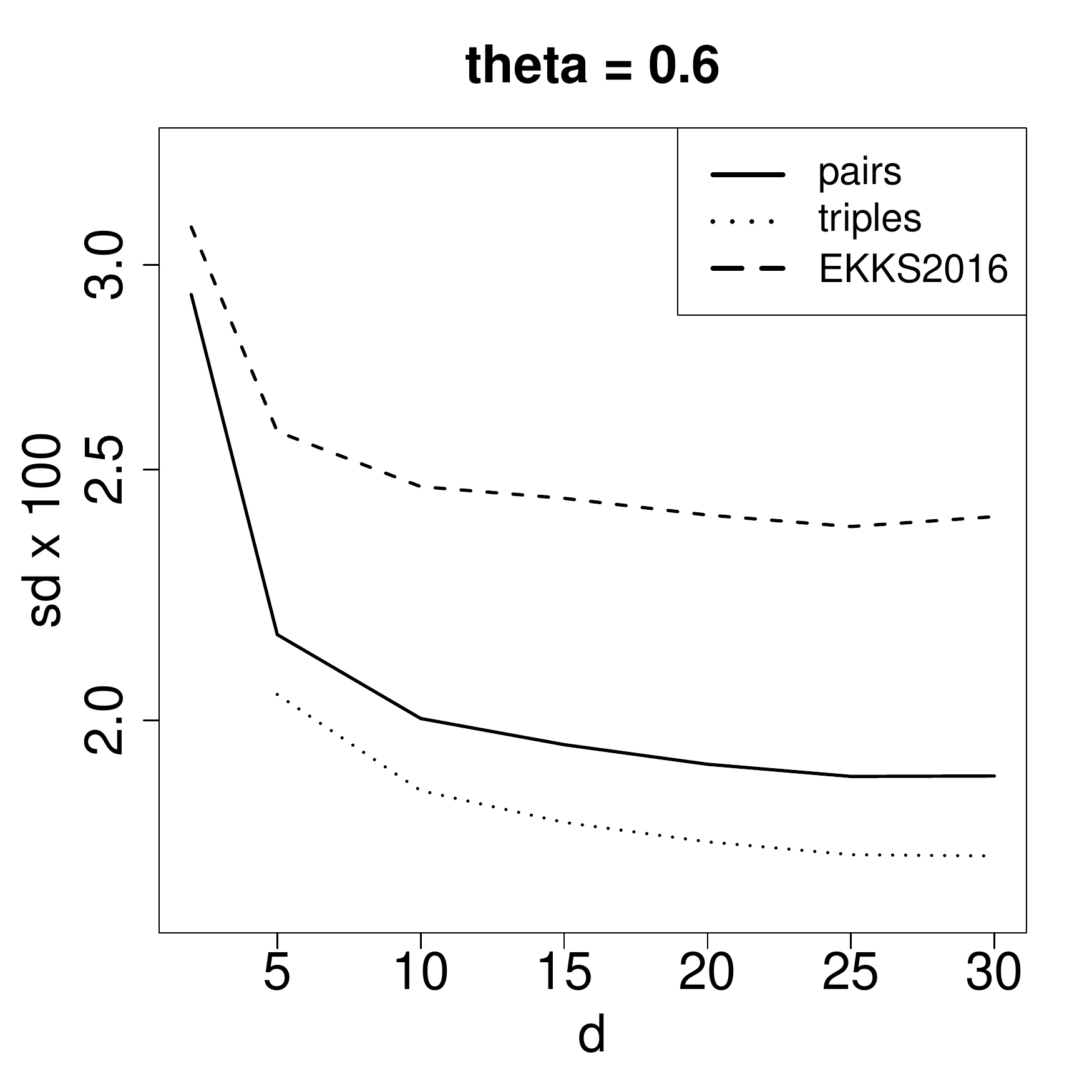}}
\subfloat{\includegraphics[width=0.3\textwidth]{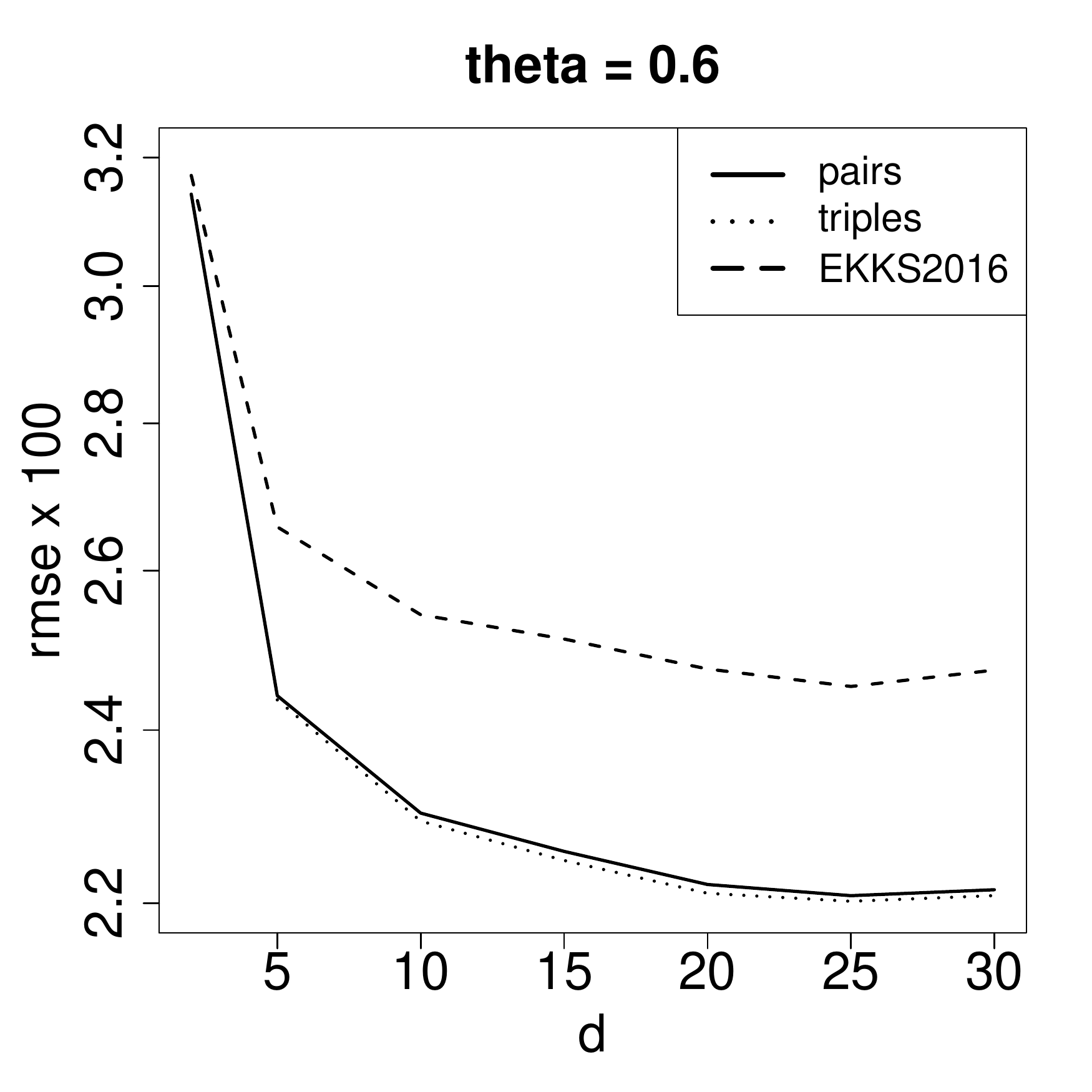}} \\
\subfloat{\includegraphics[width=0.3\textwidth]{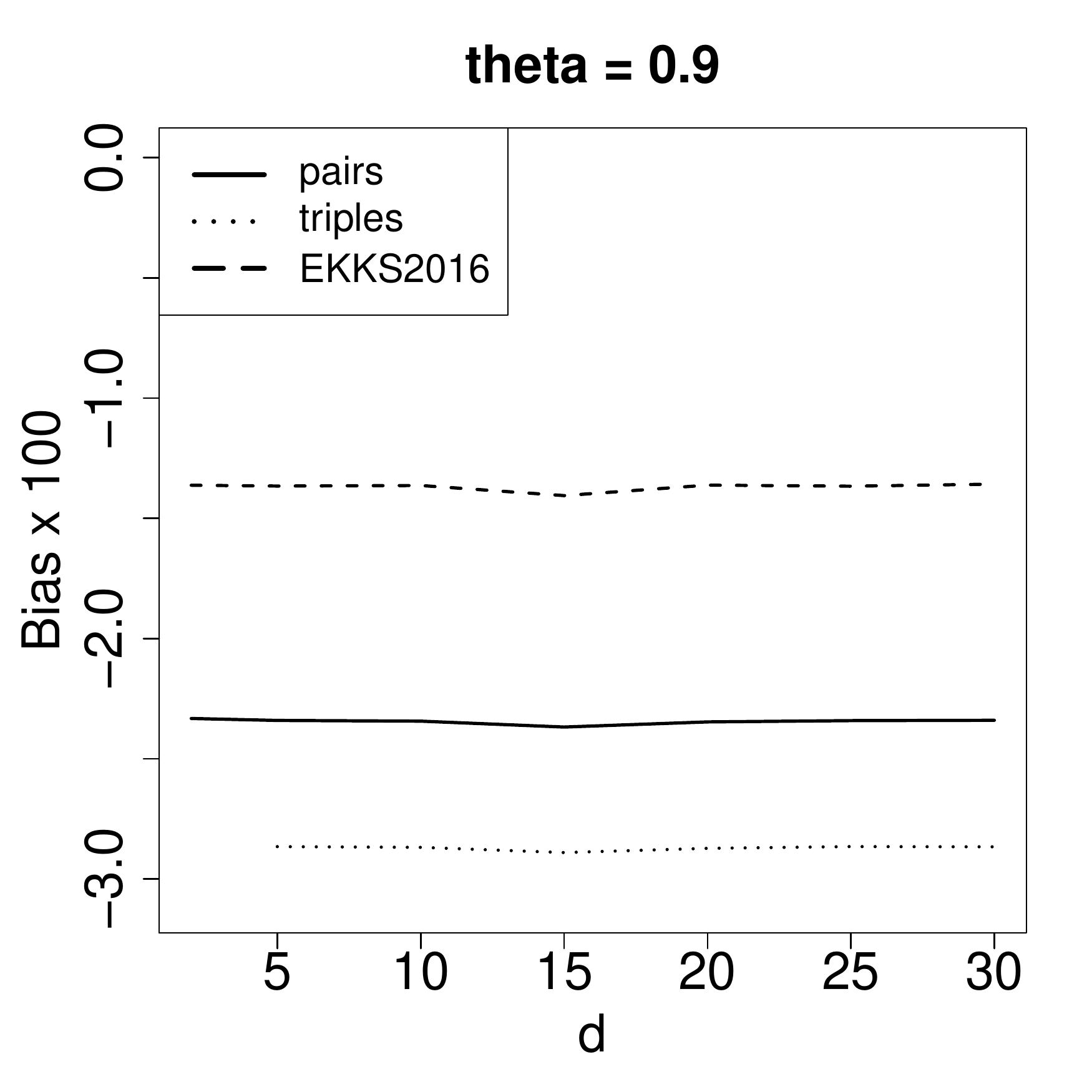}}
\subfloat{\includegraphics[width=0.3\textwidth]{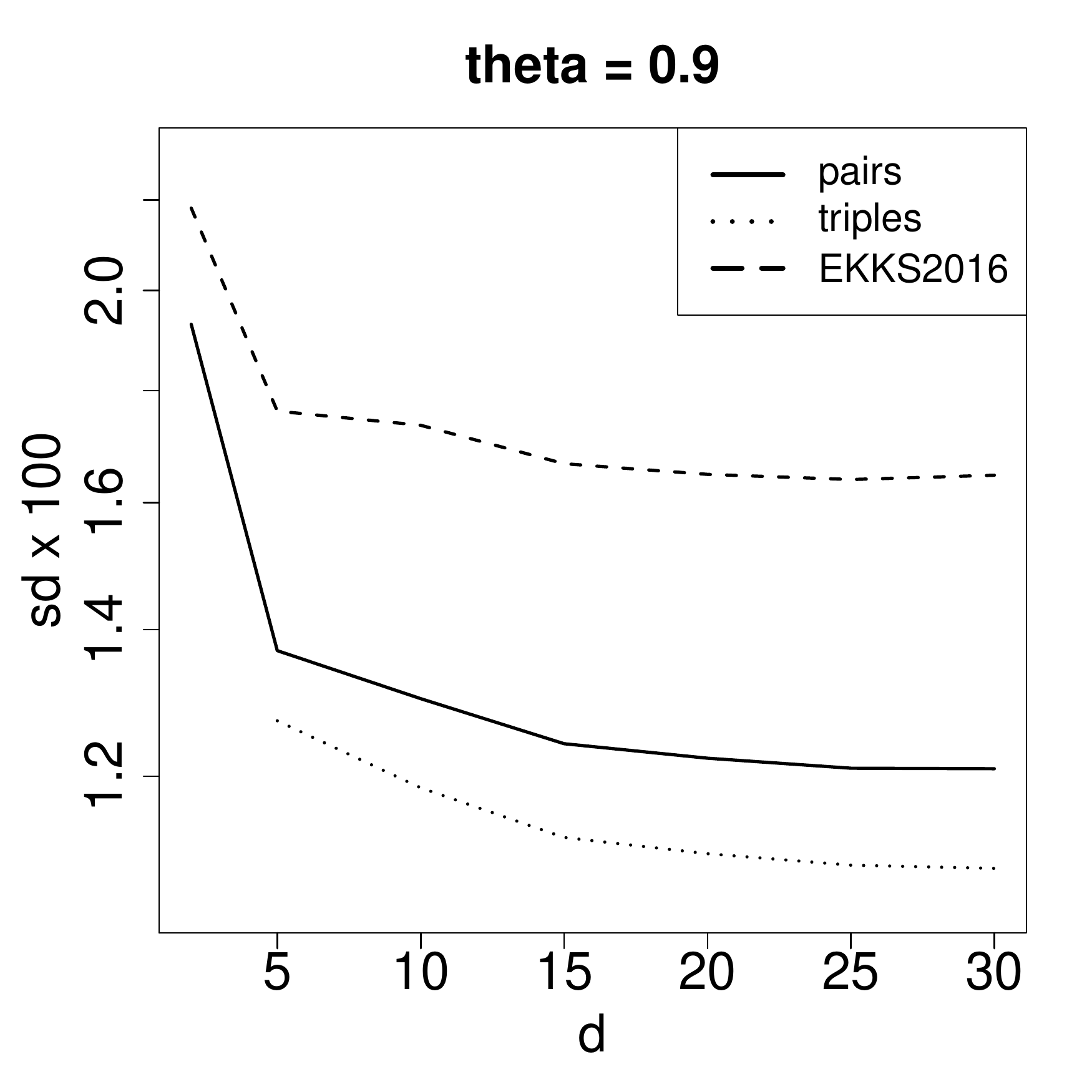}}
\subfloat{\includegraphics[width=0.3\textwidth]{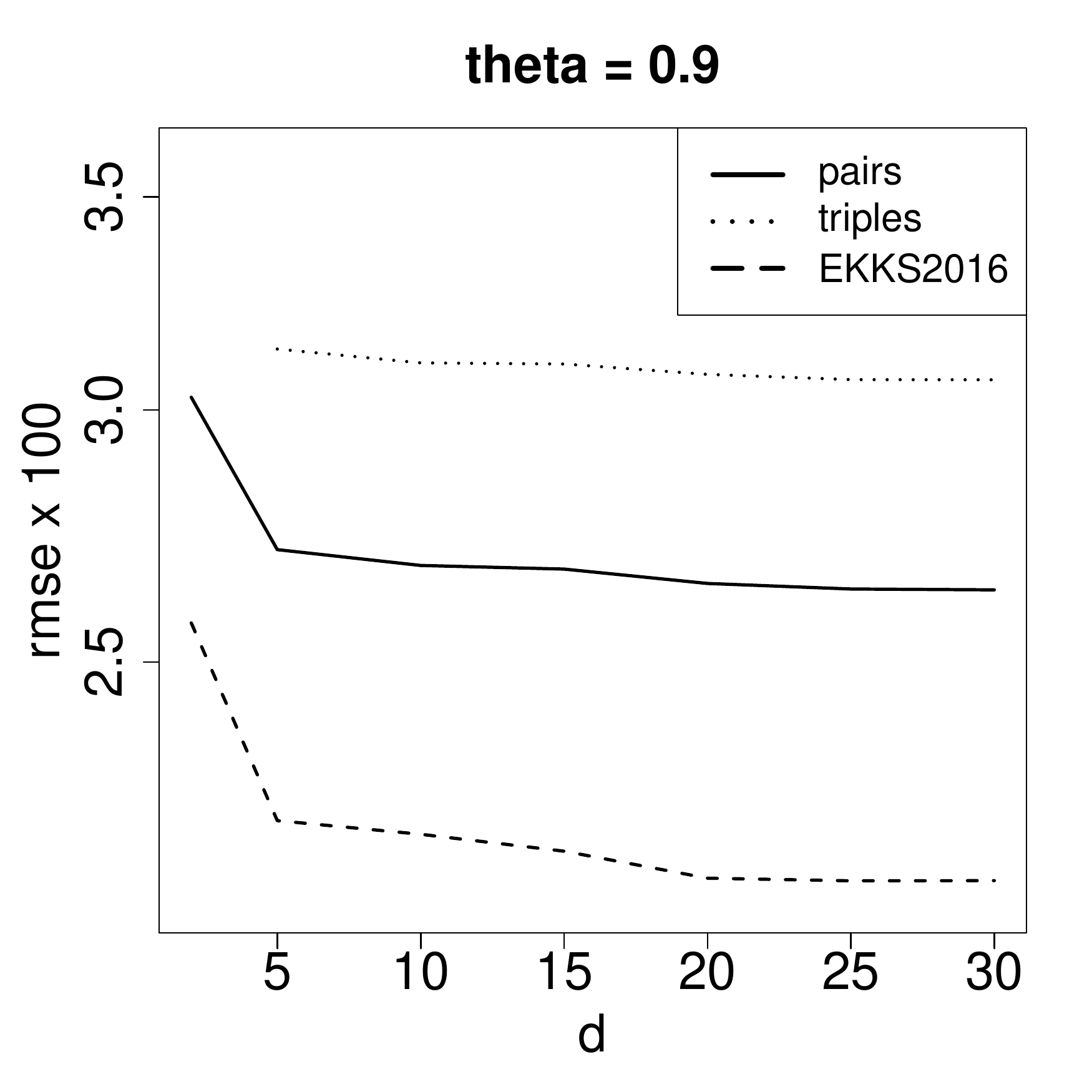}} \\
\subfloat{\includegraphics[width=0.3\textwidth]{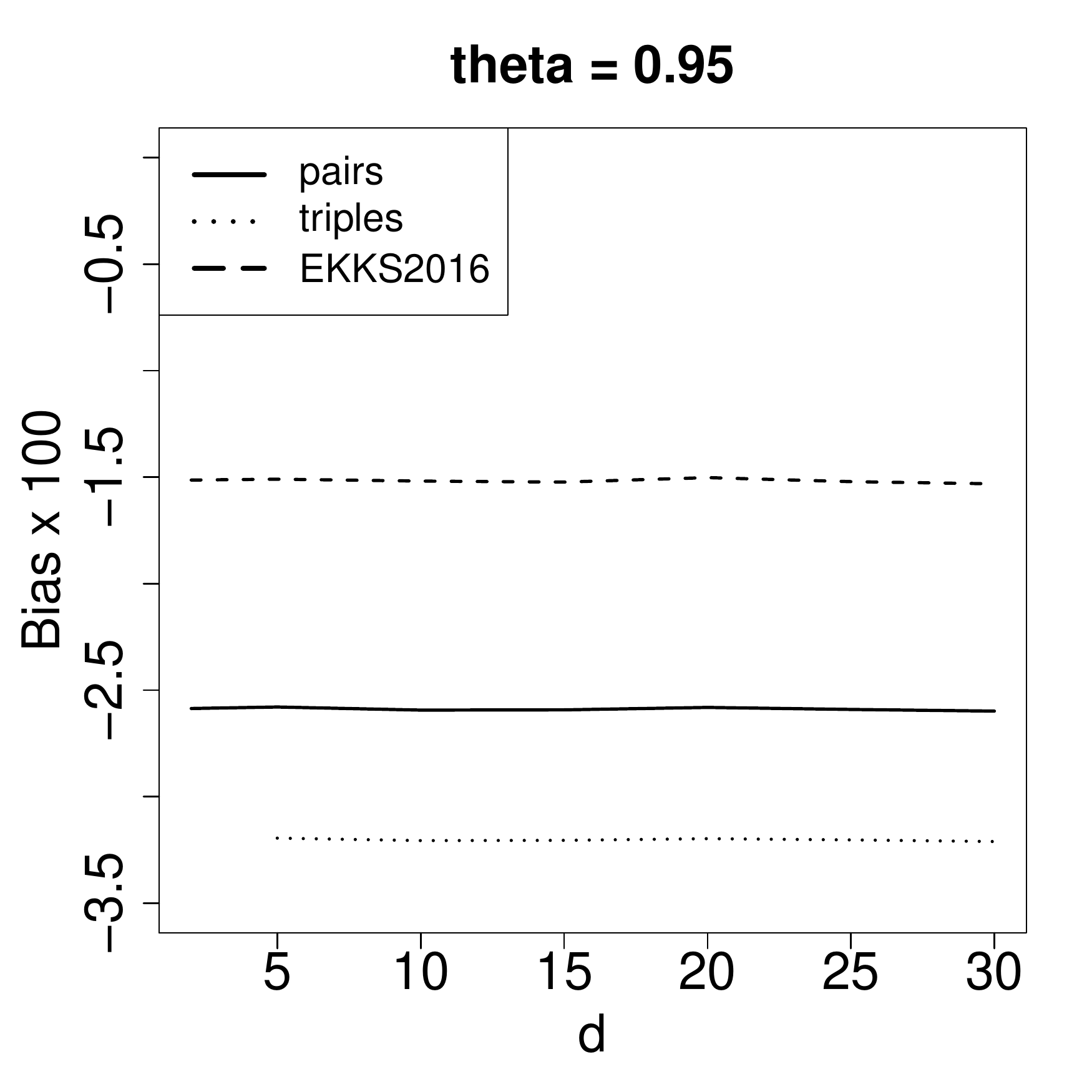}}
\subfloat{\includegraphics[width=0.3\textwidth]{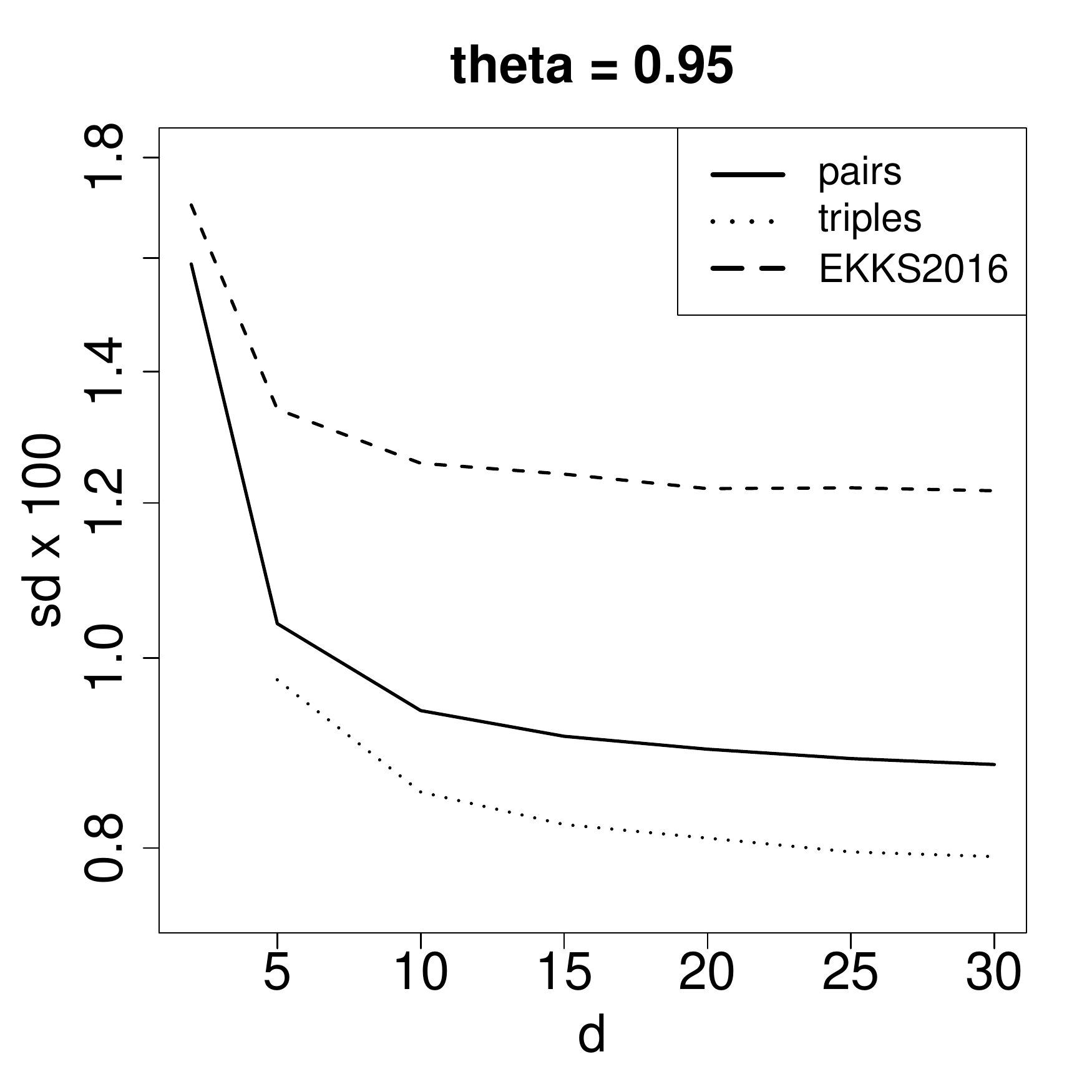}}
\subfloat{\includegraphics[width=0.3\textwidth]{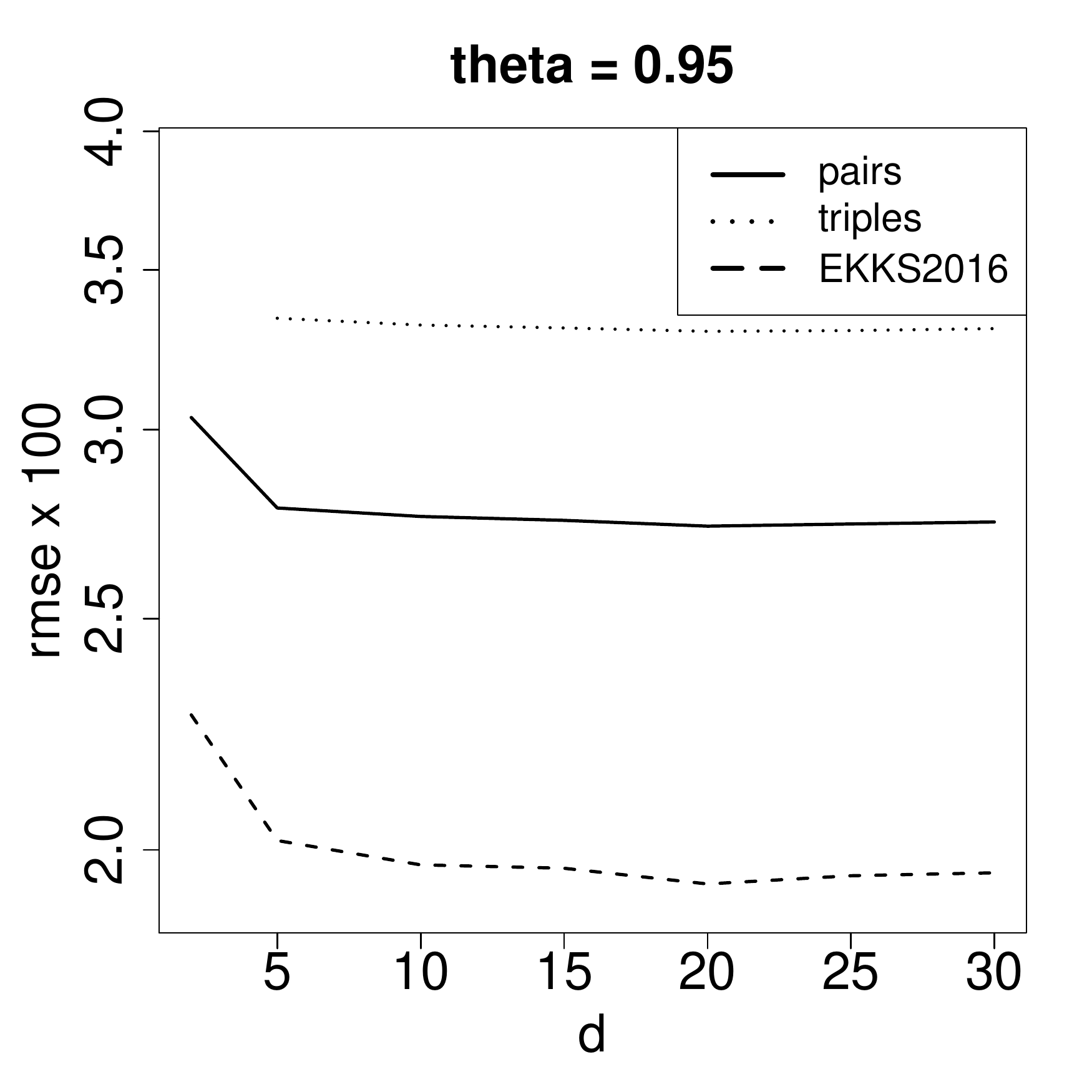}} \\
\caption{Logistic model: bias, standard deviation and RMSE for the estimators; $10 \, 000$ samples of size $n = 10 \, 000$. Standard errors and RMSEs are displayed on a logarithmic scale.}
\label{fig:simresults1}
\end{figure}

Figure~\ref{fig:simresults1} shows the bias, standard deviation and RMSE of three estimators based on the empirical tail dependence function: the two extremal coefficient estimators mentioned above and the pairwise M-estimator of \citet{einmahl2016} as implemented in the \textsf{R} package \textsf{spatialTailDep} \citep{kiriliouk+s:2014}. 
As the tuple size changes from pairs to triples, the absolute bias increases but the standard deviation decreases. When dependence is strong, $\theta = 0.3$, the gains in variance offset the losses in bias and the estimator based on $\QQ_3$ performs best. Note also that when the dependence is not too weak, the estimators based on extremal coefficients perform better than the pairwise M-estimator of \citet{einmahl2016}. Finally, our estimation procedures have almost constant RMSE as the dimension increases, in line with the pairwise composite likelihood methods studied in \citet{huser2015}.

Comparing these results to the ten likelihood-based estimators in \citet[Figure 4]{huser2015}, we see that our estimators are strong competitors in the sense that they rank highly when comparing RMSEs, and are not dominated by one of the likelihood-based estimators. More precisely, for $\theta = 0.3$, only the likelihood estimators based on the Poisson process representation \citep{coles1991} and the multivariate Generalized Pareto distribution outperform our estimators; for $\theta = 0.6$, the same two likelihood estimators outperform ours, but only for $d \geq 15$; finally, for $\theta = 0.9$ and $\theta = 0.95$ only the pairwise censored likelihood estimator \citep{huser2014} has a smaller RMSE than our estimators.

\subsection{Brown--Resnick process}
The Brown--Resnick process on a planar set $\mathcal{S} \subset \RR^2$ is given by
\begin{equation}\label{eq:BR}
Y(s) =  \max_{i \in \mathbb{N}} \xi_i \exp{ \left\{ \epsilon_i (s) - \gamma(s) \right\}}, \qquad s \in \mathcal{S},
\end{equation}
where $\{\xi_i\}_{i \geq 1}$ is a Poisson process on $(0,\infty)$ with intensity measure $\xi^{-2} \,\diff \xi$ and $\{\epsilon_i (\point) \}_{i \geq 1}$ are independent copies of a Gaussian process $\epsilon$ with stationary increments such that $\epsilon (0) = 0$ and with variance $2 \gamma (\point)$ and semi-variogram $\gamma (\point)$. In \citet{kabluchko2009} it is shown that the Brown--Resnick process with $\gamma (s) = (\norm{s} / \rho)^{\alpha}$ is the only possible limit of (rescaled) maxima of stationary and isotropic Gaussian random fields; here $\rho > 0$ and $0 < \alpha  \leq 2$.  

For $d$ locations $s_1, \ldots, s_d \in \mathcal{S}$, the distribution of the random vector $(Y(s_i))_{i = 1}^d$ is max-stable with tail dependence function $\ell$ depending on $\gamma(\point)$.  From \cite{huser2013}, we obtain the following representation for the extremal coefficients $\ell_J$ in \eqref{eq:extrCoeff}. Let $\Phi_a( \point; R)$ denote the cumulative distribution function of the $\Normal_a( 0, R )$ distribution. Then we have
\begin{equation*}
\ell_J  =  \sum_{j \in J}  \Phi_{d-1} ( \eta^{(j)} ; R^{(j)} ) , \qquad J \subset \{1,\ldots,d\}, J \neq \emptyset,
\end{equation*}
where $\eta^{(j)}  = (\eta_1^{(j)}, \ldots, \eta_{j-1}^{(j)} , \eta_{j+1}^{(j)} , \ldots , \eta_d^{(j)} )$ with $\eta_i^{(j)} = \sqrt{\gamma (s_j - s_i)/2}$, and where $R^{(j)}$ is a $(d-1) \times (d-1)$ correlation matrix with entries given by
\begin{equation*}
  R^{(j)}_{ik} 
  = 
  \frac%
  {\gamma (s_j - s_i) + \gamma (s_j - s_k) - \gamma (s_i - s_k)}%
  {2 \sqrt{\gamma (s_j-s_i) \, \gamma (s_j - s_k)}}, 
  \qquad i,k \in \{1,\ldots,d\} \setminus \{j\}.
\end{equation*}

We simulate 300 random samples of size $n = 1000$ from the Brown--Resnick process on a $3 \times 4$ unit distance grid using the \textsf{R} package \textsf{SpatialExtremes} \citep{ribatet:2015}. To arrive at a more realistic estimation problem, we perturb the samples thus obtained with additive noise, i.e., if $Y_{i} = (Y_{i1},\ldots,Y_{id})$ is an observation from the Brown--Resnick process, then we set $X_{ij} = Y_{ij} + |\epsilon_{ij}|$ for $i=1,\ldots,n$ and $j=1,\ldots,d$, where $\epsilon_{ij}$ are independent $\mathcal{N}(0,1/4)$ random variables.

We estimate the parameters $(\alpha,\rho) = (1,1)$ using the extremal coefficient estimator based on the subset of $\QQ_2$ in \eqref{eq:QQ:a} consisting of pairs of neighbouring locations, i.e., locations that are at most a distance $\sqrt{2}$ apart. This leads to $q = 29$ pairs. Including pairs of locations that are further away tends to drastically increase the bias \citep{einmahl2016}.

The upper panels of Figure \ref{fig:brplot} show the bias, standard deviation and RMSE for three estimators: the estimator based on the empirical tail dependence function with $\Omega (\theta) = \Sigma (\theta)^{-1}$ (solid lines), the estimator based on the bias-corrected tail dependence function with $\Omega (\theta) = \Sigma (\theta)^{-1}$ (dotted lines), and the pairwise M-estimator from \citet{einmahl2016} (dashed lines). 
We see that for the estimation of the shape parameter $\alpha = 1$ it is better to use one of the estimators based on the empirical stable tail dependence function, whereas for the scale parameter $\rho = 1$ the bias-corrected estimator performs better.

\begin{figure}[p]
\centering
\subfloat{\includegraphics[width=0.3\textwidth]{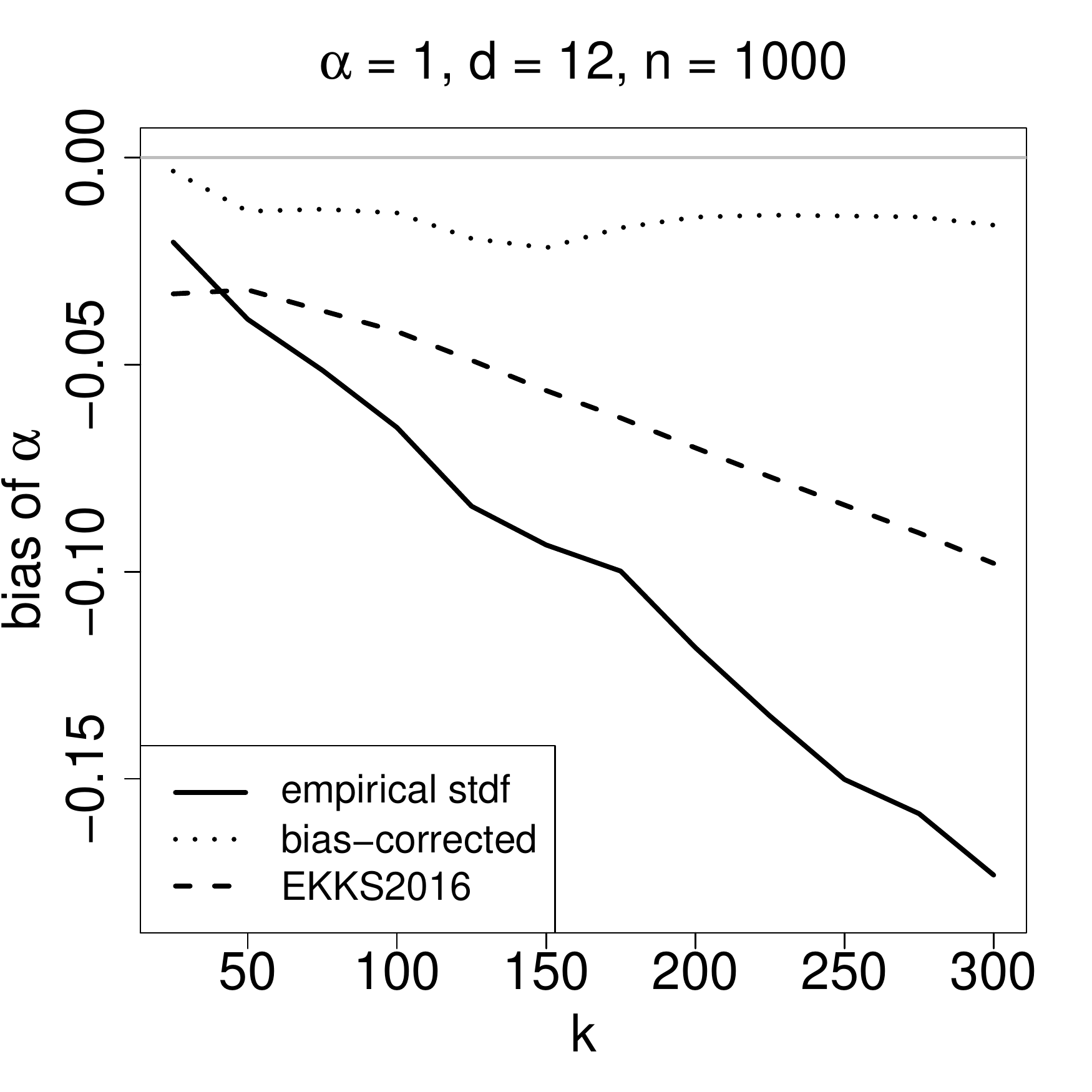}}
\subfloat{\includegraphics[width=0.3\textwidth]{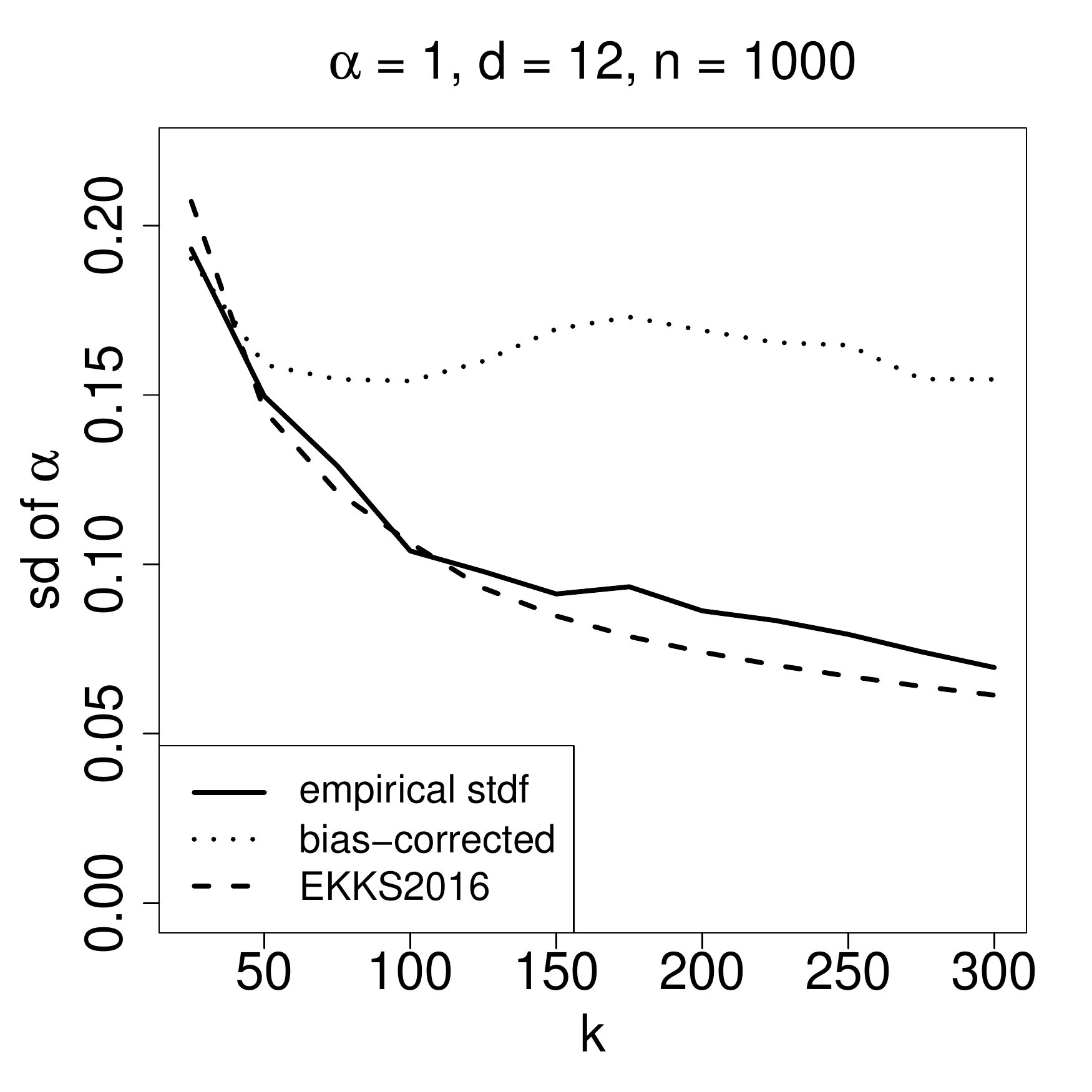}}
\subfloat{\includegraphics[width=0.3\textwidth]{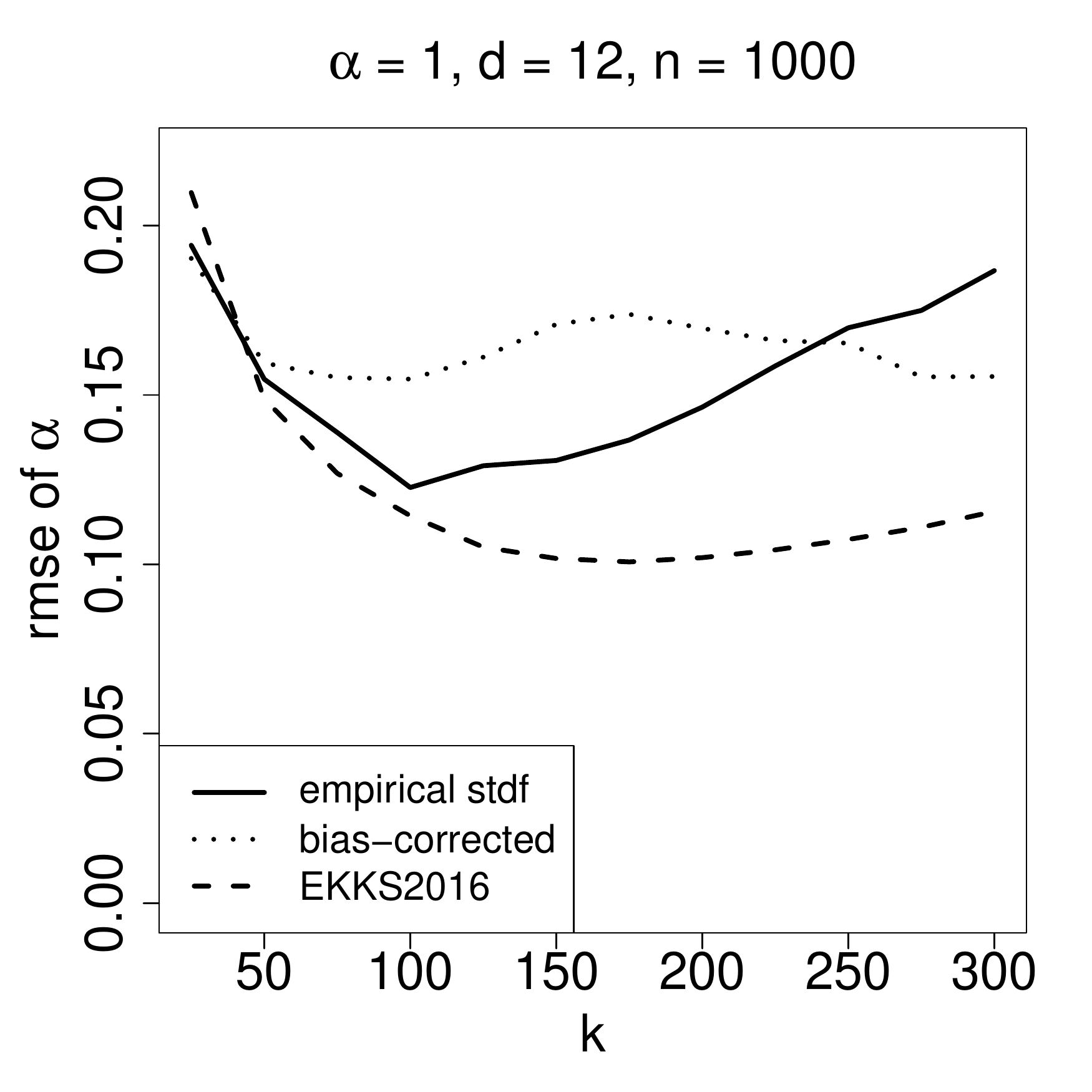}} \\
\subfloat{\includegraphics[width=0.3\textwidth]{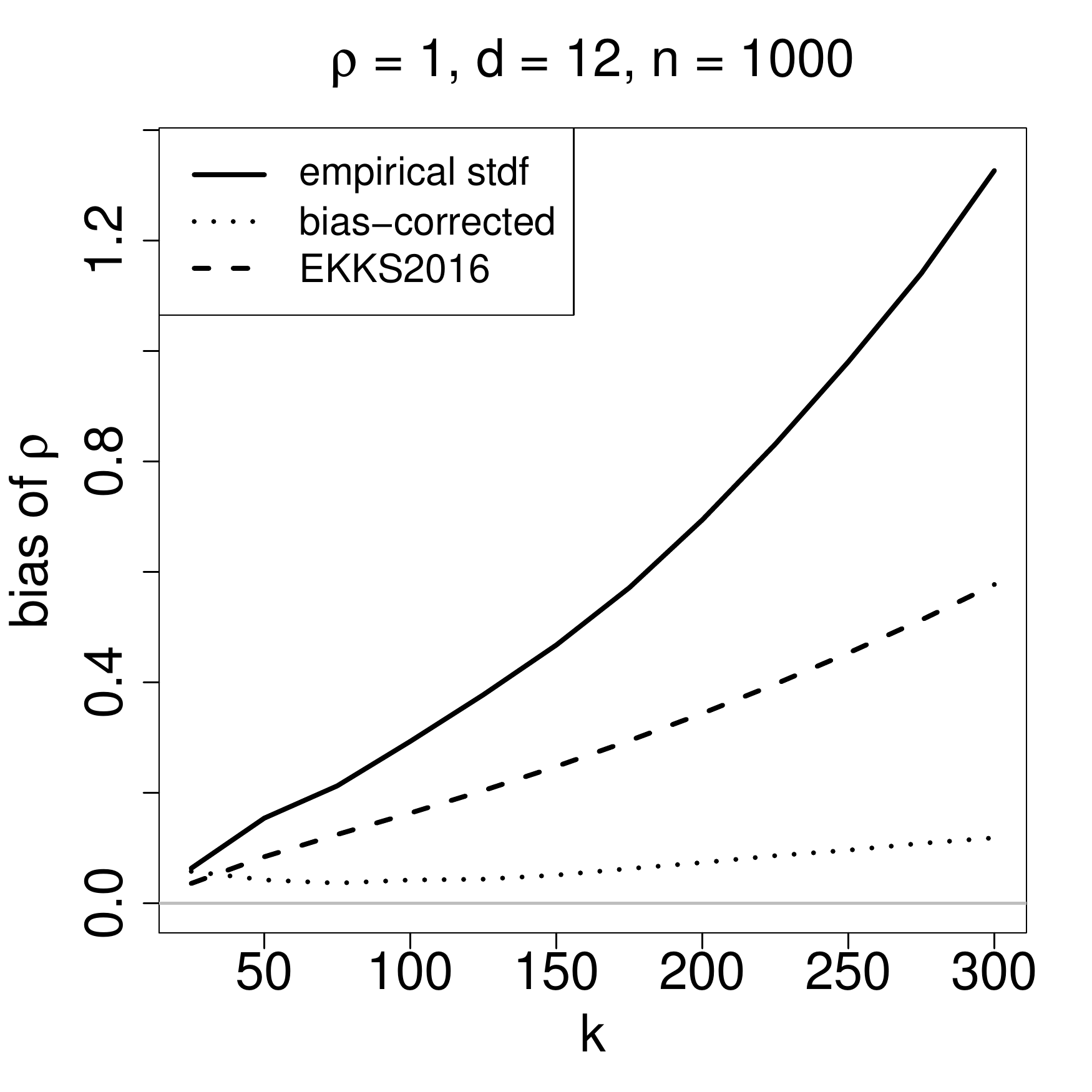}}
\subfloat{\includegraphics[width=0.3\textwidth]{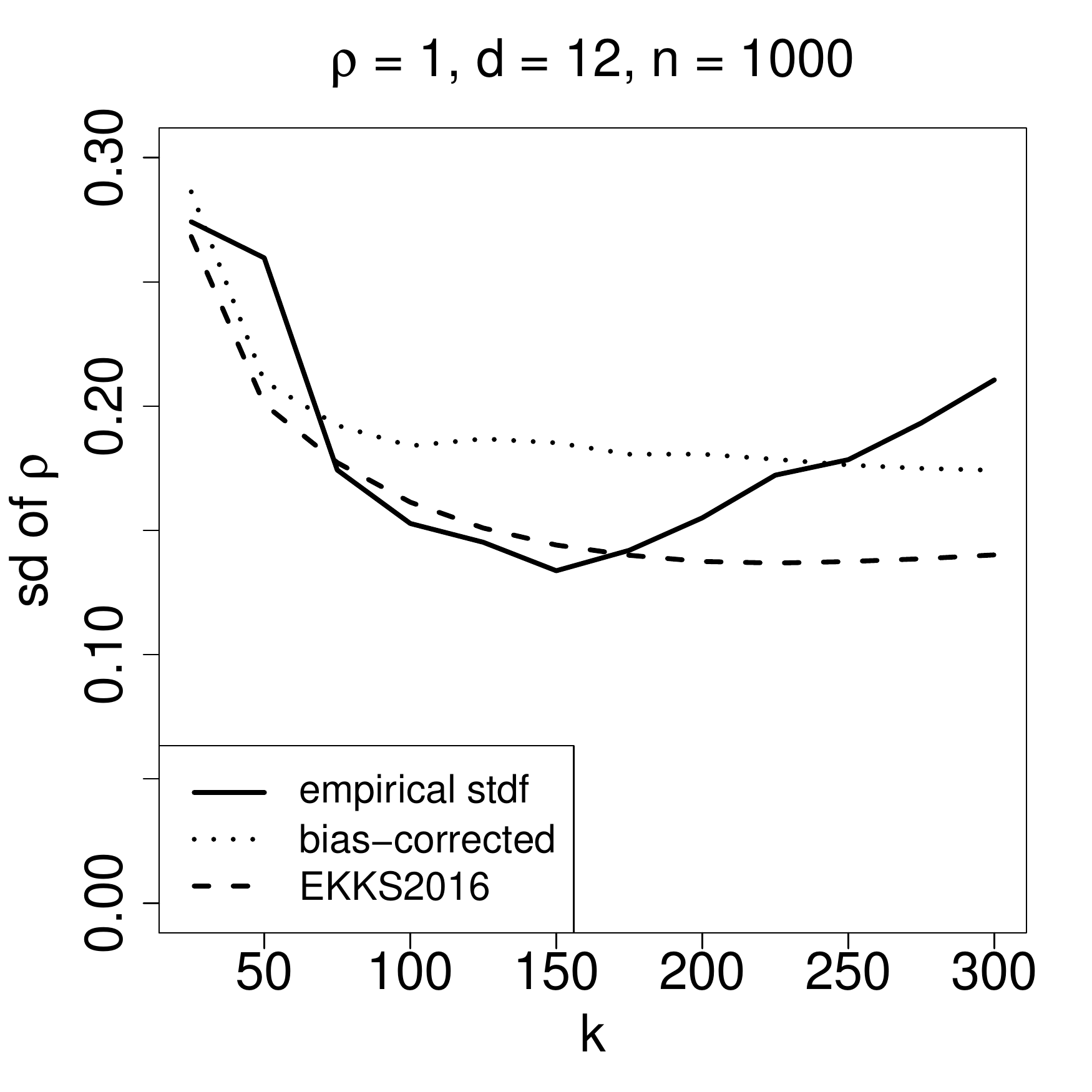}}
\subfloat{\includegraphics[width=0.3\textwidth]{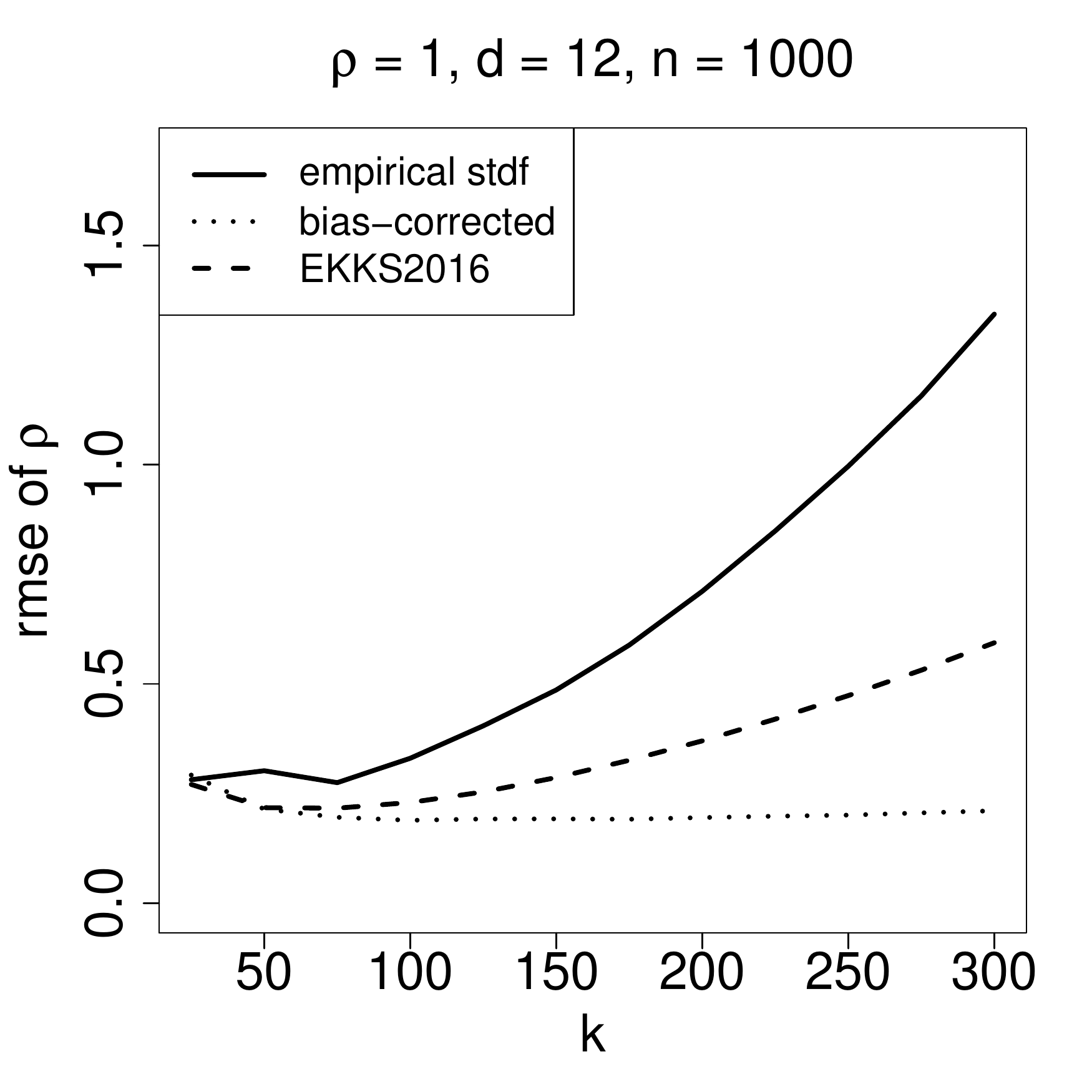}} \\
\subfloat{\includegraphics[width=0.3\textwidth]{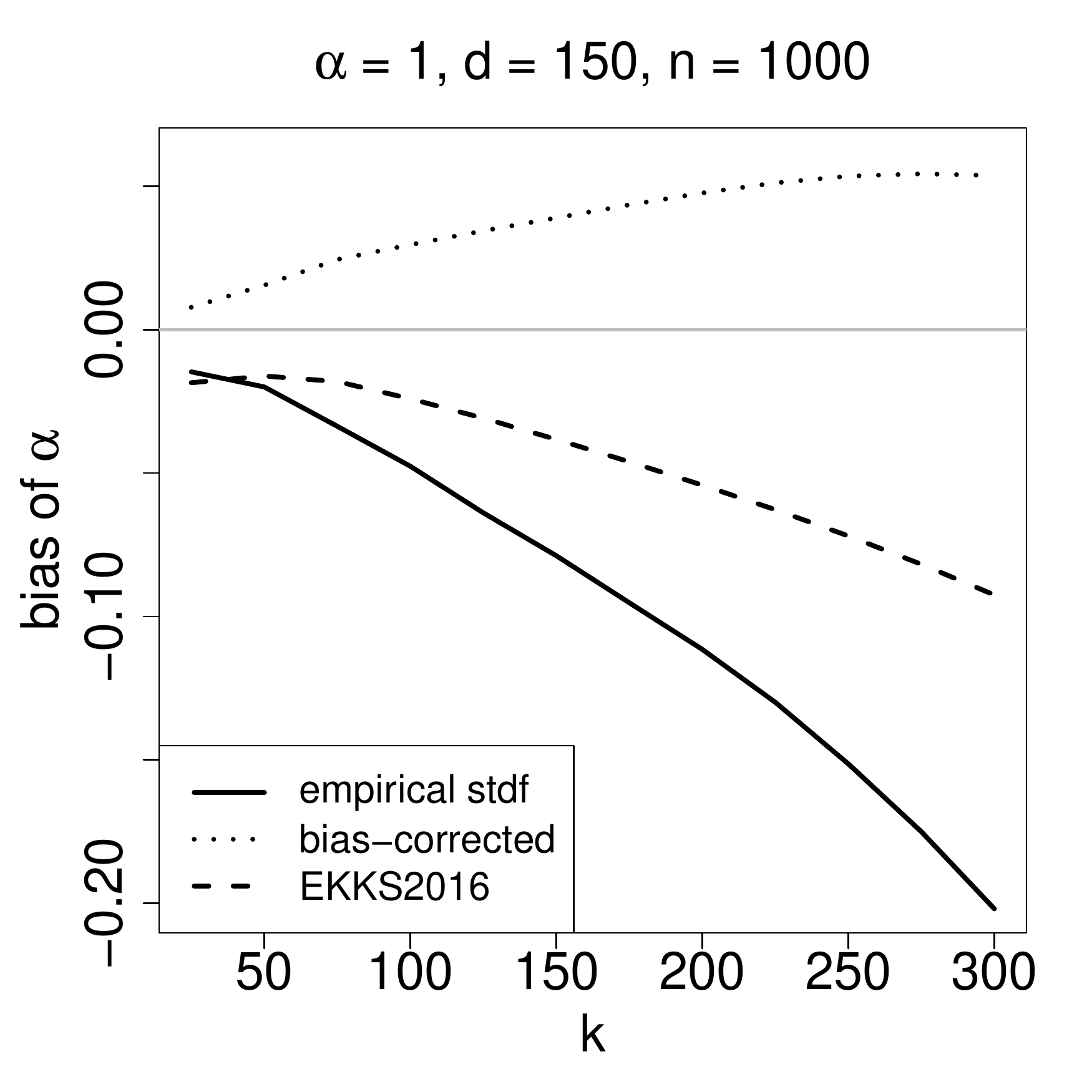}}
\subfloat{\includegraphics[width=0.3\textwidth]{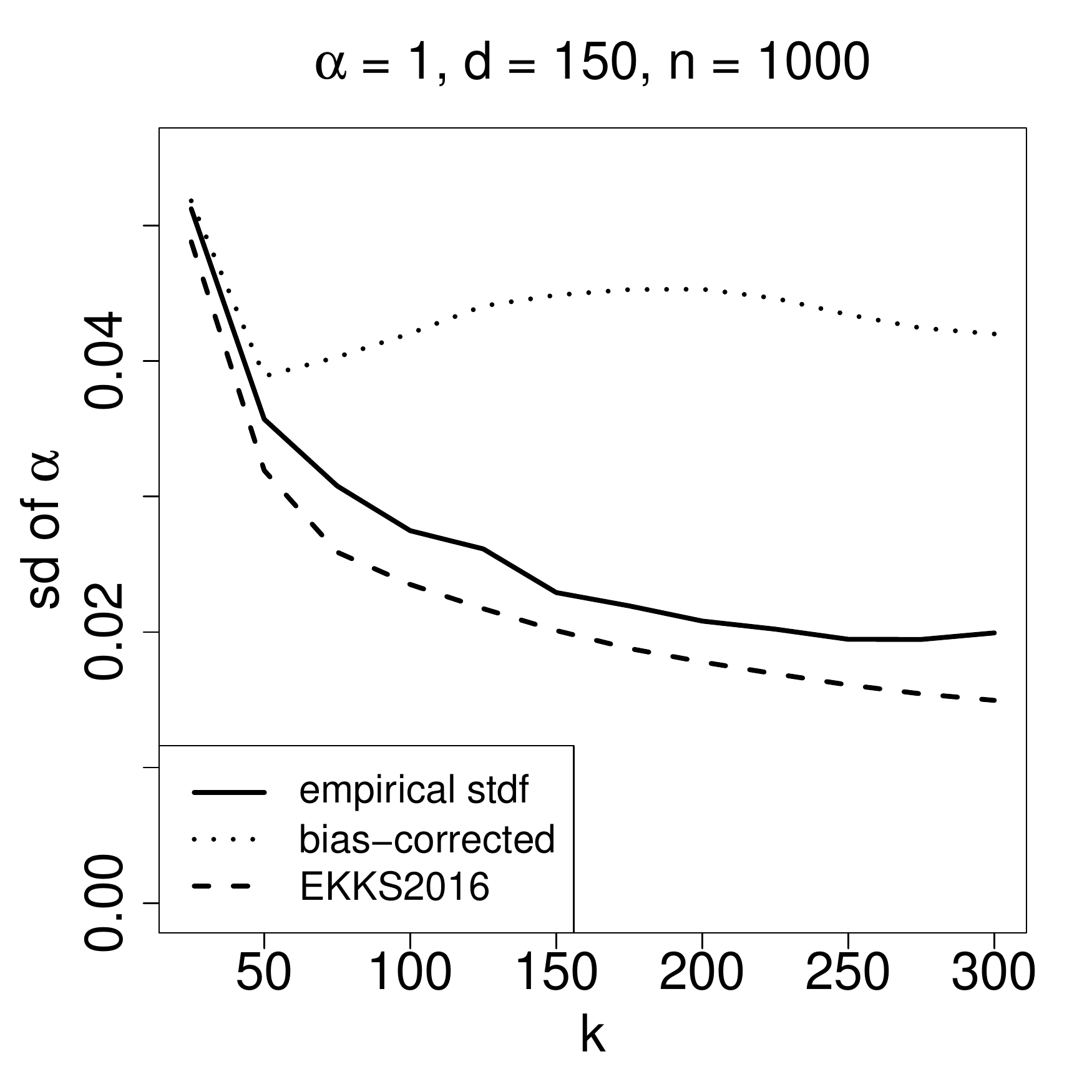}}
\subfloat{\includegraphics[width=0.3\textwidth]{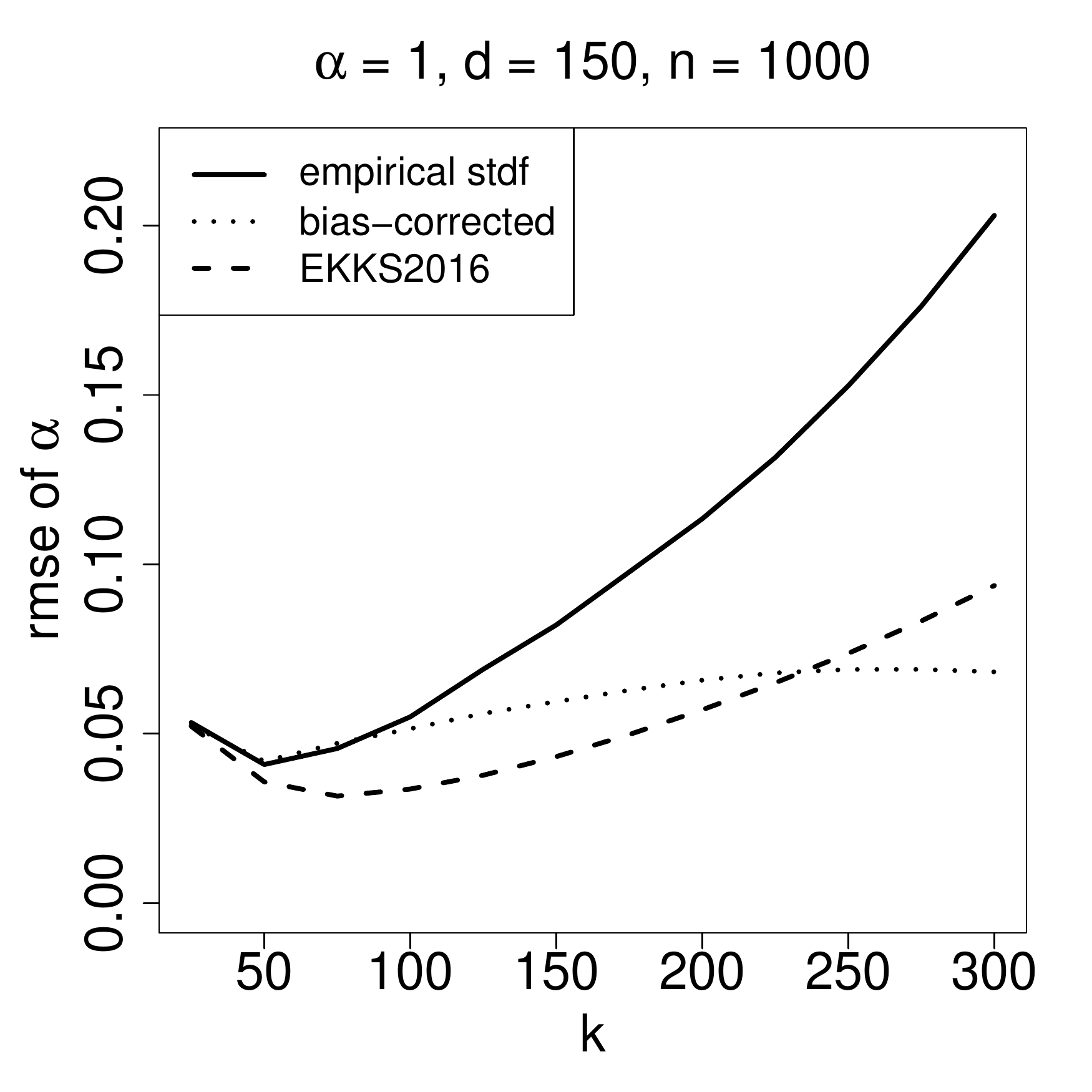}} \\
\subfloat{\includegraphics[width=0.3\textwidth]{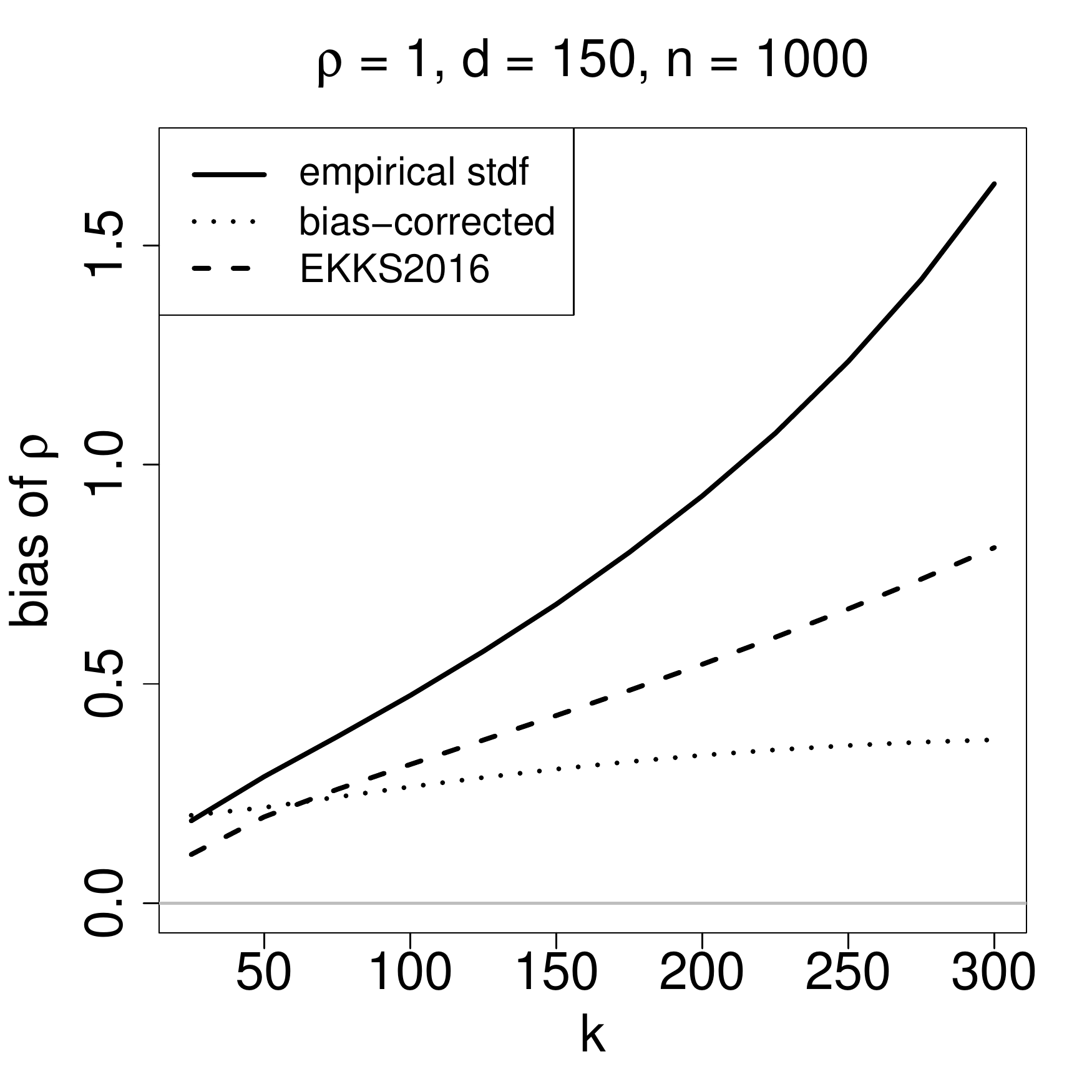}}
\subfloat{\includegraphics[width=0.3\textwidth]{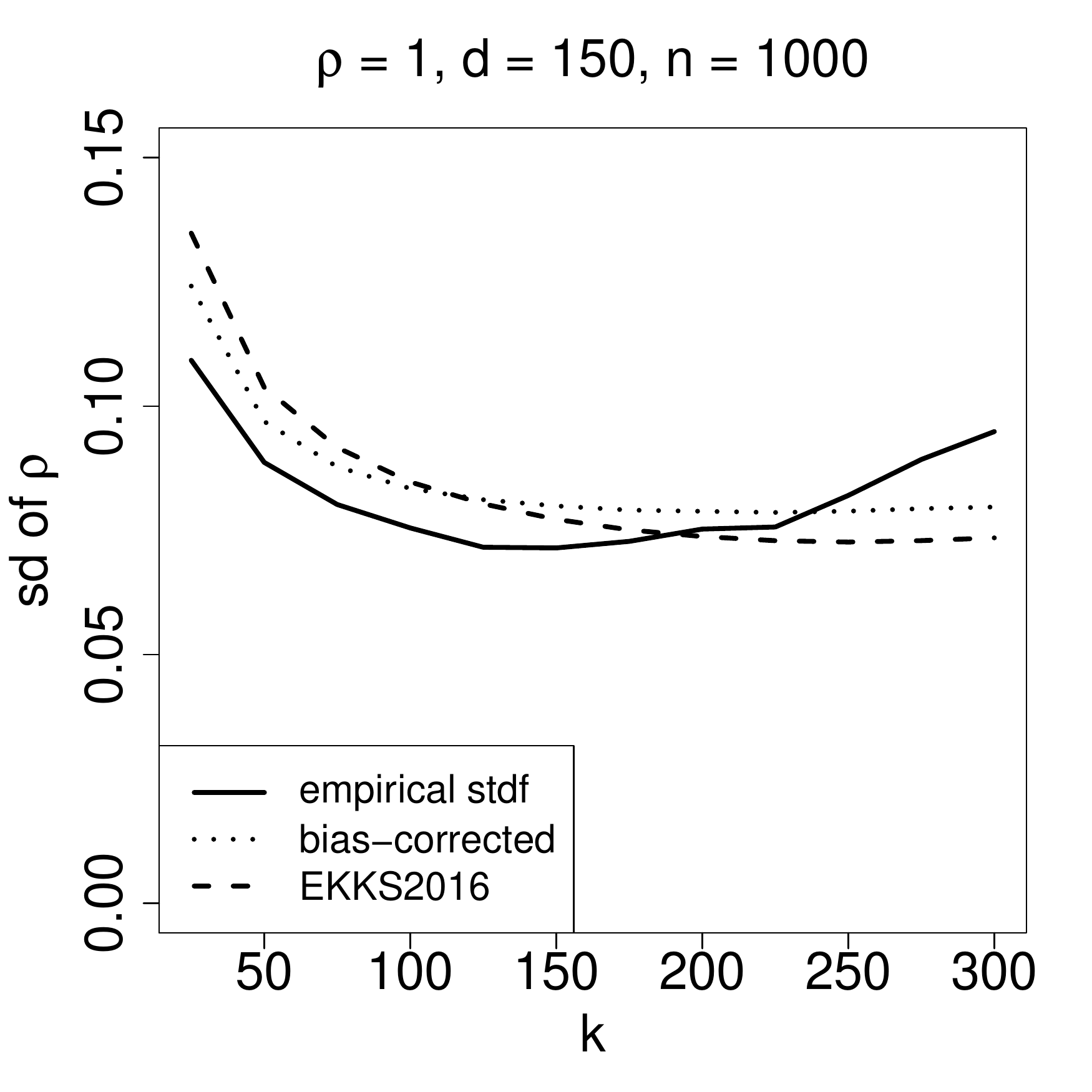}}
\subfloat{\includegraphics[width=0.3\textwidth]{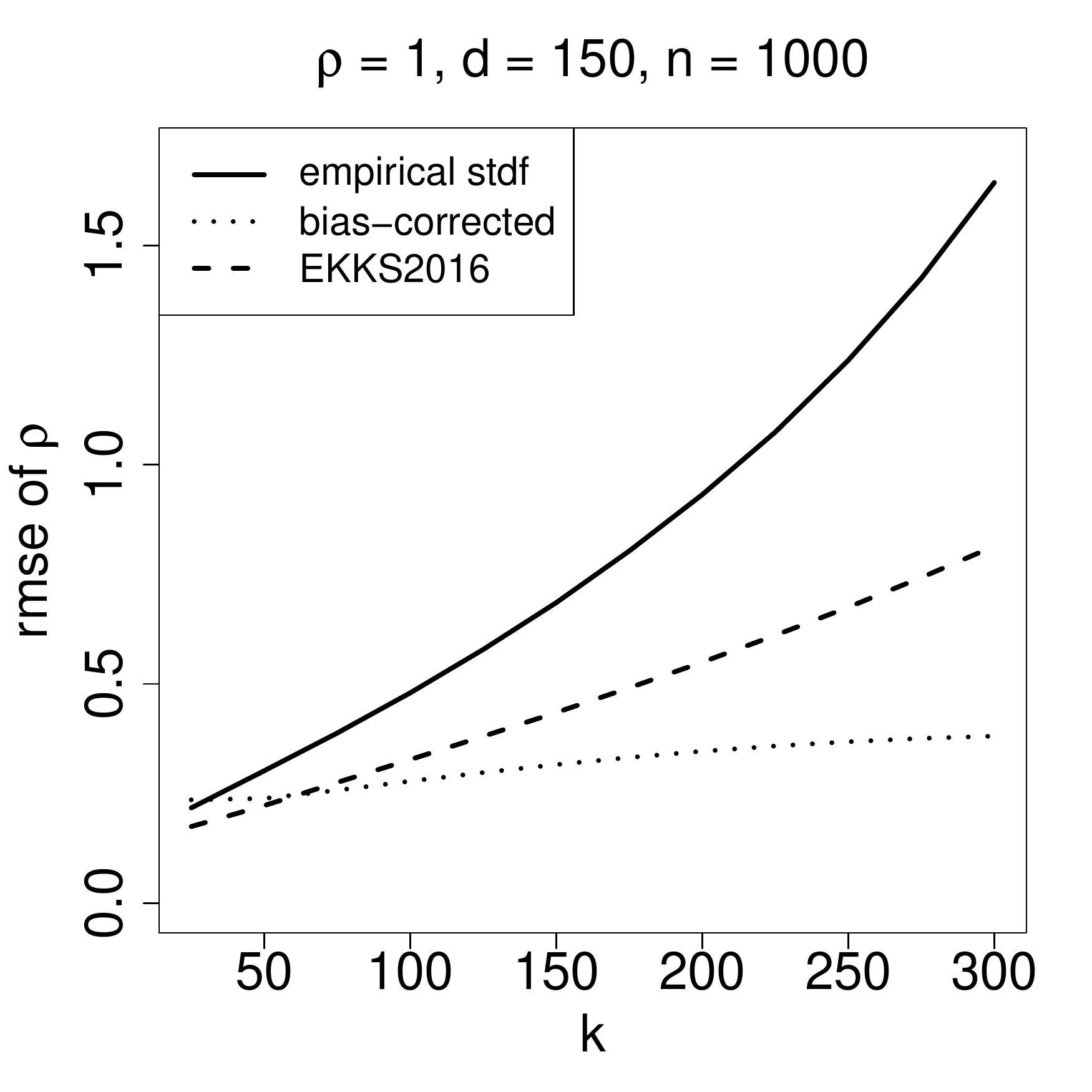}} 
\caption{Brown--Resnick process: bias, standard deviation and RMSE for the estimators in $d=12$ (upper panels) and $d = 150$ (lower panels); 300 samples of size $n = 1000$.}
\label{fig:brplot}
\end{figure}

To show the feasibility of the estimation procedure in high dimensions, we simulate $300$ samples of size $n = 1000$ from the perturbed Brown--Resnick process on a $10 \times 15$ unit-distance grid ($d = 150$), using again $(\alpha,\rho) = (1,1)$ and selecting pairs of neighbouring locations only, yielding $q = 527$ pairs in total. The bottom panels of Figure \ref{fig:brplot} show the bias, standard deviation and RMSE for the estimator based on the empirical tail dependence function with $\Omega (\theta) = I_q$ (solid lines), the estimator based on the bias-corrected tail dependence function with $\Omega (\theta) = I_q$ (dotted lines), and the pairwise M-estimator from \citet{einmahl2016} (dashed lines). Compared to $d = 12$ above, the estimation of $\alpha$ has improved whereas the estimation quality of $\rho$ stays roughly the same.

\subsection{Max-linear models on directed acyclic graphs}
\label{sec:maxlinear}

A max-linear or max-factor model has stable tail dependence function 
\begin{equation}\label{eq:mlstdf}
  \ell(x) =   \sum_{t=1}^r \max_{j=1,\ldots,d}{b_{jt} x_j},  \qquad x \in [0,\infty)^d,
\end{equation} 
where the factor loadings $b_{jt}$ are non-negative constants such that $\sum_{t=1}^r b_{jt} = 1$ for every $j \in \{1,\ldots,d\}$ and all column sums of the $d \times r$ matrix $B := (b_{jt})_{j,t}$ are positive \citep{einmahl2012}. 
An example of a random vector $Y = (Y_1, \ldots, Y_d)$ that has tail dependence function \eqref{eq:mlstdf} is $Y_j = \max_{t=1,\ldots,r}{b_{jt} Z_t}$ for $j \in \{1,\ldots,d\}$,
where $Z_1,\ldots,Z_r$ are independent unit Fr\'{e}chet variables. The random variables $Y_j$ are then unit Fr\'echet as well.

Since the rows of $B$ sum up to one, it has only $d \times (r-1)$ free elements. Further structure may be added to the coefficient matrix $B$, leading to parametric models whose parameter dimension is lower than 
$d \times (r-1)$; see below. Even then, the map $L$ in \eqref{eq:L} induced by restricting the points $c_m$ to be of the form $e_J$ in \eqref{ej} is typically not one-to-one. Therefore, we need more general choices of the points $c_m$ in the definition of the estimator.

In \citet{gissibl2015}, a link is established between max-linear models and structural equation models, from which graphical models based on directed acyclic graphs (DAGs) can be constructed. A max-linear structural equation model is defined via
\begin{equation*}
Y_j = \max_{k \in \mathrm{pa}(j)} u_{kj} Y_k \vee u_{j} Z_j, \qquad j = 1,\ldots,d,
\end{equation*}
where $\mathrm{pa}(j) \subset \{1,\ldots,d\}$ denotes the set of parents of node $j$ in the graph, $u_{kj} > 0$ for all $k \in \mathrm{pa}(j) \cup \{j\}$ and $u_{j} > 0$ for all $j \in \{1,\ldots,d\}$. We let $Z_1,\ldots,Z_d$ be independent unit Fr\'echet random variables. A max-linear structural equation model can then be written as a max-linear model with parameters determined by the paths of the corresponding graph.

We focus on the four-dimensional model corresponding to the following directed acyclic graph \citep[Example 2.1]{gissibl2015}:

\begin{minipage}{0.43\textwidth}
\begin{figure}[H]
\includegraphics[width=0.9\textwidth]{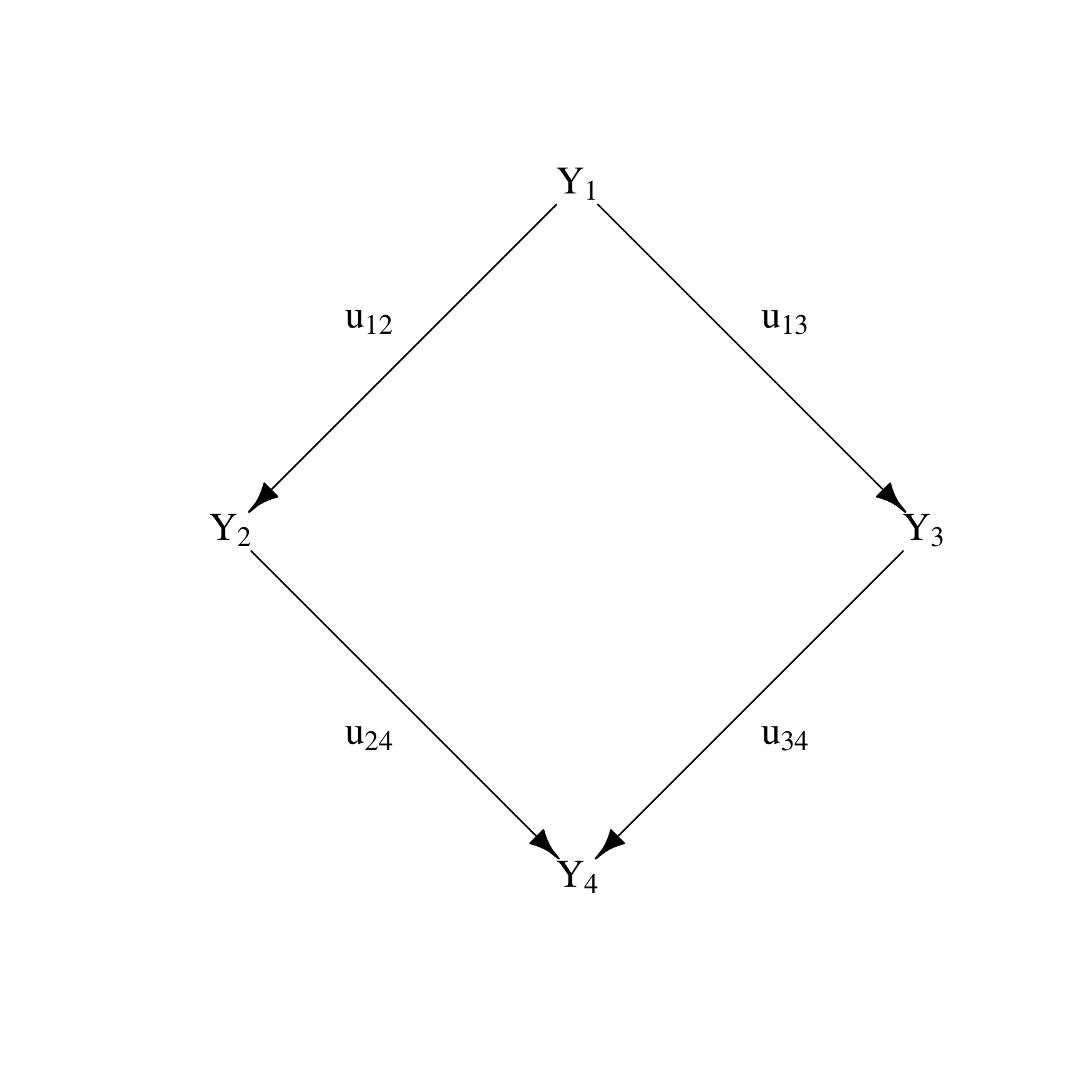}
\end{figure}
\end{minipage} \hspace{-1cm}
\begin{minipage}{0.55\textwidth}
\footnotesize
\begin{align*}
  Y_1 & = u_{1} Z_1, \\
  Y_2 & = u_{12} Y_1 \vee u_{2} Z_2 = u_{12} u_{1} Z_1 \vee u_{2} Z_2, \\
  Y_3 & = u_{13} Y_1 \vee u_{3} Z_3 = u_{13} u_{1} Z_1 \vee u_{3} Z_3, \\
  Y_4 & = u_{24} Y_2 \vee u_{34} Y_3 \vee u_{4} Z_4  \\
&  = (u_{24} u_{12} u_1 \vee u_{34} u_{13} u_1 ) Z_1 \vee u_{24} u_{2} Z_2 \vee u_{34} u_{3} Z_3 \vee u_4 Z_4.
\end{align*} 
\end{minipage} 

If we require $Y_1,\ldots,Y_4$ to be unit Fr\'echet, the matrix of factor loadings becomes
\begin{equation*}
B = \begin{pmatrix}
1 & 0 & 0 & 0 \\
u_{12} & u_2 & 0 & 0 \\
u_{13} & 0 & u_3 & 0 \\
u_{12} u_{24} \vee u_{13} u_{34} & u_2 u_{24} & u_3 u_{34} & u_4 \\ 
\end{pmatrix},
\end{equation*}
where the diagonal elements $u_j$ for $j \in \{ 2,3,4 \}$ are such that the row sums are equal to one. The parameter vector is then given by $\theta = (u_{12}, u_{13}, u_{24}, u_{34})$.

We conduct a simulation study based on $300$ samples of size $n = 1000$ from the four-dimensional model with tail dependence function \eqref{eq:mlstdf} and $B$ as above, with parameter vector $\theta=  (0.3,0.8,0.4,0.55)$. As before, we put $X_{ij} = Y_{ij} + |\epsilon_{ij}|$, with $(Y_{i1},\ldots,Y_{id})$ as above and $\epsilon_{ij}$ independent $\mathcal{N}(0,1/4)$ random variables.
The estimators are based on the $q = 72$ points $c_m$ on the grid $\{0,1/2,1\}^4$ having at least two positive coordinates. 

Figure~\ref{fig:maxlin2} shows the bias, standard deviation and RMSE for the estimator based on the empirical tail dependence function with $\Omega (\theta) = \Sigma (\theta)^{-1}$ (solid lines), the estimator based on the bias-corrected tail dependence function with $\Omega (\theta) = \Sigma (\theta)^{-1}$ (dotted lines) and the pairwise M-estimator from \citet{einmahl2016} (dashed lines).   
The difference between the pairwise M-estimator and our estimators based on the empirical tail dependence function is negligible. The estimators based on the empirical tail dependence function perform better than the ones based on the bias-corrected version, especially for the parameters $u_{13}$ and $u_{24}$.

\begin{figure}[p]
\centering
\subfloat{\includegraphics[width=0.3\textwidth]{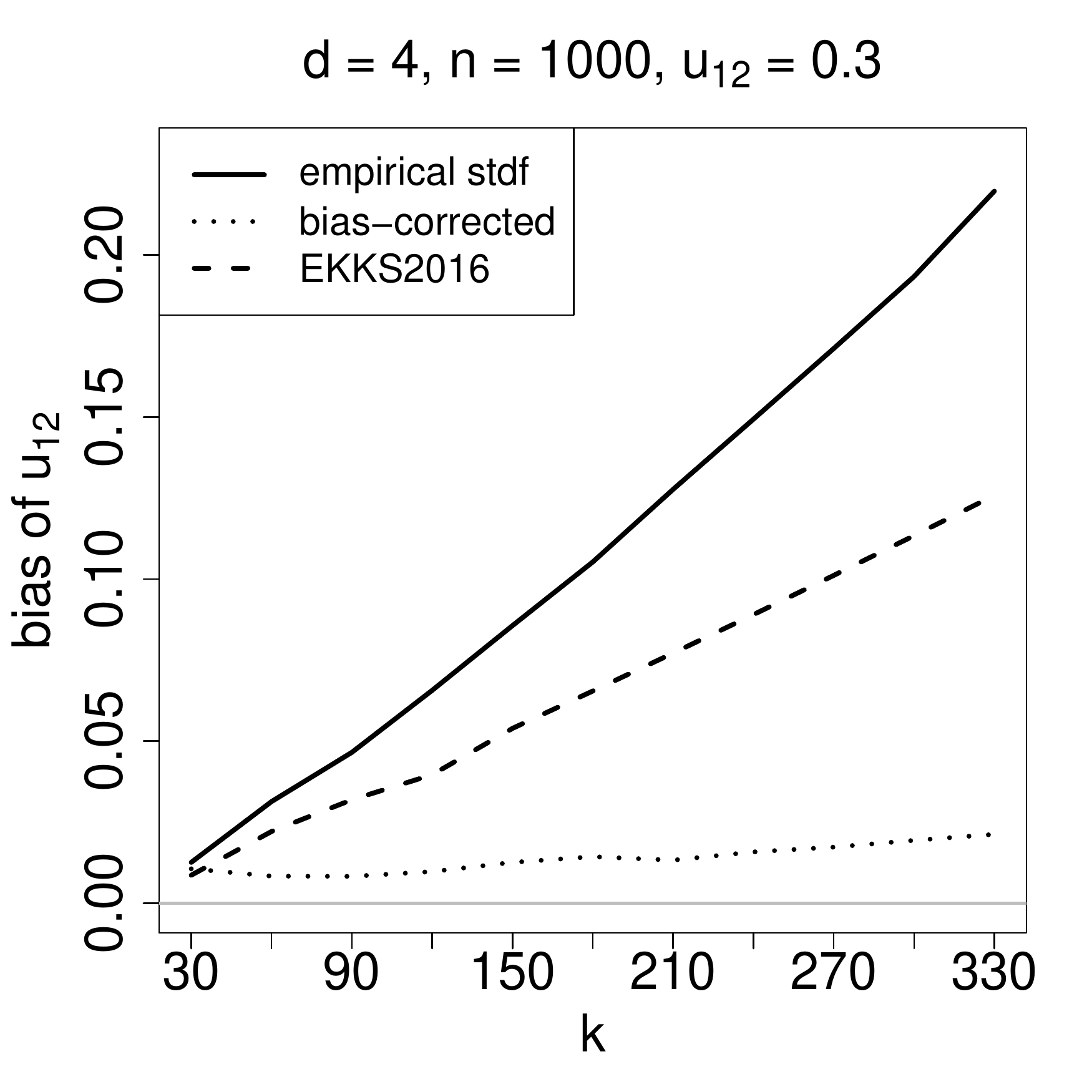}}
\subfloat{\includegraphics[width=0.3\textwidth]{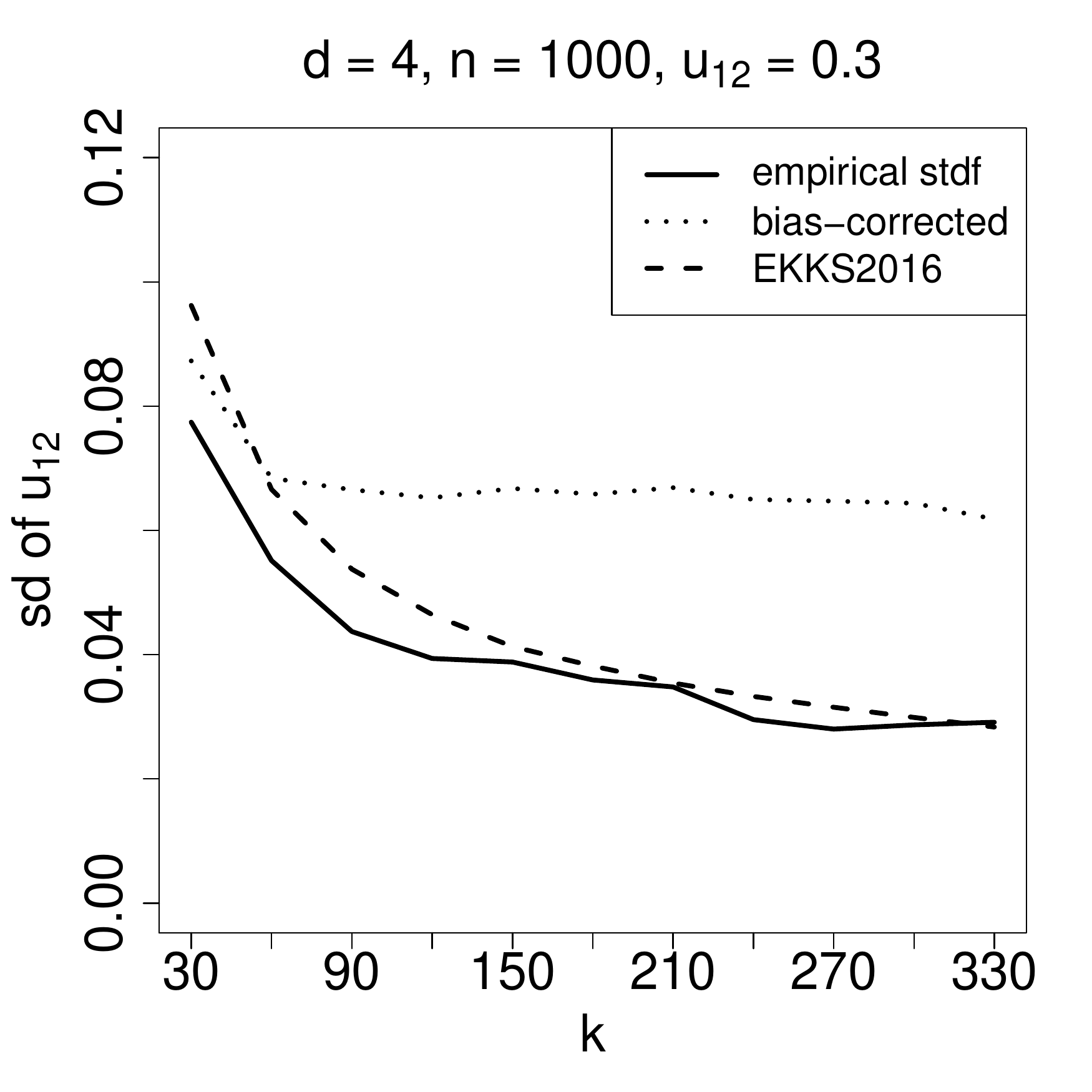}}
\subfloat{\includegraphics[width=0.3\textwidth]{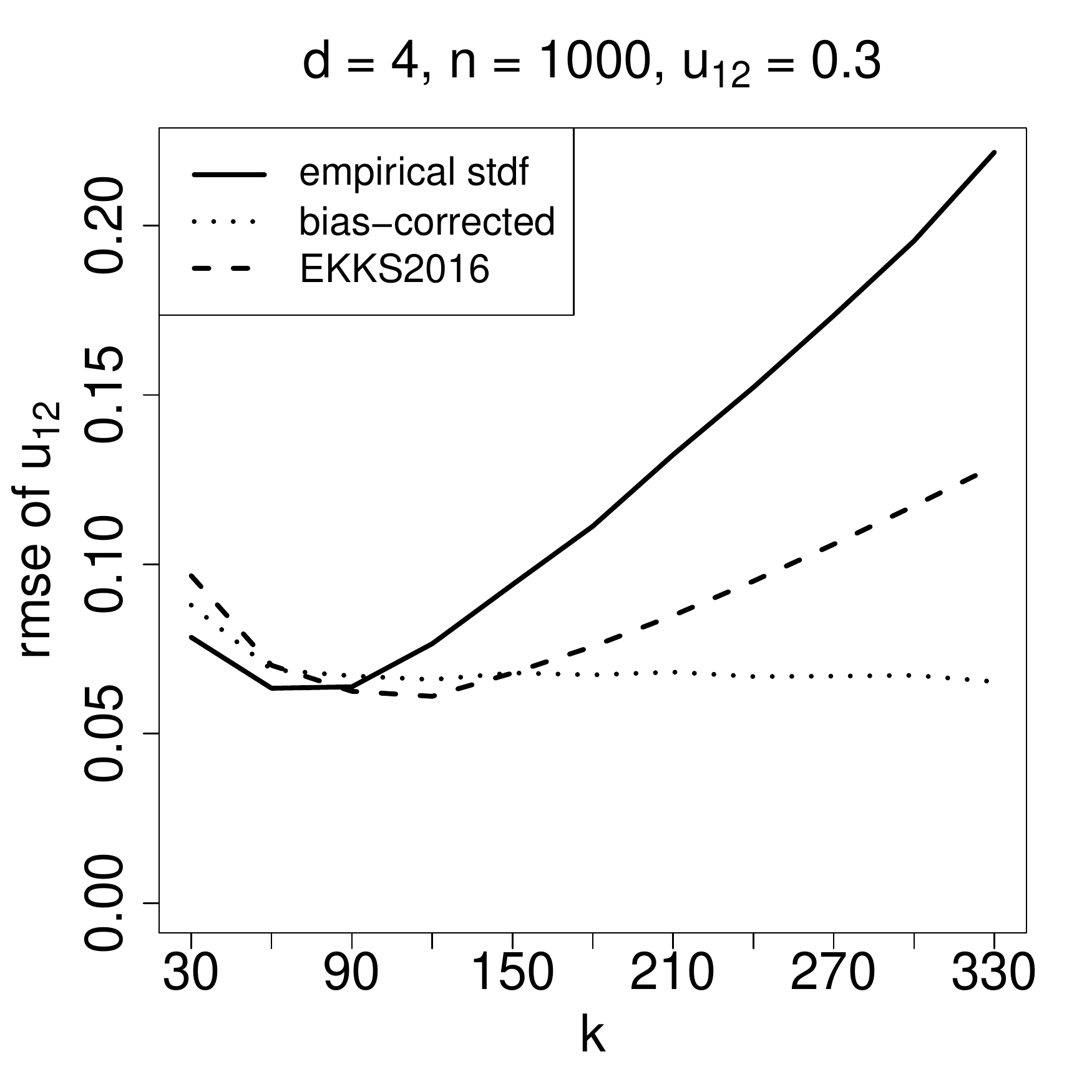}} \\
\subfloat{\includegraphics[width=0.3\textwidth]{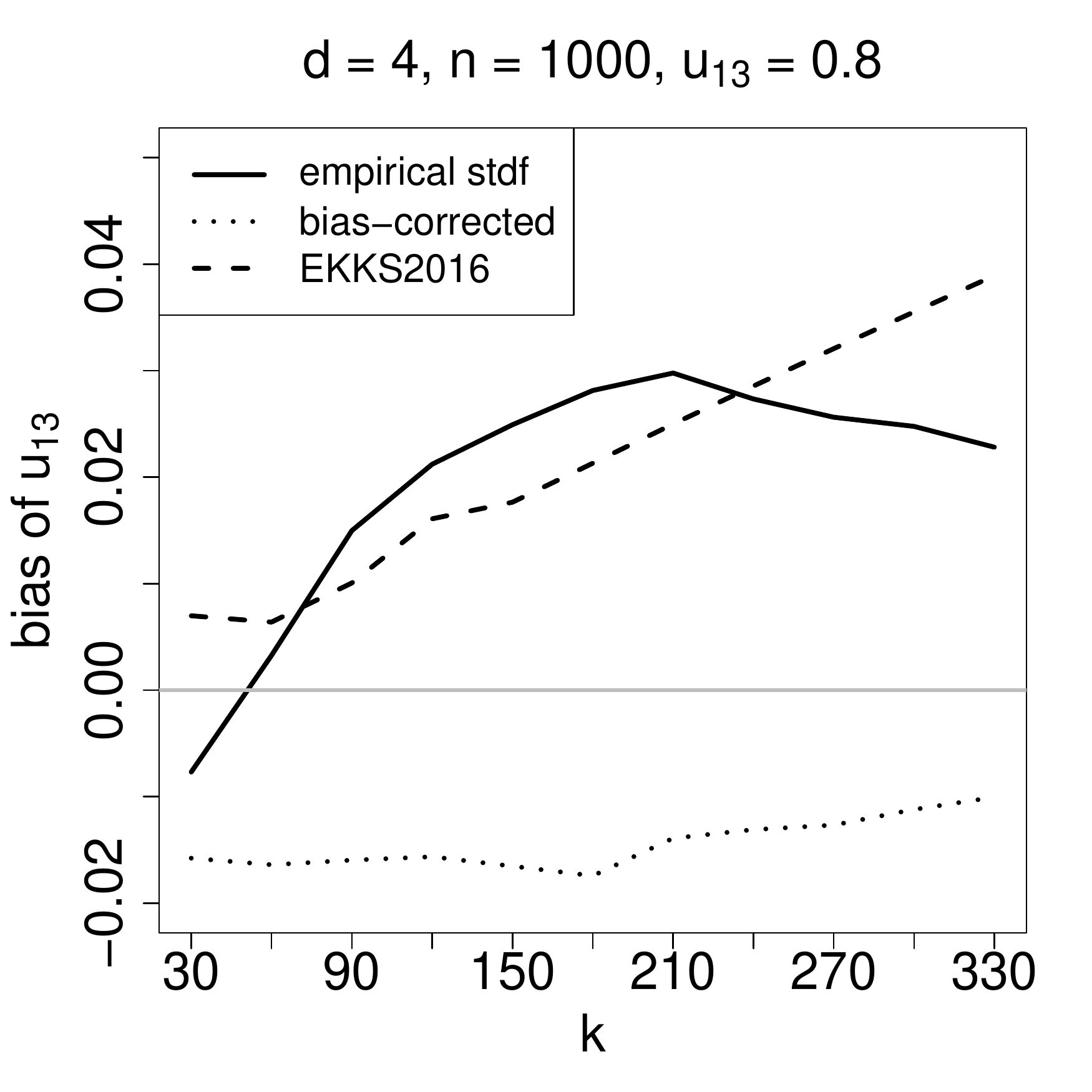}}
\subfloat{\includegraphics[width=0.3\textwidth]{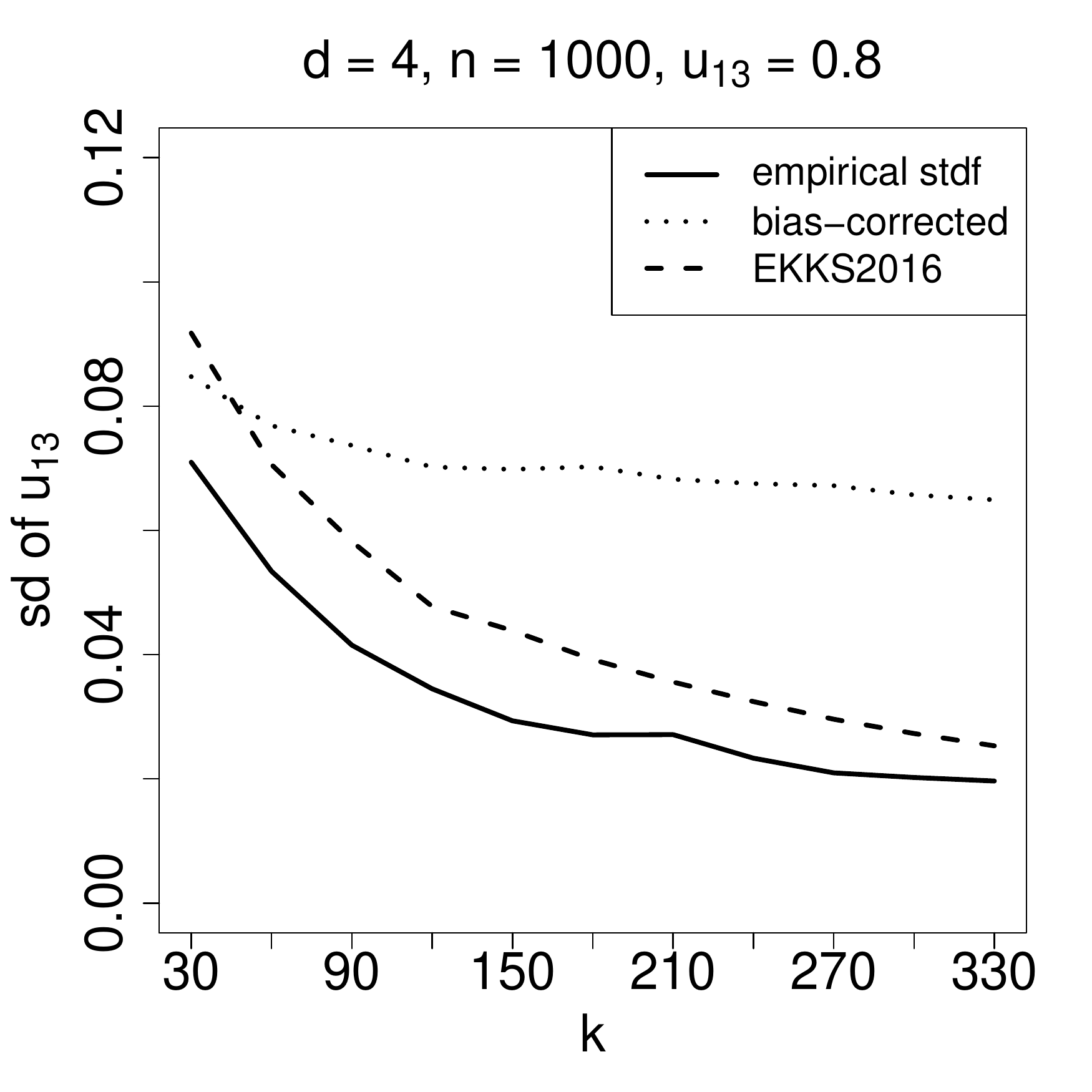}}
\subfloat{\includegraphics[width=0.3\textwidth]{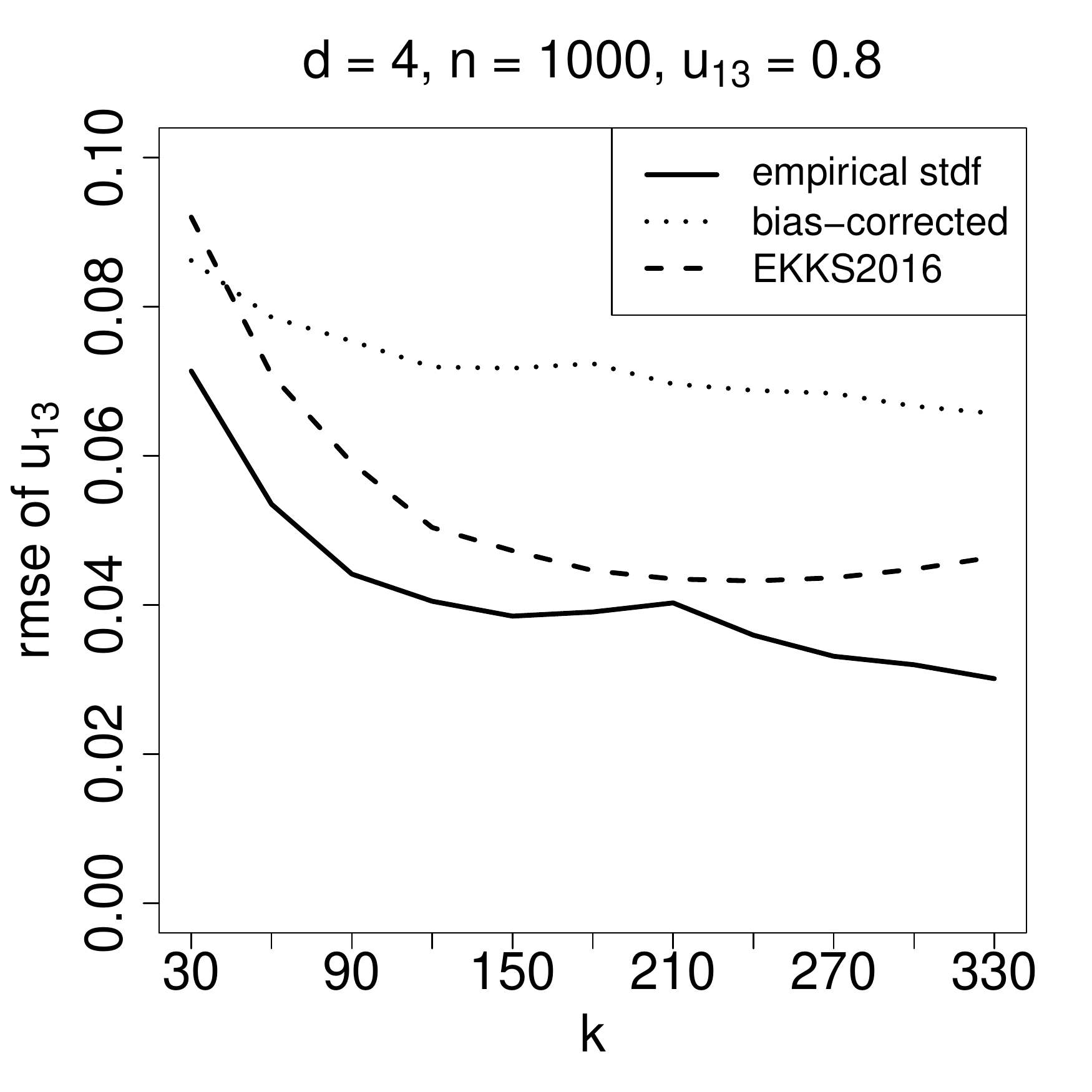}} \\
\subfloat{\includegraphics[width=0.3\textwidth]{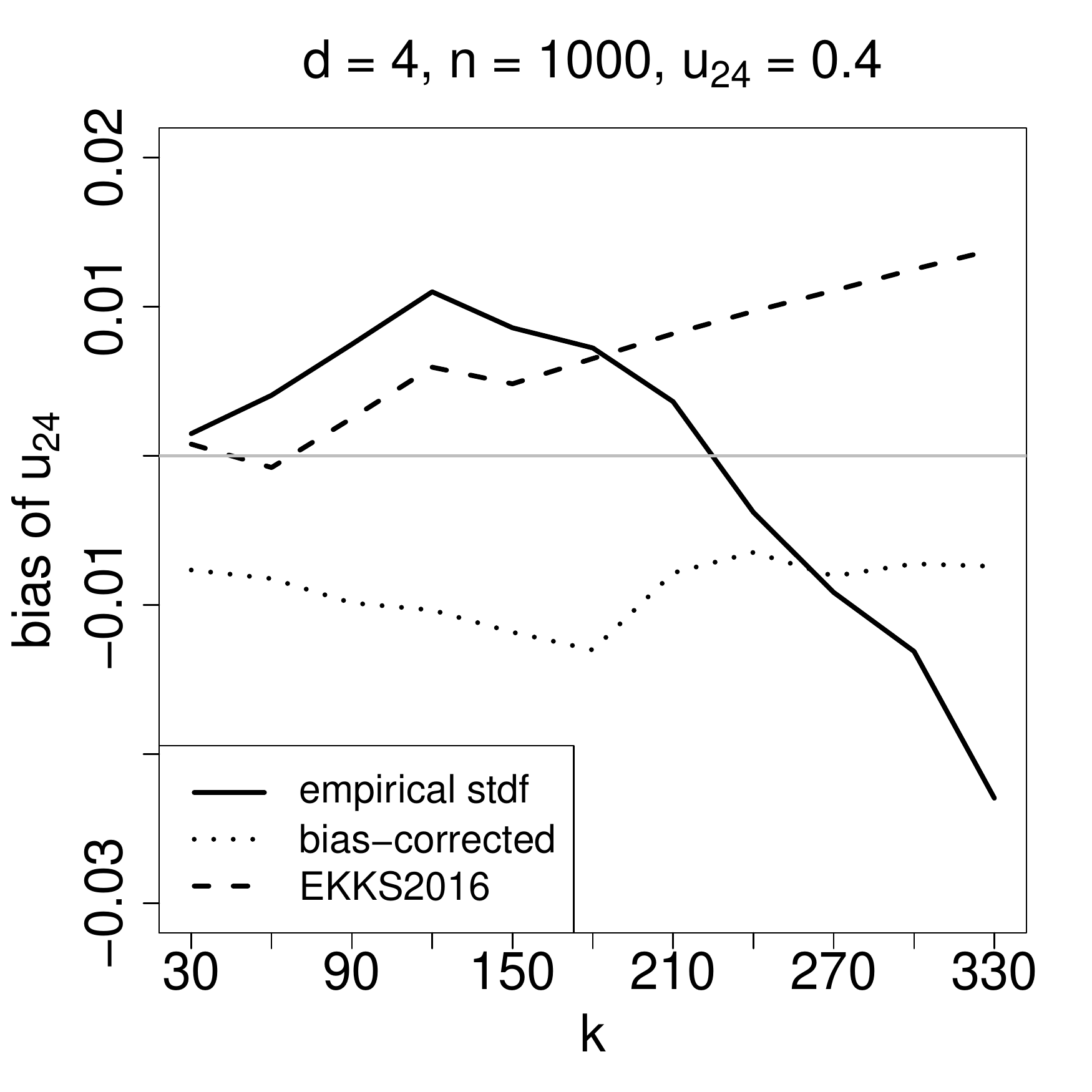}}
\subfloat{\includegraphics[width=0.3\textwidth]{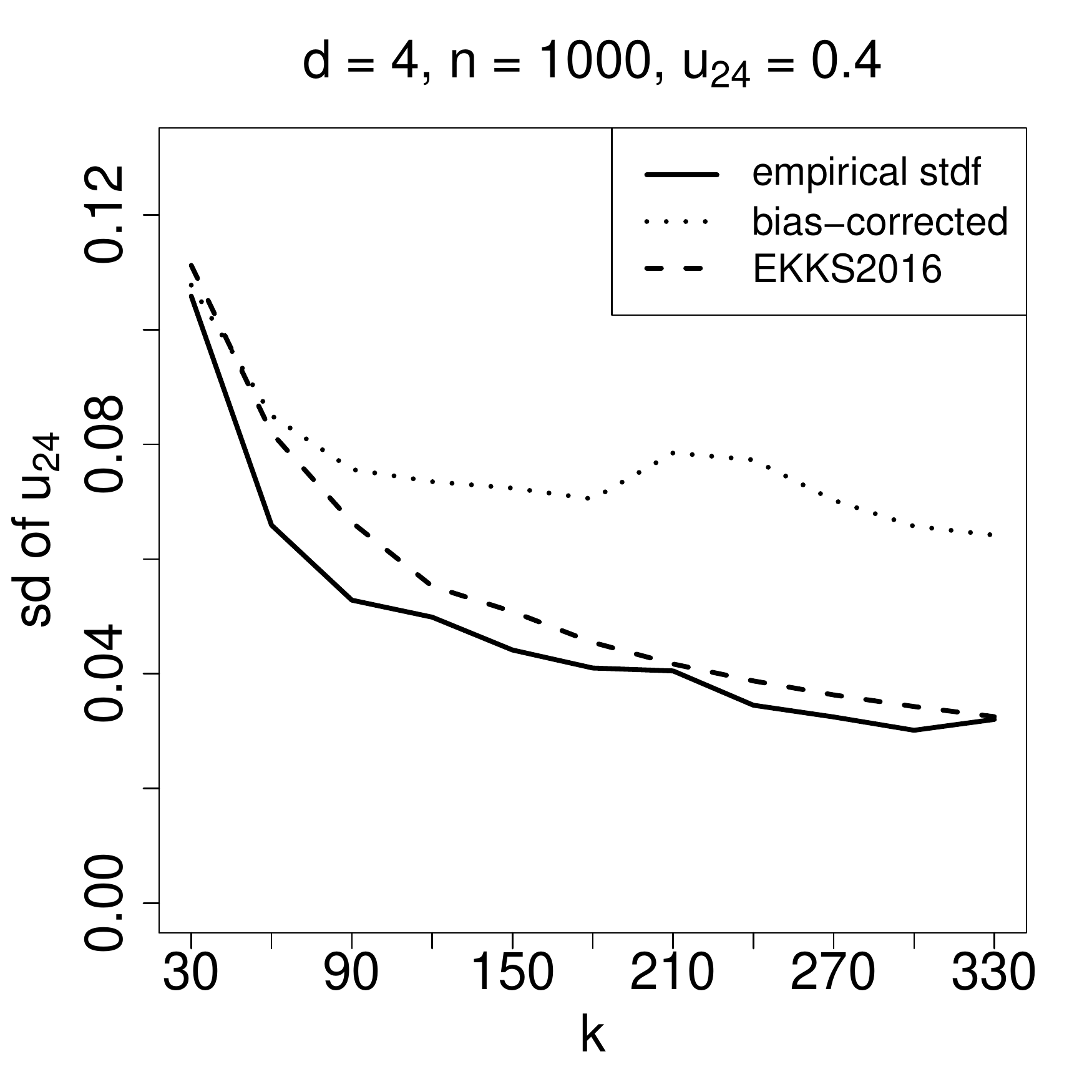}}
\subfloat{\includegraphics[width=0.3\textwidth]{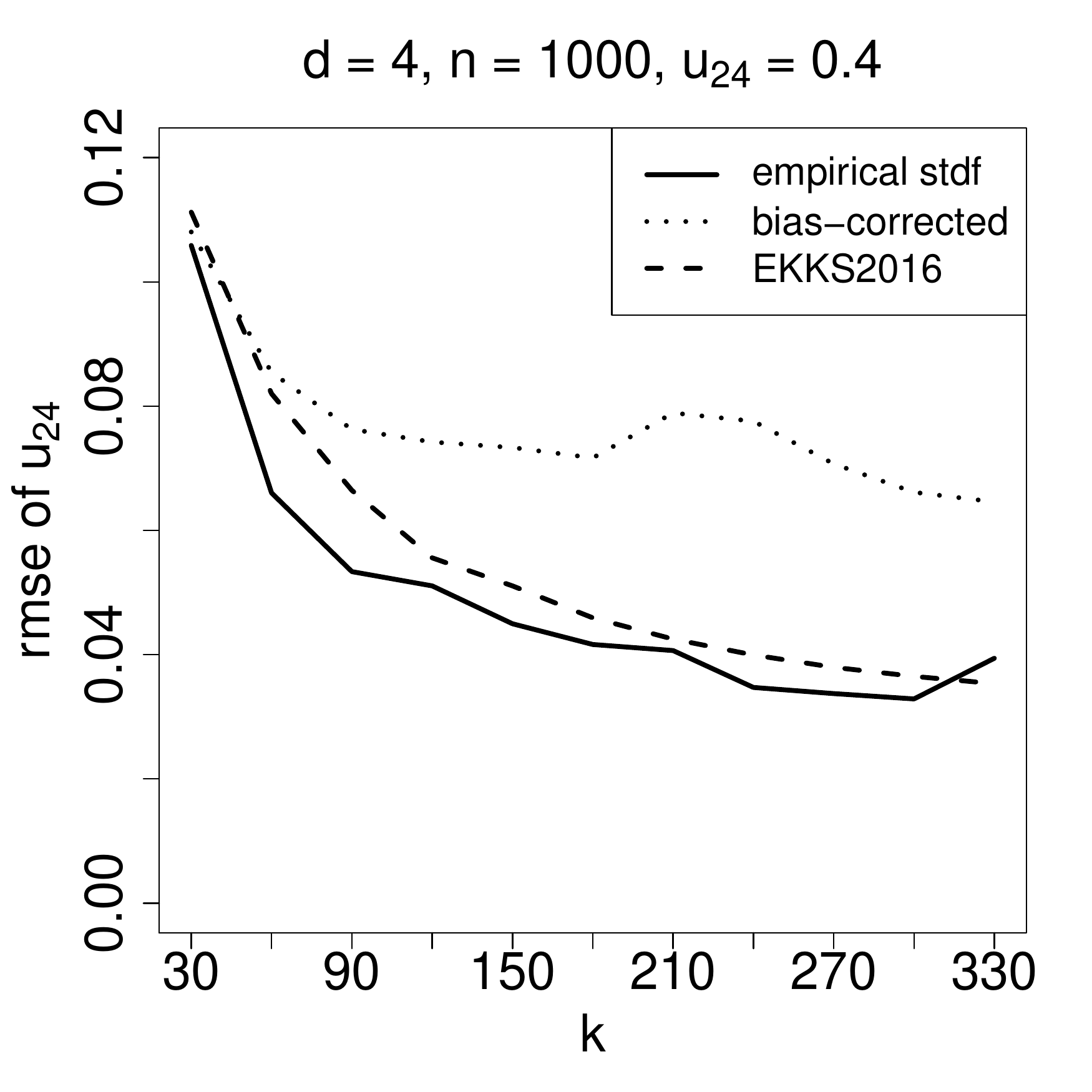}} \\
\subfloat{\includegraphics[width=0.3\textwidth]{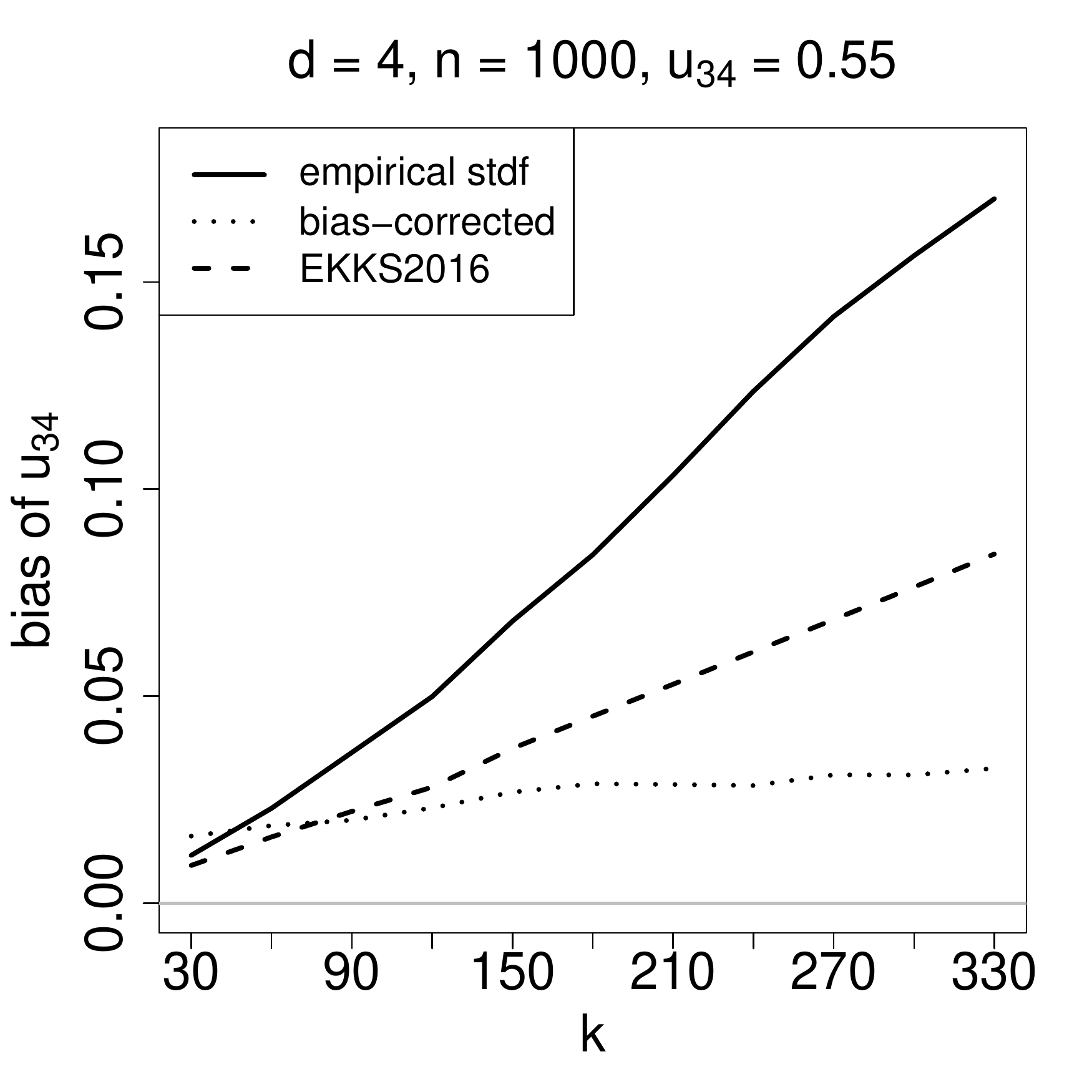}}
\subfloat{\includegraphics[width=0.3\textwidth]{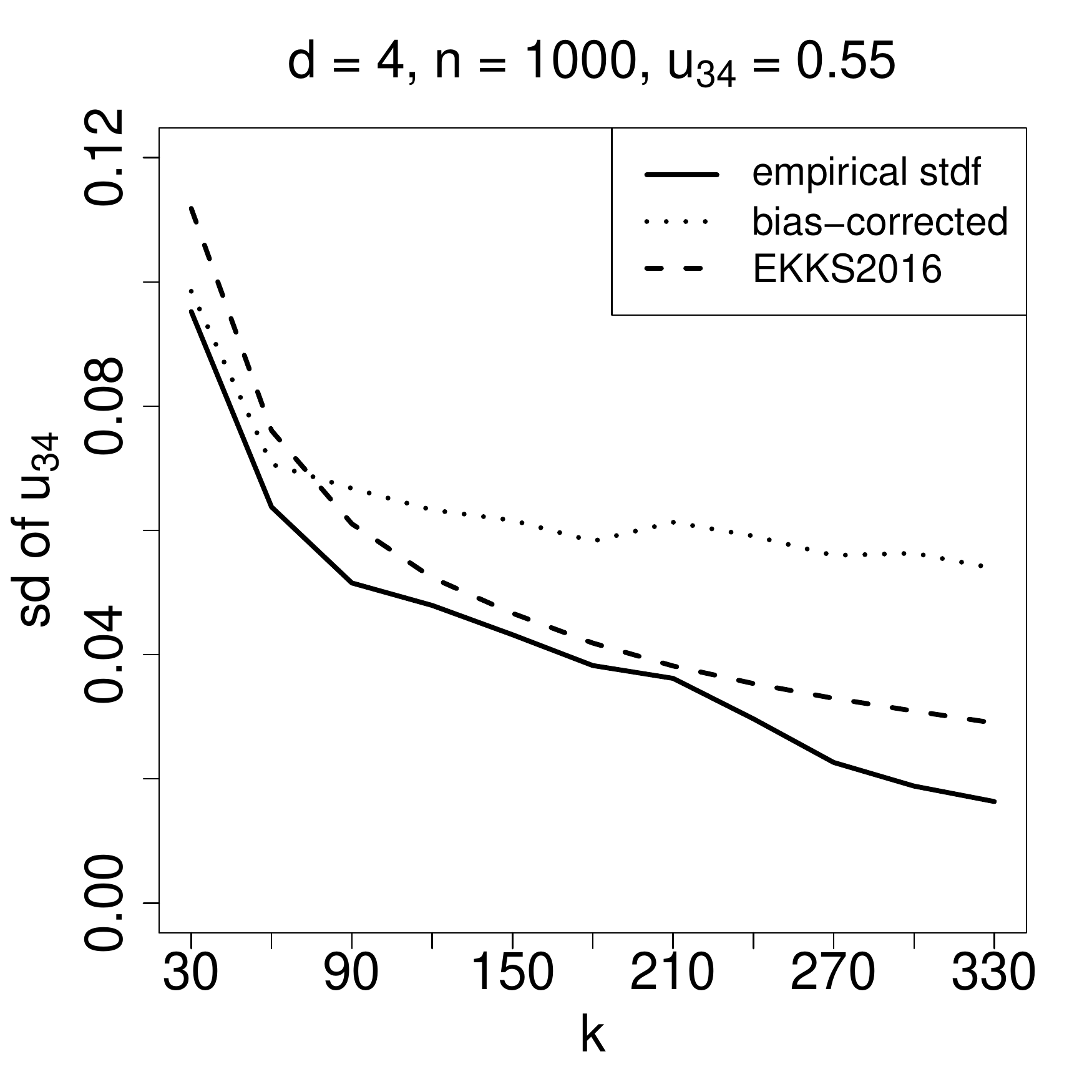}}
\subfloat{\includegraphics[width=0.3\textwidth]{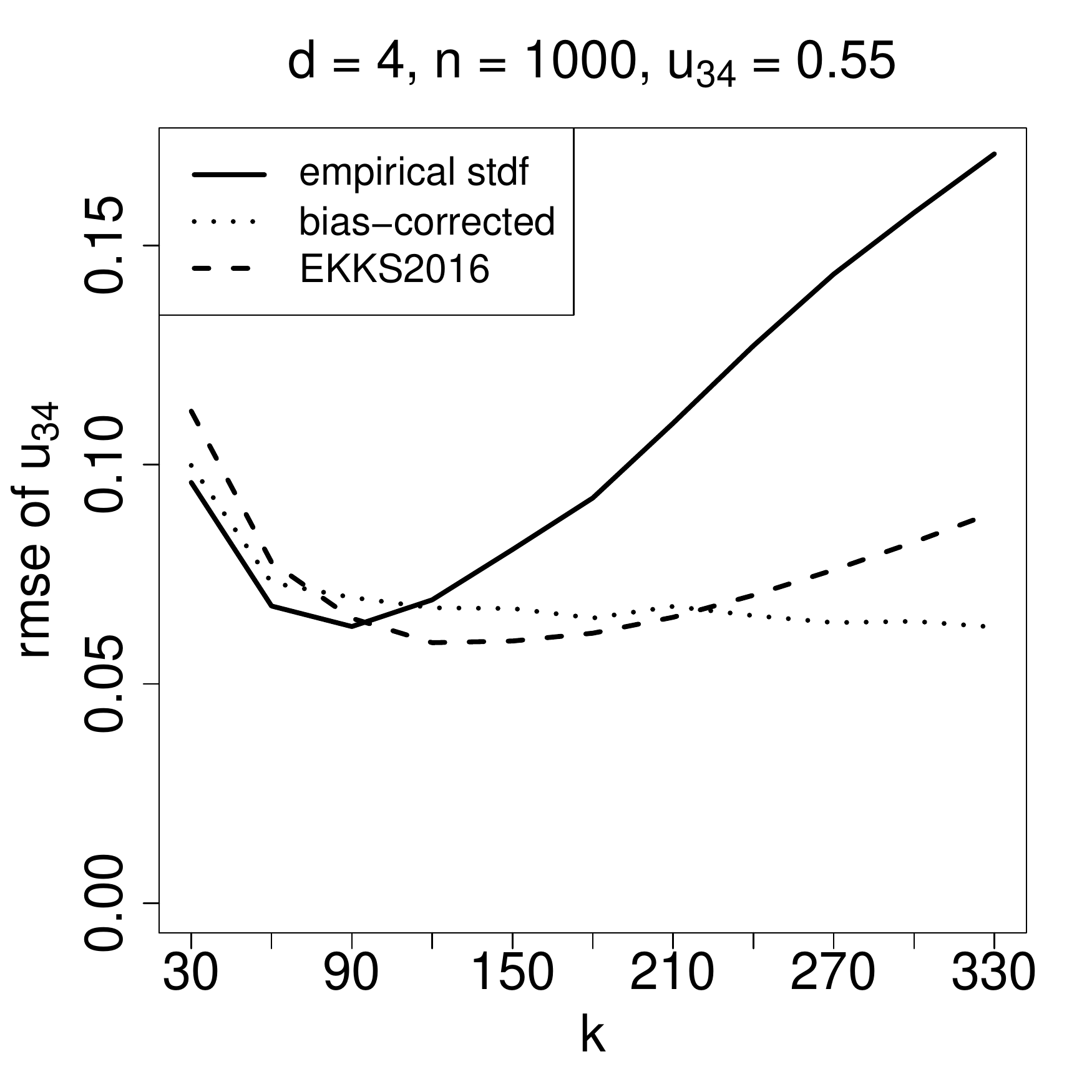}} 
\caption{Max-linear structural equation model based on a directed acylic graph: bias, standard deviation and RMSE for the estimators; $300$ samples of size $n = 1000$.}
\label{fig:maxlin2}
\end{figure}

\begin{rem}
For the weight matrix, we actually defined $\Omega(\theta)$ as $( \Sigma(\theta) + c I_q )^{-1}$ for some small $c > 0$. The reason for applying such a Tikhonov correction is that some eigenvalues of $\Sigma(\theta)$ are (near) zero, which can in turn be due to the fact that for max-linear models such as here, $\ell( c_m; \theta )$ may hit its lower bound $\max(c_{m,1}, \ldots, c_{m,d})$ for some $m \in \{1, \ldots, q\}$. 
\end{rem}

\subsection{Goodness-of-fit test}
We compare the performance of the goodness-of-fit test presented in Corollary~\ref{cor:GoF:Mopt} to the three goodness-of-fit test statistics $\kappa_n$, $\omega_n^2$, and $A_n^2$ proposed in \citet[page 18]{can2015}. In the simulation study there, the observed rejection frequencies are reported at the $5 \%$ significance level under null and alternative hypotheses for two bivariate models for $\ell$; a bivariate logistic model with $\theta \in (0,1)$ and 
\begin{align*}
\ell(x_1,x_2 ; \psi) = (1 - \psi) (x_1 + x_2) + \psi \sqrt{x_1^2 + x_2^2}, \qquad \psi \in (0,1),
\end{align*}  
i.e., a mixture between the logistic model and tail independence. For both models, they generate 300 samples of size $n = 1500$ from a ``null hypothesis" distribution function, for which the model is correct, and 100 samples of $n = 1500$ from an ``alternative hypothesis" distribution function, for which the model is incorrect. These distribution functions are described in equations (32), (33),  (35), and (36) of \citet{can2015}. We take $c_m \in \{(1/2,1/2), (1/2,1), (1,1/2), (1,1)\}$, $m =1,\ldots,4$, and $k = 200$.

Table \ref{tab:GoF} shows the observed fractions of Type I errors under the null hypotheses and the observed fraction of rejections under the alternative hypotheses. The results for $\kappa_n$, $\omega_n^2$, and $A_n^2$ are taken from \citet[Table 1]{can2015}. We see that our goodness-of-fit test performs comparably to the test statistics in \citet{can2015}.

\begin{table}[ht]
\centering
\begin{tabular}{lcccc}
\multicolumn{1}{c}{\text{}} & \multicolumn{2}{c}{Null} & \multicolumn{2}{c}{Alternative} \\
\cmidrule(r){2-3}
\cmidrule(r){4-5}
\text{ } & logistic  & mixture  & logistic & mixture \\
\midrule
$\kappa_n$ & 19/300 & 9/300 & 92/100 & 97/100 \\
$\omega_n^2$ & 11/300 & 13/300 & 90/100 & 97/100 \\
$A_n^2$ & 17/300 & 18/300 & 95/100 & 100/100 \\
$k f_{n,k} (\widehat{\theta}_{n,k})$ & 16/300 & 14/300 & 100/100 & 82/100 \\
\bottomrule
\end{tabular}
  \caption{Observed rejection frequencies at the $5 \%$ significance level under null and alternative hypotheses.}
  \label{tab:GoF}
\end{table}

It should be noted that the tests are of very different nature. The three test statistics in \citet{can2015} are functionals of a transformed empirical process and are therefore of omnibus-type. The results in there are based on the full max-domain of attraction condition on $F$ and the procedure is computationally complicated and therefore difficult to apply in dimensions (much) higher than two. The present test only performs comparisons at $q$ points and avoids integration. Therefore it is computationally much easier to apply in dimension $d > 2$.

\section{Tail dependence in European stock markets}
\label{sec:application}

We analyze data from the EURO STOXX 50 Index, which represents the performance of the largest 50 companies among 19 different ``supersectors" within the 12 main Eurozone countries. Since Germany (DE) and France (FR) together form $68 \%$ of the index, we will focus on these two countries only. Every company belongs to a supersector, of which there are 19 in total. We select two of them as an illustration: chemicals and insurance. We study the following five stocks: Bayer (DE, chemicals), BASF (DE, chemicals), Allianz (DE, insurance), Airliquide (FR, chemicals), and Axa (FR, insurance), and we take the weekly negative log-returns of the stock prices of these companies from Yahoo Finance\footnote{\url{http://finance.yahoo.com/}} for the period January 2002 to November 2015, leading to a sample of size $n=711$.
 
We fit a structural equation model based on the directed acyclic graph given in Figure~\ref{fig:app1}. The nodes \textsf{DE} and \textsf{FR} are represented by their national stock market indices, the DAX and the CAC40, respectively, and the nodes \textsf{chemicals} and \textsf{insurance} are represented by corresponding sub-indices of the EURO STOXX 50 Index. Note that this is a model for the tail dependence function only, i.e., we only assume that the joint distribution of the negative log-returns has tail dependence function $\ell$ as in \eqref{eq:mlstdf} with coefficient matrix $B$ given in Table~\ref{tab:Bmatrix}. We have $d = 10$ and the parameter vector is given by $\theta = (u_{12}, u_{13}, u_{14}, u_{15}, u_{26}, u_{46}, u_{27}, u_{47}, u_{38}, u_{48}, u_{39}, u_{59}, u_{2,10}, u_{5,10})$.

We perform the goodness-of-fit test described in Corollary~\ref{cor:GoF:eigen}, based on the $q = 1140$ points $c_m$ in the grid $\{0, 1/2, 1\}^8$ having either two or three non-zero coordinates. We take $\Omega (\theta) = I_q$, $k = 40$, and we choose $s$ such that $\nu_s > 0.1$, leading in this case to $s = 11$. The value of the test statistic is $5.28$; the $95 \% $ quantile of a $\chi^2_{11}$ distribution is $19.68$, so that the tail dependence model is not rejected.

The resulting parameter estimates are pictured at the edges of Figure~\ref{fig:app1}, where the relative width of each edge is proportional to its parameter value. The standard errors are given in parentheses. We note that, except for Allianz, the influence of the stock market indices DAX and CAC40 is (much) stronger than the influence of the sector indices chemicals and insurance.

\begin{figure}[ht]
\centering
\includegraphics[width=0.7\textwidth]{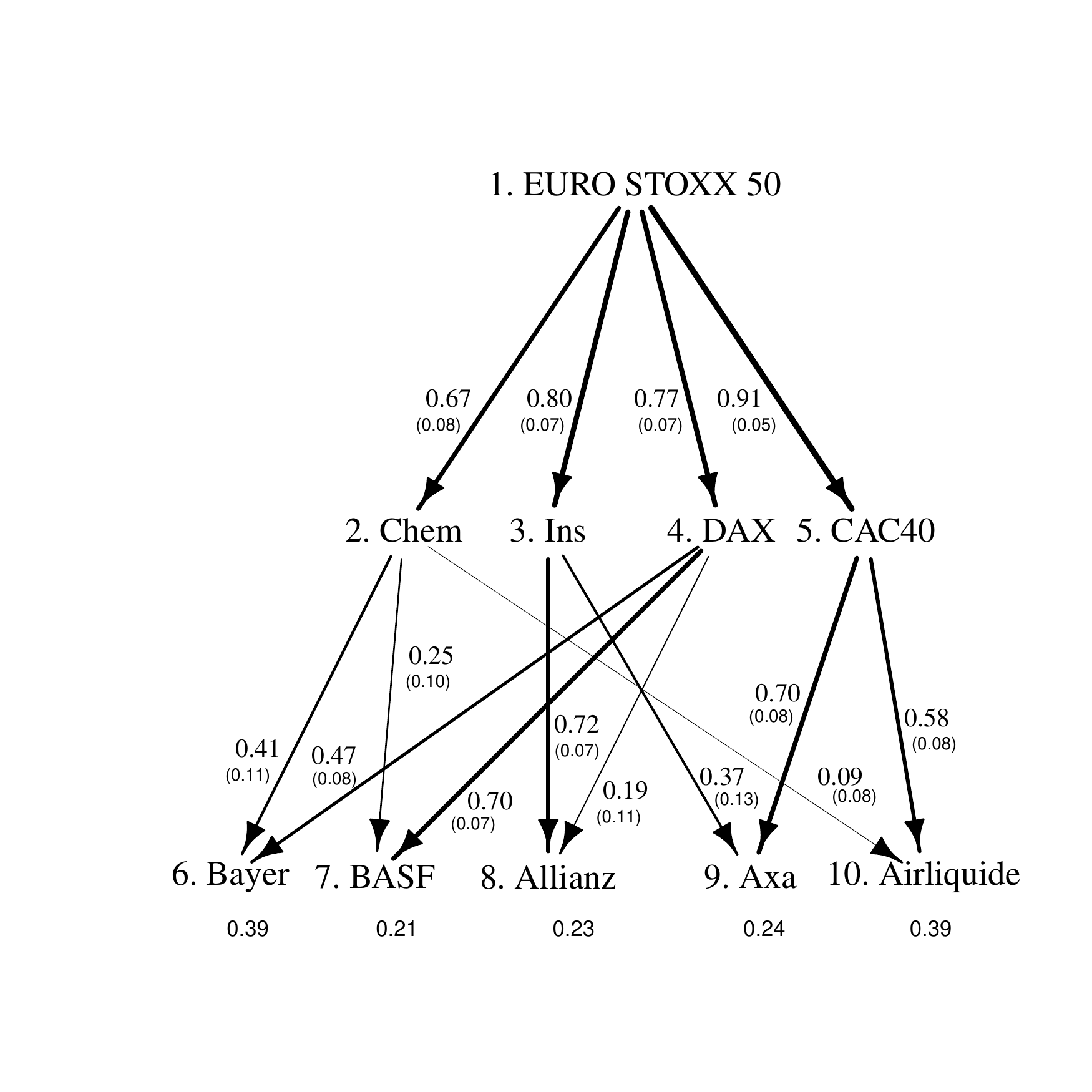}
\caption{European stock market data: directed acyclic graph with 14 parameters, whose estimates are shown near the corresponding edges. The relative width of each edge is proportional to its parameter value. The bottom row shows the estimated diagonal elements $u_6,\ldots,u_{10}$ of the matrix $B$ in Table \ref{tab:Bmatrix}.}
\label{fig:app1}
\end{figure}

\begin{table}[ht]
\begin{equation*}
B = \begin{pmatrix}
1 & 0 & 0 & 0 & 0 & 0 & 0 & 0 & 0 & 0 \\
u_{12} & u_2 & 0 & 0 & 0 & 0 & 0 & 0 & 0 & 0 \\
u_{13} & 0 & u_3 & 0 & 0 & 0 & 0 & 0 & 0 & 0 \\
u_{14} & 0 & 0 & u_4 & 0 & 0 & 0 & 0 & 0 & 0 \\
u_{15} & 0 & 0 & 0 & u_5 & 0 & 0 & 0 & 0 & 0 \\
u_{12} u_{26} \vee u_{14} u_{46} & u_2 u_{26} & 0 & u_4 u_{46} & 0 & u_6 & 0 & 0 & 0 & 0 \\
u_{12} u_{27} \vee u_{14} u_{47} & u_2 u_{27} & 0 & u_4 u_{47} & 0 & 0 & u_7 & 0 & 0 & 0 \\
u_{13} u_{38} \vee u_{14} u_{48} & 0 & u_3 u_{38} & u_4 u_{48} & 0 & 0 & 0 & u_8 & 0 & 0 \\
u_{13} u_{39} \vee u_{15} u_{59} & 0 & u_3 u_{39} & 0 & u_5 u_{59} & 0 & 0 & 0 & u_9 & 0 \\
u_{12} u_{2,10} \vee u_{15} u_{5,10} & u_2 u_{2,10} & 0 & 0 & u_5 u_{5,10} & 0 & 0 & 0 & 0 & u_{10}
\end{pmatrix}
\end{equation*}
\caption{European stock market data: coefficient matrix of the max-linear model stemming from the directed acyclic graph in Figure~\ref{fig:app1}. The diagonal elements $u_i$, for $i = 2,\ldots,10$ are such that the rows sum up to one.}
\label{tab:Bmatrix}
\end{table}

\appendix
\section{Proofs}\label{sec:proofs}
\begin{proof}[Proof of Theorem~\ref{thm:consistency}]
This proof follows the same lines as the one of \citet[Proof of Theorem 1]{einmahl2016}.
Let $\varepsilon_0 > 0$ be such that the closed ball $B_{\varepsilon_0} (\theta_0) = \{\theta : \left\Vert \theta - \theta_0 \right\Vert \leq \varepsilon_0 \}$ is a subset of $\Theta$; such an $\varepsilon_0$ exists since $\theta_0$ is an interior point of $\Theta$. Fix $\varepsilon > 0$ such that $0 < \varepsilon \leq \varepsilon_0$. Let, more precisely than in \eqref{eq:estimator}, $\Htheta$ be the \textit{set} of minimizers of the right-hand side of \eqref{eq:estimator}.
We show first that
\begin{equation}
\label{eq:consistency}
  \PP[
    \Htheta \neq \emptyset
    \text{ and }
    \Htheta \subset B_{\varepsilon} (\theta_0)
  ]
  \rightarrow 1, \qquad n \to \infty.
\end{equation}
Because $L$ is a homeomorphism, there exists $\delta >0$ such that for $\theta \in \Theta$, $\left\Vert L(\theta) - L(\theta_0) \right\Vert \leq \delta$ implies
$ \left\Vert \theta - \theta_0 \right\Vert \leq \varepsilon$.
Equivalently, for every $\theta \in \Theta$ such that $\left\Vert \theta - \theta_0 \right\Vert > \varepsilon$ we have $\left\Vert L(\theta) - L(\theta_0) \right\Vert > \delta$. Define the event
\begin{equation*}
A_n = \left\{ \Vert  L(\theta_0) - \widehat{L}_{n.k} \Vert < \frac{\delta\sqrt{\lambda_1}}{(1+\sqrt{\lambda_1})\max(1,\sqrt{\lambda_q(\theta_0)})}\right\}.
\end{equation*}
If $\theta \in \Theta$ is such that $\left\Vert \theta - \theta_0 \right\Vert > \varepsilon$, then on the event $A_n$, we have
\begin{align*}
\Vert D_{n,k} (\theta) \Vert_{\Omega(\theta)} & \geq
 \sqrt{\lambda_1(\theta)}
\Vert D_{n,k} (\theta) \Vert \\
& \geq  \sqrt{\lambda_1}  \Vert L(\theta_0) - L(\theta) - \left(  L (\theta_0) - \widehat{L}_{n,k} \right)   \Vert \\
& \geq \sqrt{\lambda_1} \left( \Vert L(\theta_0) - L (\theta) \Vert  - \Vert L(\theta_0) - \widehat{L}_{n,k} 	\Vert\right) \\
& >   \sqrt{\lambda_{1}}\left(\delta-\frac{\delta\sqrt{\lambda_1}}{1+\sqrt{\lambda_1}}\right)=\frac{\delta\sqrt{\lambda_1}}{1+\sqrt{\lambda_1}}.
\end{align*}
It follows that on $A_n$,
\begin{align*}
  \inf_{\theta: \Vert \theta - \theta_0 \Vert > \varepsilon}
  \Vert D_{n,k} (\theta) \Vert_{\Omega(\theta)}
  &\geq \frac{\delta\sqrt{\lambda_1}}{1+\sqrt{\lambda_1}} > \sqrt{\lambda_q(\theta_0)}\Vert L(\theta_0) - \widehat{L}_{n,k} \Vert \\ &\geq \Vert L(\theta_0) - \widehat{L}_{n,k} \Vert_{\Omega(\theta_0)}
  \geq \inf_{\theta: \Vert \theta - \theta_0 \Vert \leq \varepsilon}
  \Vert L(\theta) - \widehat{L}_{n,k} \Vert_{\Omega(\theta)}.
\end{align*}
The infimum on the right-hand side is actually a minimum since $L$ is continuous and $B_{\varepsilon} (\theta_0) $ is compact.
Hence on $A_n$ the set $\Htheta$ is non-empty and $\Htheta \subset B_{\varepsilon} (\theta_0)$.
To show \eqref{eq:consistency}, it remains to prove that $\mathbb{P}[A_n] \to 1$ as $n \to \infty$, but this follows from \eqref{ltol}.

Next we will prove that, with probability tending to one, $\Htheta$ has exactly one element, i.e., the function $f_{n,k}$ has a unique minimizer. To do so, we will show that there exists $\varepsilon_1 \in (0, \varepsilon_0]$ such that, with probability tending to one, the Hessian of $f_{n,k}$ is positive definite on $B_{\varepsilon_1}(\theta_0)$ and thus $f_{n,k}$ is strictly convex on $B_{\varepsilon_1}(\theta_0)$. In combination with \eqref{eq:consistency} for  $\varepsilon \in (0,\varepsilon_1]$, this will yield the desired conclusion.

For $\theta \in \Theta$, define the symmetric $p \times p$ matrix $\Hessian(\theta; \theta_0)$ by
\begin{align*}
  \left( \Hessian(\theta; \theta_0) \right)_{i,j} :=
&  \, 2  \left( \frac{\partial L (\theta)}{\partial \theta_j} \right)^T \Omega(\theta) \left( \frac{\partial L (\theta)}{\partial \theta_i}  \right) - 2 \left( \frac{\partial^2 L (\theta)}{\partial \theta_j \partial \theta_i}  \right)^T \, \Omega(\theta) \,
  \bigl( L(\theta_0) - L (\theta) \bigr) \\
  &  - 2 \left( \frac{\partial L (\theta)}{ \partial \theta_i}  \right)^T \, \frac{\partial\Omega(\theta)}{\partial \theta_j} \,
  \bigl( L(\theta_0) - L (\theta) \bigr)
  - 2 \left( \frac{\partial L (\theta)}{ \partial \theta_j}  \right)^T \, \frac{\partial\Omega(\theta)}{\partial \theta_i} \,
  \bigl( L(\theta_0) - L (\theta) \bigr) \\
  & +   \bigl( L(\theta_0) - L (\theta) \bigr)^T \,   \frac{\partial^2\Omega(\theta)}{\partial \theta_j\partial \theta_i}   \,
  \bigl( L(\theta_0) - L (\theta) \bigr),
\end{align*}
for $i, j \in \{1, \ldots, p\}$. The map $\theta \mapsto \Hessian(\theta; \theta_0)$ is continuous and
\begin{equation}\label{H00}
    \Hessian(\theta_0; \theta_0)
  = 2 \, \dot{L}(\theta_0)^T \, \Omega(\theta_0) \, \dot{L}(\theta_0),
\end{equation}
is a positive definite matrix. This $p \times p$ matrix is non-singular, since the $q \times q$ matrix $\Omega(\theta_0)$ is non-singular and the $q \times p$ matrix $\dot{L}(\theta_0)$ has rank $p$ (recall $q \ge p$). Let $\lVert \, \cdot \, \rVert$ denote the spectral  norm of a matrix. From Weyl's perturbation theorem
\citep[page~145]{jiang2010}, there exists an $\eta > 0$ such that every symmetric matrix $A \in \RR^{p \times p}$ with $\norm{A - \Hessian(\theta_0; \theta_0)} \leq \eta$ has positive eigenvalues and is therefore positive definite. Let $\varepsilon_1 \in (0, \varepsilon_0]$ be sufficiently small such that the second-order partial derivatives of $L$ and $\Omega$ are continuous on $B_{\varepsilon_1}(\theta_0)$ and such that $\norm{ \Hessian(\theta; \theta_0) - \Hessian(\theta_0; \theta_0)} \le \eta/2$ for all $\theta \in B_{\varepsilon_1}(\theta_0)$.

Let $\Hessian_{n,k,\Omega} (\theta) \in \mathbb{R}^{p \times p}$ denote the Hessian matrix of $f_{n,k}$. Its $(i,j)$-th element is
\begin{align*}
\bigl( \mathcal{H}_{n,k,\Omega} (\theta) \bigr)_{ij}
& = \frac{\partial^2}{\partial \theta_j \partial \theta_i}
\left[ D_{n,k} (\theta)^T \, \Omega(\theta) \, D_{n,k} (\theta) \right]\\
& = \frac{\partial}{\partial \theta_j}
\left[ - 2 D_{n,k} (\theta)^T \, \Omega(\theta) \frac{\partial L (\theta)}{\partial \theta_i} + D_{n,k} (\theta)^T \frac{\partial\Omega(\theta)}{\partial \theta_i} D_{n,k} (\theta) \right] \\
& = 2  \left( \frac{\partial L (\theta)}{\partial \theta_j} \right)^T \Omega(\theta) \left( \frac{\partial L (\theta)}{\partial \theta_i}  \right) - 2 \left( \frac{\partial^2 L (\theta)}{\partial \theta_j \partial \theta_i}  \right)^T \Omega(\theta) \,  D_{n,k} (\theta)\\
&\quad - 2 \left( \frac{\partial L (\theta)}{ \partial \theta_i}  \right)^T \, \frac{\partial\Omega(\theta)}{\partial \theta_j} \,
 D_{n,k} (\theta)
  - 2 \left( \frac{\partial L (\theta)}{ \partial \theta_j}  \right)^T \, \frac{\partial\Omega(\theta)}{\partial \theta_i} \,
 D_{n,k} (\theta)
  \\
  &\quad +   D_{n,k} (\theta)^T \,   \frac{\partial^2\Omega(\theta)}{\partial \theta_j\partial \theta_i}   \,
  D_{n,k} (\theta).
\end{align*}
Since $D_{n,k}(\theta) = \widehat{L}_{n,k} - L(\theta)$ and since $\widehat{L}_{n,k}$ converges in probability to $L(\theta_0)$, we obtain
\begin{equation}\label{eq:unicon}
  \sup_{\theta \in B_{\varepsilon_1}(\theta_0)}
  \norm{ \Hessian_{n,k,\Omega}(\theta) - \Hessian(\theta; \theta_0) }
  \pto 0,
  \qquad n \to \infty.
\end{equation}
By the triangle inequality, it follows that
\begin{equation}\label{conH}
  \Pr \biggl[ \sup_{\theta \in B_{\varepsilon_1}(\theta_0)}
  \norm{ \Hessian_{n,k,\Omega}(\theta) - \Hessian(\theta_0; \theta_0) } \leq \eta \biggr]
  \to 1,
  \qquad n \to \infty.
\end{equation}
In view of our choice for $\eta$, this implies that, with probability tending to one, $\Hessian_{n,k}(\theta)$ is positive definite for all $\theta \in B_{\varepsilon_1}(\theta_0)$, as required.
\end{proof}

\begin{proof}[Proof of Theorem~\ref{thm:an}]
Let $\nabla f_{n,k}(\theta)$, a $1 \times q$ vector, be the gradient of $f_{n,k}$ at $\theta$. By \eqref{eq:nonparcond}, we have
\begin{align}
\nonumber
  \sqrt{k} \, \nabla f_{n,k} (\theta_0)
  &= 
  - 2 \sqrt{k} \, D_{n,k} (\theta_0)^T \, \Omega(\theta_0) \, \dot{L} (\theta_0) 
  +
  \sqrt{k} D_{n,k} (\theta_0)^T \bigl(\nabla \Omega(\theta)|_{\theta=\theta_0} \bigr) D_{n,k} (\theta_0) \\
  &=
  - 2 \sqrt{k} \, D_{n,k} (\theta_0)^T \, \Omega(\theta_0) \, \dot{L} (\theta_0) + o_P(1),
  \qquad \text{as $n \to \infty$.}
\label{eq:nablafnk:expansion}
\end{align}
Since $\htheta$ is a minimizer of $f_{n,k}$, we have $\nabla f_{n,k} (\htheta) = 0$. An application of the mean value theorem to the function $t \mapsto \nabla f_{n,k} \bigl( \theta_0 + t (\htheta - \theta_0) \bigr)$ at $t = 0$ and $t = 1$ yields
\begin{equation}
\label{eq:nablafnk:zero}
  0
  = 
  \nabla f_{n,k}  (\htheta)^T 
  = 
  \nabla f_{n,k}  (\theta_0)^T 
  + 
  \Hessian_{n,k,\Omega}(\widetilde{\theta}_{n,k}) \, (\htheta - \theta_0),
\end{equation}
where $\widetilde{\theta}_{n,k}$ is a random vector on the segment connecting $\theta_0$ and $\htheta$ and $\Hessian_{n,k,\Omega}$ is the Hessian matrix of $f_{n,k}$ as in the proof of Theorem~\ref{thm:consistency}. Since $\htheta \pto \theta_0$, we have $\widetilde{\theta}_{n,k} \pto \theta_0$ as $n \to \infty$ too. By \eqref{eq:unicon} and \eqref{H00} and continuity of $\theta \mapsto \Hessian(\theta; \theta_0)$, it then follows that 
\begin{equation}
\label{eq:Hessiannk:limit}
  \mathcal{H}_{n,k,\Omega} (\widetilde{\theta}_{n,k}) 
  \pto \mathcal{H} (\theta_0; \theta_0) 
  = 2 \dot{L}(\theta_0)^T \, \Omega(\theta_0) \, \dot{L}(\theta_0),
  \qquad \text{as $n \to \infty$}.
\end{equation}
Since $\mathcal{H} (\theta_0 ; \theta_0)$ is non-singular, the matrix $\mathcal{H}_{n,k,\Omega} (\widetilde{\theta}_{n,k})$ is non-singular with probability tending to one as well.  Combine equations~\eqref{eq:nablafnk:expansion}, \eqref{eq:nablafnk:zero} and \eqref{eq:Hessiannk:limit} to see that
\begin{align*}
  \sqrt{n} \bigl( \htheta - \theta_0 \bigr)
  &= 
  - \Hessian_{n,k,\Omega}(\widetilde{\theta}_{n,k})^{-1}
  \, \sqrt{k} \, \nabla f_{n,k}(\theta_0)^T + o_p(1) \\
  &=
  \bigl( \dot{L}(\theta_0)^T \Omega(\theta_0) \dot{L}(\theta_0) \bigr)^{-1} \, 
  \dot{L}(\theta_0)^T \Omega(\theta_0) \, 
  \sqrt{k} \, D_{n,k}(\theta_0) + o_p(1),
  \qquad \text{as $n \to \infty$.}
\end{align*}
Convergence in distribution to the stated normal distribution follows from \eqref{eq:nonparcond} and Slutsky's lemma.
\end{proof}

\begin{proof}[Proof of Corollary~\ref{cor:GoF}]
Since $D_{n,k}( \theta ) = \widehat{L}_{n,k} - L(\theta)$, we have
\[
  \sqrt{k} \, D_{n,k}( \htheta )
  =
  \sqrt{k} \, D_{n,k}( \theta_0 )
  -
  \sqrt{k} \bigl( L( \htheta ) - L( \theta_0 ) \bigr).
\]
By \eqref{eq:theta:an} and the delta method, we have
\begin{align*}
  \sqrt{k} \bigl( L( \htheta ) - L( \theta_0 ) \bigr)
  &=
  \dot{L} \, \sqrt{k} ( \htheta - \theta_0 ) + o_p(1) \\
  &=
  \dot{L} \, ( \dot{L}^T \Omega \dot{L} )^{-1} \dot{L}^T \Omega \, \sqrt{k} \, D_{n,k}(\theta_0)
  + o_p(1) \\
  &=
  P(\theta_0) \, \sqrt{k} \, D_{n,k}( \theta_0 ) + o_p(1),
  \qquad \text{as $n \to \infty$,}
\end{align*}
where $\dot{L}$ and $\Omega$ are evaluated at $\theta_0$. Combination of the two previous displays yields
\[
  \sqrt{k} \, D_{n,k}( \htheta )
  =
  (I_q - P(\theta_0)) \, \sqrt{k} \, D_{n,k}(\theta_0) + o_p(1),
  \qquad \text{as $n \to \infty$.}
\]
By \eqref{eq:nonparcond} and Slutsky's lemma, we arrive at \eqref{eq:Dnk:GoF}, as required. 

The $q \times q$ matrix $P$ has rank $p$ since the $q \times p$ matrix $\dot{L}$ has rank $p$ and the $q \times q$ matrix $\Omega$ is non-singular. 
Since $P^2 = P$, it follows that rank$(I_q - P) = $ rank$(I_q) - $rank$(P) = q-p.$ 
\end{proof}

\begin{proof}[Proof of Corollary~\ref{cor:GoF:Mopt}]
Equation \eqref{eq:nonparcond} can be written as
\[
  Z_{n,k} 
  := 
  \sqrt{k} \, D_{n,k}(\theta_0) \dto Z 
  \sim 
  \Normal_q( 0, \Sigma(\theta_0) \bigr), 
  \qquad \text{as $n \to \infty$.}
\]
In view of \eqref{eq:Dnk:GoF} and $\Omega(\theta) = \Sigma(\theta)^{-1}$, we find, by Slutsky's lemma and the continuous mapping theorem,
\begin{align*}
  k \, f_{n,k}( \htheta )
  &=
  k \, D_{n,k}( \htheta )^T \, \Sigma( \htheta )^{-1} \, D_{n,k}( \htheta ) \\
  &=
  Z_{n,k}^T \, 
  (I_q - P(\theta_0))^T \, \Sigma( \htheta )^{-1} \, (I_q - P(\theta_0)) \,
  Z_{n,k}
  + o_p(1) \\
  &\dto
  Z^T \, (I_q - P(\theta_0))^T \, \Sigma( \theta_0 )^{-1} \, (I_q - P(\theta_0)) \, Z,
  \qquad \text{as $n \to \infty$;}
\end{align*}
here $P = \dot{L} \, (\dot{L}^T \Sigma^{-1} \dot{L})^{-1} \, \dot{L}^T \Sigma^{-1}$, with $\dot{L}$ and $\Sigma$ evaluated at $\theta_0$.

It remains to identify the distribution of the limit random variable. The random vector $Z$ is equal in distribution to $\Sigma^{1/2} Y$, where $Y \sim \Normal_q(0, I_q)$ and where $\Sigma^{1/2}$ is a symmetric square root of $\Sigma$. Straightforward calculation yields
\begin{equation*}
  Z^T (I_q - P)^T \, \Sigma^{-1} \, (I_q - P) \, Z \eqd Y^T (I_q - B) Y
\end{equation*}
where
$
  B = \Sigma^{-1/2} \dot{L} \, (\dot{L}^T \Sigma^{-1} \dot{L})^{-1} \, \dot{L}^T \Sigma^{-1/2}.
$
It is easily checked that $B$ is a projection matrix ($B = B^T = B^2$). Moreover, $B$ has rank $p$. It follows that $I_q - B$ is a projection matrix too and that it has rank $q-p$. The  distribution of the limit random variable now follows by standard properties of quadratic forms of normal random vectors. 
\end{proof}

\begin{proof}[Proof of Corollary~\ref{cor:GoF:eigen}]
Let $Z \sim \mathcal{N}_q( 0, \Sigma( \theta_0 ) )$, which by \eqref{eq:nonparcond} is the limit in distribution of $\sqrt{k} \, D_{n,k}( \theta_0 )$. By \eqref{eq:Dnk:GoF} and the continuous mapping theorem, we have, as $n \to \infty$,
\begin{equation}
\label{eq:Tnk}
  k \, D_{n,k}( \htheta )^T \, A( \htheta ) \, D_{n,k}( \htheta )
  \dto
  Z^T \, (I_q - P(\theta_0))^T \, A(\theta_0) \, (I_q - P(\theta_0)) \, Z.
\end{equation}
We can represent $(I_q - P) Z$ as $V D^{1/2} Y$, with $Y \sim \Normal_q( 0, I_q )$. The limiting random variable in \eqref{eq:Tnk} is then given by
\begin{equation*}
  Y^T D^{1/2} V^T \, V_s D_s^{-1} V_s^T V D^{1/2} Y.
\end{equation*}
Since $V$ is an orthogonal matrix, this expression simplifies to $\sum_{j=1}^s Y_j^2$, which has the stated $\chi_s^2$ distribution.
\end{proof}

\begin{proof}[Proof of Remark~\ref{rem:spq}]
Inspection of the proofs of Corollaries~\ref{cor:GoF:Mopt} and~\ref{cor:GoF:eigen} shows that the difference between the two test statistics converges in distribution to the random variable $Z^T \, R(\theta_0) \, Z$, where $Z$ is a certain $q$-variate normal random vector and where
\[
  R(\theta_0) 
  = 
  \bigl( I_q - P(\theta_0) \bigr)^T \, (\Sigma(\theta_0)^{-1} - A(\theta_0) \bigr) \, \bigl( I_q - P(\theta_0) \bigr).
\]
The matrix $R(\theta_0)$ can be shown to be equal to zero, proving the claim of the remark. To see why $R(\theta_0)$ is zero, note first that, suppressing $\theta_0$ and writing $Q = I_q - P$, we have $Q^2 = Q$ and $\Sigma Q^T = Q \Sigma = Q \Sigma Q^T$. Recall the eigenvalue equation $Q \Sigma Q^T v_j = \nu_j v_j$ for $j = 1, \ldots, q$. Note that $\nu_j > 0$ if $j \le s$ and $\nu_j = 0$ if $j \ge s+1$. The eigenvalue equation implies that $Qv_j = v_j$ for $j \le s$ while $Q \Sigma v_j = 0$ for $j \ge s+1$. Since the vectors $v_1, \ldots, v_q$ are orthogonal, we find that the vectors $v_1, \ldots, v_s, \Sigma v_{s+1}, \ldots, \Sigma v_q$ are linearly independent. It then suffices to show that $R v_j = 0$ for all $j \le s$ and $R \Sigma v_j = 0$ for all $j \ge s+1$. The first property follows from the fact that $\Sigma^{-1} v_j = \nu_j^{-1} Q^T v_j$ and $A v_j = \nu_j^{-1} v_j$ for $j \le s$ (use the eigenvalue equation again), while the second property follows from $Q \Sigma v_j = 0$ for $j \ge s+1$.
\end{proof}

\bibliographystyle{chicago}
\bibliography{lib-CUETD}

\begin{thebibliography}{}

\bibitem[\protect\citeauthoryear{Abadir and Magnus}{Abadir and
  Magnus}{2005}]{abadir2005}
Abadir, K.~M. and J.~R. Magnus (2005).
\newblock {\em Matrix Algebra}, Volume~1.
\newblock Cambridge University Press.

\bibitem[\protect\citeauthoryear{Beirlant, Escobar-Bach, Goegebeur, and
  Guillou}{Beirlant et~al.}{2015}]{beirlant2015}
Beirlant, J., M.~Escobar-Bach, Y.~Goegebeur, and A.~Guillou (2015).
\newblock Bias-corrected estimation of the stable tail dependence function.
\newblock Available at \url{https://hal.archives-ouvertes.fr/hal-01115538/}.

\bibitem[\protect\citeauthoryear{Beirlant, Goegebeur, Segers, and
  Teugels}{Beirlant et~al.}{2004}]{beirlant2004}
Beirlant, J., Y.~Goegebeur, J.~Segers, and J.~Teugels (2004).
\newblock {\em Statistics of Extremes: Theory and Applications}.
\newblock Wiley.

\bibitem[\protect\citeauthoryear{B{\"u}cher and Segers}{B{\"u}cher and
  Segers}{2014}]{bucher2014}
B{\"u}cher, A. and J.~Segers (2014).
\newblock Extreme value copula estimation based on block maxima of a
  multivariate stationary time series.
\newblock {\em Extremes\/}~{\em 17\/}(3), 495--528.

\bibitem[\protect\citeauthoryear{Can, Einmahl, Khmaladze, Laeven, et~al.}{Can
  et~al.}{2015}]{can2015}
Can, S.~U., J.~H.~J. Einmahl, E.~V. Khmaladze, R.~J.~A. Laeven, et~al. (2015).
\newblock Asymptotically distribution-free goodness-of-fit testing for tail
  copulas.
\newblock {\em The Annals of Statistics\/}~{\em 43\/}(2), 878--902.

\bibitem[\protect\citeauthoryear{Coles and Tawn}{Coles and
  Tawn}{1991}]{coles1991}
Coles, S.~G. and J.~A. Tawn (1991).
\newblock Modelling extreme multivariate events.
\newblock {\em Journal of the Royal Statistical Society: Series B (Statistical
  Methodology)\/}~{\em 53\/}(2), 377--392.

\bibitem[\protect\citeauthoryear{Davison, Padoan, and Ribatet}{Davison
  et~al.}{2012}]{davison2012}
Davison, A.~C., S.~A. Padoan, and M.~Ribatet (2012).
\newblock Statistical modeling of spatial extremes.
\newblock {\em Statistical Science\/}~{\em 27\/}(2), 161--186.

\bibitem[\protect\citeauthoryear{de~Haan}{de~Haan}{1984}]{dehaan1984}
de~Haan, L. (1984).
\newblock A spectral representation for max-stable processes.
\newblock {\em The Annals of Probability\/}~{\em 12\/}(4), 1194--1204.

\bibitem[\protect\citeauthoryear{de~Haan and Ferreira}{de~Haan and
  Ferreira}{2006}]{dehaan2006}
de~Haan, L. and A.~Ferreira (2006).
\newblock {\em Extreme Value Theory: an Introduction}.
\newblock Springer-Verlag Inc.

\bibitem[\protect\citeauthoryear{de~Haan and Pereira}{de~Haan and
  Pereira}{2006}]{dehaanpereira2006}
de~Haan, L. and T.~T. Pereira (2006).
\newblock Spatial extremes: Models for the stationary case.
\newblock {\em The Annals of Statistics\/}~{\em 34\/}(1), 146--168.

\bibitem[\protect\citeauthoryear{Drees and Huang}{Drees and
  Huang}{1998}]{dreeshuang1998}
Drees, H. and X.~Huang (1998).
\newblock Best attainable rates of convergence for estimators of the stable
  tail dependence function.
\newblock {\em Journal of Multivariate Analysis\/}~{\em 64\/}(1), 25--47.

\bibitem[\protect\citeauthoryear{Einmahl, Kiriliouk, Krajina, and
  Segers}{Einmahl et~al.}{2016}]{einmahl2016}
Einmahl, J. H.~J., A.~Kiriliouk, A.~Krajina, and J.~Segers (2016).
\newblock An {M}-estimator of spatial tail dependence.
\newblock {\em Journal of the Royal Statistical Society: Series B (Statistical
  Methodology)\/}~{\em 78\/}(1), 275--298.

\bibitem[\protect\citeauthoryear{Einmahl, Krajina, and Segers}{Einmahl
  et~al.}{2012}]{einmahl2012}
Einmahl, J. H.~J., A.~Krajina, and J.~Segers (2012).
\newblock {An {M}-estimator for tail dependence in arbitrary dimensions}.
\newblock {\em The Annals of Statistics\/}~{\em 40\/}(3), 1764--1793.

\bibitem[\protect\citeauthoryear{Foug{\`e}res, de~Haan, and
  Mercadier}{Foug{\`e}res et~al.}{2015}]{fougeres2015nr2}
Foug{\`e}res, A.-L., L.~de~Haan, and C.~Mercadier (2015).
\newblock Bias correction in multivariate extremes.
\newblock {\em The Annals of Statistics\/}~{\em 43\/}(2), 903--934.

\bibitem[\protect\citeauthoryear{Gissibl and Kl{\"u}ppelberg}{Gissibl and
  Kl{\"u}ppelberg}{2015}]{gissibl2015}
Gissibl, N. and C.~Kl{\"u}ppelberg (2015).
\newblock Max-linear models on directed acyclic graphs.
\newblock Available at \url{http://arxiv.org/abs/1512.07522}.

\bibitem[\protect\citeauthoryear{Gumbel}{Gumbel}{1960}]{gumbel1960}
Gumbel, E.~J. (1960).
\newblock Bivariate exponential distributions.
\newblock {\em Journal of the American Statistical Association\/}~{\em
  55\/}(292), 698--707.

\bibitem[\protect\citeauthoryear{Hansen, Heaton, and Yaron}{Hansen
  et~al.}{1996}]{hansen1996}
Hansen, L.~P., J.~Heaton, and A.~Yaron (1996).
\newblock Finite-sample properties of some alternative {GMM} estimators.
\newblock {\em Journal of Business \& Economic Statistics\/}~{\em 14\/}(3),
  262--280.

\bibitem[\protect\citeauthoryear{Hofert, Kojadinovic, Maechler, and Yan}{Hofert
  et~al.}{2015}]{hofert2015}
Hofert, M., I.~Kojadinovic, M.~Maechler, and J.~Yan (2015).
\newblock {\em copula: multivariate dependence with copulas}.
\newblock R package version 0.999-13.

\bibitem[\protect\citeauthoryear{Huang}{Huang}{1992}]{huang1992}
Huang, X. (1992).
\newblock {\em Statistics of bivariate extreme values}.
\newblock Ph.\ D. thesis, Tinbergen Institute Research Series.

\bibitem[\protect\citeauthoryear{Huser and Davison}{Huser and
  Davison}{2013}]{huser2013}
Huser, R. and A.~Davison (2013).
\newblock Composite likelihood estimation for the {B}rown-{R}esnick process.
\newblock {\em Biometrika\/}~{\em 100\/}(2), 511--518.

\bibitem[\protect\citeauthoryear{Huser and Davison}{Huser and
  Davison}{2014}]{huser2014}
Huser, R. and A.~Davison (2014).
\newblock Space--time modelling of extreme events.
\newblock {\em Journal of the Royal Statistical Society: Series B (Statistical
  Methodology)\/}~{\em 76\/}(2), 439--461.

\bibitem[\protect\citeauthoryear{Huser, Davison, and Genton}{Huser
  et~al.}{2015}]{huser2015}
Huser, R., A.~C. Davison, and M.~G. Genton (2015).
\newblock Likelihood estimators for multivariate extremes.
\newblock {\em Extremes\/}, 1--25.

\bibitem[\protect\citeauthoryear{Jiang}{Jiang}{2010}]{jiang2010}
Jiang, J. (2010).
\newblock {\em Large sample techniques for statistics}.
\newblock Springer.

\bibitem[\protect\citeauthoryear{Kabluchko, Schlather, and de~Haan}{Kabluchko
  et~al.}{2009}]{kabluchko2009}
Kabluchko, Z., M.~Schlather, and L.~de~Haan (2009).
\newblock Stationary max-stable fields associated to negative definite
  functions.
\newblock {\em Annals of Probability\/}~{\em 37\/}(5), 2042--2065.

\bibitem[\protect\citeauthoryear{Kiriliouk and Segers}{Kiriliouk and
  Segers}{2014}]{kiriliouk+s:2014}
Kiriliouk, A. and J.~Segers (2014).
\newblock {\em spatialTailDep: Estimation of spatial tail dependence models}.
\newblock R package version 1.0.2.

\bibitem[\protect\citeauthoryear{Nolan, Foug{\`e}res, and Mercadier}{Nolan
  et~al.}{2015}]{fougeres2015}
Nolan, J., A.-L. Foug{\`e}res, and C.~Mercadier (2015).
\newblock Estimation for multivariate extreme value distributions using max
  projections.
\newblock Presentation available at
  \url{http://sites.lsa.umich.edu/eva2015/program/}.

\bibitem[\protect\citeauthoryear{Oesting, Schlather, and Friedrichs}{Oesting
  et~al.}{2015}]{oesting2015}
Oesting, M., M.~Schlather, and P.~Friedrichs (2015).
\newblock Statistical post-processing of forecasts for extremes using bivariate
  {B}rown-{R}esnick processes with an application to wind gusts.
\newblock Available at \url{http://arxiv.org/abs/1312.4584}.

\bibitem[\protect\citeauthoryear{{R Core Team}}{{R Core Team}}{2015}]{R}
{R Core Team} (2015).
\newblock {\em R: A Language and Environment for Statistical Computing}.
\newblock Vienna, Austria: R Foundation for Statistical Computing.

\bibitem[\protect\citeauthoryear{Ressel}{Ressel}{2013}]{ressel:2013}
Ressel, P. (2013).
\newblock Homogeneous distributions--and a spectral representation of classical
  mean values and stable tail dependence functions.
\newblock {\em Journal of Multivariate Analysis\/}~{\em 117}, 246--256.

\bibitem[\protect\citeauthoryear{Ribatet}{Ribatet}{2015}]{ribatet:2015}
Ribatet, M. (2015).
\newblock {\em SpatialExtremes: Modelling Spatial Extremes}.
\newblock R package version 2.0-2.

\bibitem[\protect\citeauthoryear{Schlather and Tawn}{Schlather and
  Tawn}{2003}]{schlather2003}
Schlather, M. and J.~Tawn (2003).
\newblock A dependence measure for multivariate and spatial extreme values:
  Properties and inference.
\newblock {\em Biometrika\/}~{\em 90\/}(1), 139--156.

\bibitem[\protect\citeauthoryear{Smith}{Smith}{1990}]{smith1990}
Smith, R.~L. (1990).
\newblock Max-stable processes and spatial extremes.
\newblock Unpublished manuscript.

\bibitem[\protect\citeauthoryear{Wadsworth and Tawn}{Wadsworth and
  Tawn}{2014}]{wadsworth2014}
Wadsworth, J.~L. and J.~A. Tawn (2014).
\newblock Efficient inference for spatial extreme-value processes associated to
  log-gaussian random functions.
\newblock {\em Biometrika\/}~{\em 101\/}(1), 1--15.

\bibitem[\protect\citeauthoryear{Wang and Stoev}{Wang and
  Stoev}{2011}]{wang+s:2011}
Wang, Y. and S.~A. Stoev (2011).
\newblock Conditional sampling for spectrally discrete max-stable random
  fields.
\newblock {\em Advances in Applied Probability\/}~{\em 43\/}(2), 461--483.

\end{thebibliography}

\end{document}